\documentclass[%
 reprint,
 superscriptaddress,
 amsmath,amssymb,
 aps,
 floatfix,
]{revtex4-2}

\usepackage{dcolumn}
\usepackage{bm}
\usepackage{hyperref}
\usepackage[dvipsnames, table]{xcolor}
\usepackage{mathtools}
\usepackage[caption=false]{subfig}
\usepackage{amsmath,amsthm,amssymb}
\usepackage{graphicx}
\usepackage{array}
\usepackage{braket}
\usepackage{bbm}
\usepackage{tcolorbox}
\usepackage{tikz}

\newtheorem{proposition}{Proposition}[section]
\theoremstyle{definition}

\usetikzlibrary{fit}
\tikzset{%
  highlight/.style={rectangle,rounded corners,fill=blue!30,draw,
    fill opacity=0.30,thin,inner sep=-2.0pt}
}

\tikzset{%
  highlightt/.style={rectangle,rounded corners,fill=blue!30,draw,
    fill opacity=0.30,thin,inner sep=2.0pt}
}

\newcommand{\tikzmark}[2]{\tikz[overlay,remember picture,
  baseline=(#1.base)] \node (#1) {#2};}
\newcommand{\Highlight}[2][submatrix]{%
    \tikz[overlay,remember picture]{
    \node[highlight,fit=(left#2.north west) (right#2.south east)] (#1) {};}
}

\newcommand{\Highlightt}[2][submatrix]{%
    \tikz[overlay,remember picture]{
    \node[highlightt,fit=(left#2.north west) (right#2.south east)] (#1) {};}
}

\newcommand{\gl}{\bm{\gamma}}

\newcommand{\nl}{\bm{n}}
\newcommand{\zl}{\bm{z}}
\newcommand{\ml}{\bm{m}}

\DeclareMathOperator{\tr}{tr}
\DeclareMathOperator{\lhaf}{lhaf}

\DeclareMathOperator{\filldiag}{filldiag}
\DeclareMathOperator{\diag}{diag}


\begin{document}
    
    \title{Simulating lossy and partially distinguishable
quantum optical circuits: theory, algorithms and applications to experiment validation and state preparation}
    \author{Jacob F.F. Bulmer}
    \thanks{These authors contributed equally to this work.}
	\affiliation{Quantum Engineering Technology Labs, H. H. Wills Physics Laboratory, Department of Electrical and Electronic Engineering, University of Bristol, Bristol, BS8 1FD, UK}
 
    \author{Javier Mart\'inez-Cifuentes}
    \thanks{These authors contributed equally to this work.}
	\affiliation{D\'epartement de g\'enie physique, \'Ecole Polytechnique de Montr\'eal, Montr\'eal, QC, H3T 1J4, Canada}

    \author{Bryn A. Bell}
	\affiliation{Department of Physics, Imperial College London, Prince Consort Rd, London SW7 2AZ, UK}
 
    \author{Nicol\'as Quesada}
        \email{nicolas.quesada@polymtl.ca}

	\affiliation{D\'epartement de g\'enie physique, \'Ecole Polytechnique de Montr\'eal, Montr\'eal, QC, H3T 1J4, Canada}

    \begin{abstract}
        To understand quantum optics experiments, we must perform calculations that consider the principal sources of noise, such as losses, spectral impurity and partial distinguishability. In both discrete and continuous variable systems, these can be modeled as mixed Gaussian states over multiple modes. The modes are not all resolved by photon-number measurements and so require calculations on \textit{coarse-grained} photon-number distributions. Existing methods can lead to a combinatorial explosion in the time complexity, making this task unfeasible for even moderate sized experiments of interest. In this work, we prove that the computation of this type of distribution can be done in exponential time, providing a combinatorial speedup. We develop numerical techniques that allow us to determine coarse-grained photon number distributions of Gaussian states, as well as density matrix elements of heralded non-Gaussian states prepared in the presence of spectral impurity and partial distinguishability. These results offer significant speed-up and accuracy improvements to validation tests of both Fock and Gaussian boson samplers that rely on binned probability distributions. Moreover, they pave the way to a more efficient simulation of realistic photonic circuits, unlocking the ability to perform exact calculations at scales which were previously out of reach. In addition to this, our results, including loop Hafnian master theorems, may be of interest to the fields of combinatorics and graph theory.
	\end{abstract}
    
    \date{\today}
    
    \maketitle
    
    \section{\label{sec:introduction}Introduction}
    
        \begin{figure*}[t!]
            \centering
            \subfloat[]{\label{fig:coarse_grained_probs}\includegraphics[width=0.49\textwidth]{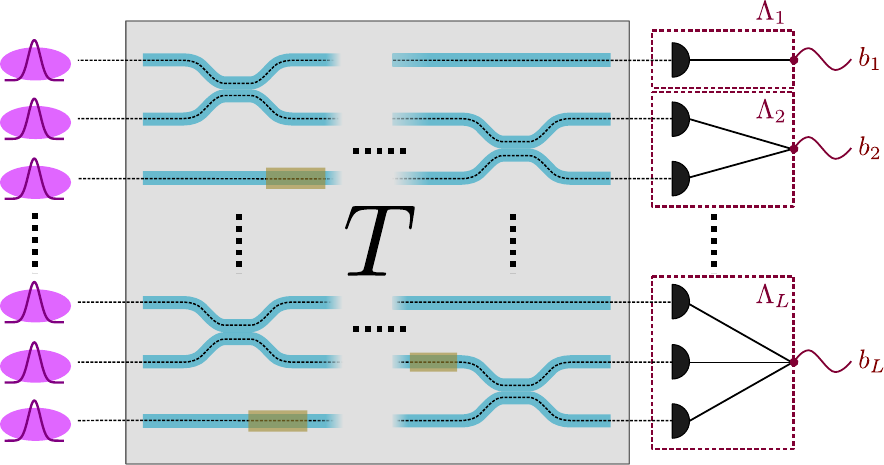}}\hfill
            \subfloat[]{\label{fig:internal_modes_heralding}\includegraphics[width=0.49\textwidth]{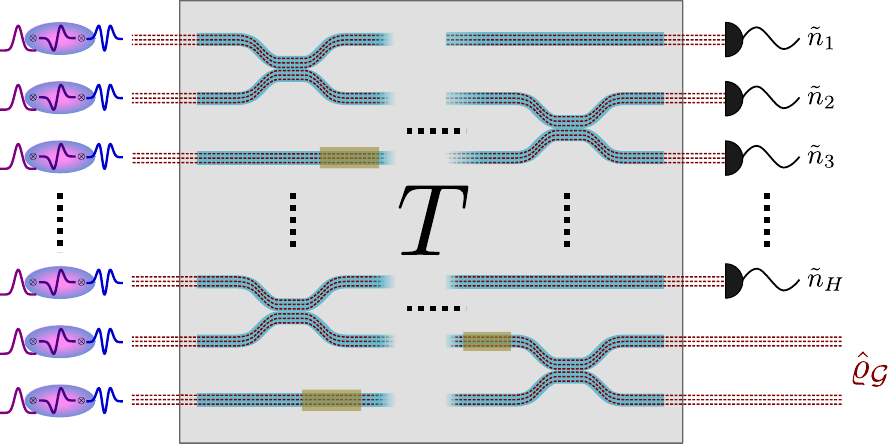}}\\
            \subfloat[]{\label{fig:scattershot_heralding}\includegraphics[width=0.8\textwidth]{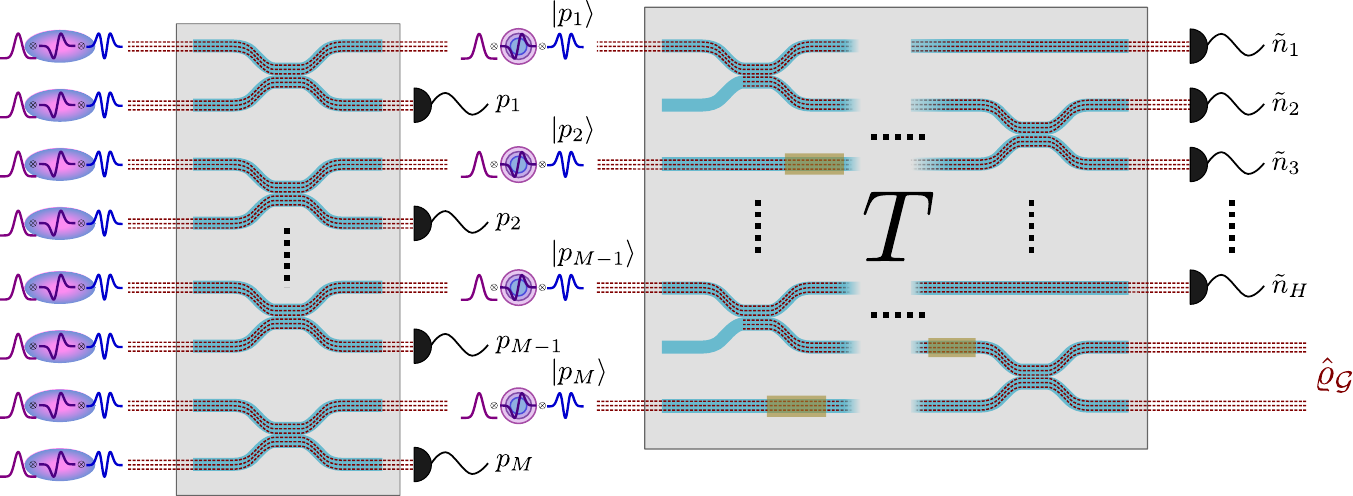}}\hfill
            \caption{(a) Illustration of the problem of computing general coarse-grained probabilities in the context of Gaussian Boson Sampling (GBS). A set of input Gaussian states (purple ellipses) are sent into a linear interferometer mathematically described by a transfer matrix $\bm{T}$. The output light is measured using photon-number-resolving detectors (black semicircles). The detectors are grouped according to the sets $\{\Lambda_1,\Lambda_2,\dots,\Lambda_L\}$, leading to a coarse-grained detection pattern $(b_1,b_2,\dots,b_L)$. We are interested in computing the probability of this photon detection event. (b) Illustration of the problem of state generation (or heralding) in the presence of internal modes. A set of spectrally mixed Gaussian states (multicolor ellipses) are sent into a linear interferometer represented by a transfer matrix $\bm{T}$. The output light of the interferometer is \textit{partially measured} in the first $H$ output ports of the interferometer using photon-number-resolving detectors. Upon the measurement of the heralding detection pattern $\tilde{\bm{n}}=(\tilde{n}_1,\dots,\tilde{n}_H)$, the state $\hat{\varrho}_{\mathcal{G}}$ is generated in the remaining output ports of the interferometer. We are interested in studying how the presence of internal modes (in this case, also spectral modes represented by dashed red lines) affects the computation of the heralded state $\hat{\varrho}_{\mathcal{G}}$. By measuring all the output ports of the interferometer, one can determine photon-number probabilities in the presence of internal modes. (c) Illustration of the problem of state generation (or heralding) in the presence of internal modes, for input photon-number (Fock) states (multicolor circles). The input Fock states are generated via the partial measurement of a set of two-mode squeezed states (which are obtained by interfering two single-mode squeezed states into a balanced beam-splitter). This approach requires scaling the probability of successfully generating $\hat{\varrho}_{\mathcal{G}}$ by $\Pr(\bm{p})=\Pr(p_1)\dots\Pr(p_M)$, i.e., the probability of heralding the desired input multimode Fock state.}
            \label{fig:pnr_statistics_blocked}
        \end{figure*}
        
        Gaussian states are of fundamental and practical interest in both quantum information and in quantum optics~\cite{serafini2017quantum,ferraro2005gaussian,weedbrook2012gaussian,asavanant2022optical}. These states have the property of having quasiprobability distributions and characteristic functions that are Gaussian functions~\cite{barnett2002methods,gerry2005introductory}. This property implies that they can be succinctly described in terms of a covariance matrix and a vector of means. Despite their simple description, problems related to their photon-number statistics possess a rich mathematical structure related to perfect matchings and Hamiltonian walks of undirected complex-weighted graphs~\cite{dodonov1994multidimensional,krenn2017quantum,gu2019quantum,quesada2019franck,bulmer2022boundary,cardin2022photon}. On the experimental side, measuring a subset of the modes of an entangled Gaussian state can be used to herald the generation of interesting and resourceful non-Gaussian states~\cite{roeland2021mode,thomas2021general,quesada2015very,tiedau2019scalability,sonoyama2024generation,konno2024logical,endo2024optically,asavanant2024multipartite,kawasaki2024high,wiesner2024influence,ra2020non}. Moreover, when all the modes are measured, a sampling problem that is computationally intractable by classical computers is obtained, the so-called Gaussian Boson Sampling (GBS) problem~\cite{kruse2019detailed,hamilton2017gaussian,grier2022complexity,deshpande2022quantum,go2024exploring}. 
	
    	In the analysis of heralding protocols and Gaussian Boson Samplers, the leading cause of imperfections is optical losses. Conveniently, while loss is a decoherence process that typically turns pure states into mixed ones, it has the property of being a Gaussian channel, thus it will map a pure Gaussian state into a (typically) mixed one. This implies that this source of decoherence can be easily incorporated into simulation frameworks. 
        
        The next leading source of imperfections is related to the spectrum of light sources. On one hand, parametric sources of squeezed light, the fundamental ingredient to build Gaussian states~\cite{braunstein2005squeezing}, often produce light over multiple spectral modes that photon-number-resolving detectors cannot tell apart~\cite{thekkadath2024gain, houde2023waveguided}. This implies that, upon heralding, there is classical ignorance about which photon from which spectral mode caused a given detector to count a certain number of photons~\cite{de2014coincidence,villalobos2023quantum}. On the other hand, on-demand single photon sources, such as quantum dots, often produce spectrally impure single photons as their emission process is affected by the solid-state environment in which they are created~\cite{nawrath2019coherence,nawrath2023bright,yu2023telecom,phillips2024purcell,dusanowski2022optical}. To understand the effects of spectral distinguishability, one can start to enumerate all the possible events that could have led to a particular detection outcome. This approach, while viable for small number of detected photons in a small number of modes, quickly becomes unfeasible for even moderate system sizes~\cite{tiedau2019scalability,quesada2015very}.
        
    	The problem of calculating coarse-grained probabilities also appears naturally when validating both GBS and Fock Boson Sampling~\cite{aaronson2011computational} (which can be interpreted as a subproblem of GBS when combined with post-selection over some subset of heralding modes~\cite{lund2014boson}). In this context, the fine-grained probability space is so large that it is impossible in any reasonable time to collect sufficiently many samples to obtain a frequentist approximation of detection probabilities. However, one may define coarse-grained detection events, obtaining much better approximations for the probabilities of binned measurement outcomes~\cite{seron2022efficient,bressanini2023gaussian}. A good example of these type of coarse-grained distributions is the total photon-number distribution (the probability of detecting a given number of photons across all available detectors), which has been previously used to partially validate the correct functioning of GBS machines~\cite{madsen2022quantum}. It has been shown that this distribution can also be used to obtain a partial characterization of the covariance matrix of a multimode Gaussian state~\cite{avagyan2023multi,avagyan2024multi}. To avoid dealing with the combinatorial explosion of time complexity related to the computation of coarse-grained probabilities, one may perform numerical estimations using Monte Carlo phase space methods. These methods can be efficient for small numbers of coarse-grained bins, but a careful statistical analysis of the obtained results is required~\cite{dellios2022simulating,dellios2022validation,drummond2022simulating,martinez2023classical,dellios2024validation}.
    	Finally, quantities analogous to coarse-grained distributions can be used to evaluate cross-entropy measures between photon-number distributions of different Gaussian states, providing an alternative for the validation of GBS experiments~\cite{martinez2024linear}.
        
    	In this work, we show that one can very generally  avoid a combinatorial explosion in the time complexity of calculating photon-number coarse-grained probabilities (as well as photon-number moments and cumulants) of Gaussian states entangled on passive optical elements. We prove that the computation time of these binned probabilities scales exponentially, not combinatorially, with the number of bins in which photons are detected.
        This reduction of complexity is also manifest in the computation of heralded, generally non-Gaussian states obtained from the partial photon-number measurements of Gaussian states with internal modes. We develop new numerical techniques that allow us to simulate the realistic preparation of these non-Gaussian states, extending the sources of noise that could previously be taken into account~\cite{quesada2019simulating}. Our results can also be applied to systems in which spectrally mixed Fock states are entangled using passive optical linear networks.

        This paper is structured as follows. In Sec.~\ref{sec:gaussian_states} we give a quick review of Gaussian states and introduce the mathematical expressions, in terms of the loop Hafnian, that describe their photon-number statistics. In Sec.~\ref{sec:master_theorem} we prove a \textit{loop Hafnian master theorem}, which will be of central importance to obtain the results of later sections. Sec.~\ref{sec:total_ph_prob} is devoted to the computation of the total photon-number distribution of Gaussian states, while in Sec.~\ref{sec:trace_formulas} we give a proof of the \textit{trace formula} for the loop Hafnian. This proof will allow us to introduce the notion of a \textit{finite-difference sieve}, which is the second most important ingredient needed to obtain the results of the following sections. In Sec.~\ref{sec:blocked_hafnian} we introduced the notion of \textit{blocked loop Hafnian}, and combine the loop Hafnian master theorem and the finite-difference sieve to obtain a formula for its computation. This formula will be used to obtain the most general expression for coarse-grained probability distributions of Gaussian states, as well as their corresponding coarse-grained photon-number moments and cumulants. Sec.~\ref{sec:off_diagonal} is devoted to applying the blocked loop Hafnian to compute density matrix elements of generally non-Gaussian states heralded from the partial measurement of a Gaussian state. To illustrate the various results of the manuscript, we present some numerical experiments in Sec.~\ref{sec:numerical}. Finally, we conclude in Sec.~\ref{sec:conclusion}.

    \section{\label{sec:gaussian_states}Review of Gaussian states}
    
        The quantum state of a system of $M$ bosonic modes is described by a density matrix $\hat{\varrho}$. Throughout this manuscript we will think of this system as a photonic system. However, all of our discussion will also be valid for any other type of bosonic system. Each mode has annihilation and creation operators $\hat{a}_i$ and $\hat{a}_i^\dagger$ that satisfy the usual bosonic canonical commutation relations
        \begin{align}
            [\hat{a}_i,\hat{a}_j] = [\hat{a}_i^\dagger, \hat{a}_j^\dagger] = 0 \text{ and } [\hat{a}_i,\hat{a}_j^\dagger] = \delta_{i,j},
            \label{eq:canonical_comm_rels}
        \end{align}
        where the commutator of two operators is defined as $[\hat{A},\hat{B}]:=\hat{A} \hat{B} - \hat{B} \hat{A}$.
        By defining a vector of annihilation and creation operators, $\bm{\hat{\zeta}} = (\hat{a}_1,\ldots,\hat{a}_M,\hat{a}_1^\dagger,\ldots,\hat{a}_M^\dagger)^{\mathrm{T}}$, these commutation relations are more compactly expressed as
        \begin{align}
            [\hat{\zeta}_i,\hat{\zeta}_j^\dagger] = \mathcal{Z}_{i,j}, \text{ with } \bm{\mathcal{Z}}=\begin{pmatrix} \mathbb{I} & 0 \\0 & -\mathbb{I} \end{pmatrix}.
            \label{eq:canonical_comm_rels_compact}
        \end{align}
        Here, $\mathbb{I}$ stands for the $M\times M$ identity matrix.
        
        Continuous variable states can also be described in terms of their $s$-ordered characteristic function
        \begin{align}
            \chi_s(\bm{w}) = \exp\left(\tfrac{s}{4}||\bm{w}||^2 \right) \mathrm{Tr}\left[ \exp\left(-\bm{w}^\dagger \bm{\mathcal{Z}} \bm{\hat{\zeta}} \right) \hat{\varrho} \right],
            \label{eq:characteristic_function_def}
        \end{align}
        where $s \in \mathbb{R}$ and $\bm{w} = (\alpha_1,\ldots,\alpha_M,\alpha_1^*,\ldots,\alpha_M^*)^{\mathrm{T}}$. In this relation, $||\bm{w}||$ is the Euclidean norm of vector $\bm{w}$. Gaussian states are those for which
        \begin{align}
            \chi_s(\bm{w}) = \exp\left(-\tfrac{1}{2}  \bm{w}^\dagger\bm{\mathcal{Z}} \bm{\Sigma}^{(s)}  \bm{\mathcal{Z}} \bm{w} + \bm{\bar{z}}^\dagger \bm{\mathcal{Z}} \bm{w} \right),
            \label{eq:gaussian_charachteristic_function}
        \end{align}
        where the vector of means, $\bm{\bar{z}}$, and covariance matrix, $\bm{\Sigma}^{(s)}$, have entries
        \begin{align}
            \bar{z}_i &= \text{Tr}(\hat{\zeta}_i\,\hat{\varrho})
            \label{eq:vector_of_means}
            \\
            \Sigma^{(s)}_{i,j} &= \tfrac12 \left[ \text{Tr}\left( \left\{ \hat{\zeta}_i \hat{\zeta}_j^\dagger + \hat{\zeta}_j^\dagger \hat{\zeta}_i \right\}\hat{\varrho}\right) -s \delta_{i,j}\right] - \bar{z}_i \bar{z}_j^*.
            \label{eq:covariance_matrix}
        \end{align}
        
        Yet another equivalent description of a Gaussian state can be obtained in terms of its adjacency matrix, $\bm{A}$, and its loop vector, $\bm{\gamma}$, defined as
        \begin{align}
            \bm{A} &= \bm{X}\left(\mathbb{I} - \bm{\Sigma}^{-1}\right),
            \label{eq:defA}
            \\
            \bm{\gamma} &=  \bm{X} \bm{\Sigma} ^{-1} \bm{\bar{z}},
            \label{eq:defgamma}
        \end{align}
        with
        \begin{align}
            \bm{X} =
            \begin{pmatrix}
                0 & \mathbb{I}  \\
                \mathbb{I} & 0
            \end{pmatrix},
            \text{ and }
            \bm{\Sigma} := \bm{\Sigma}^{(s=-1)}.
            \label{eq:x_matrix_husimi_cov}
        \end{align}
        
        To represent a valid Gaussian state, the covariance matrix must have the form
        \begin{align}
            \bm{\Sigma}^{(s)} = \frac{1-s}{2}\mathbb{I} + \begin{pmatrix} 
                \bm{\mathcal{N}}^T & \bm{\mathcal{M}} \\ 
                \bm{\mathcal{M}}^* & \bm{\mathcal{N}} 
            \end{pmatrix}
        \end{align}
        with $\bm{\mathcal{M}} = \bm{\mathcal{M}}^T$ and $\bm{\mathcal{N}}^\dagger = \bm{\mathcal{N}}$ and, moreover, needs to satisfy the uncertainty relation $\bm{\Sigma}^{(s)} + \tfrac12 \bm{\mathcal{Z}} + \tfrac{s}{2} \mathbb{I} \succcurlyeq 0 $, which in turn gives a positive semidefinite constraint on the inverse of $\bm{A}$.

        Suppose now that we are interested in computing the diagonal matrix elements of the state $\hat{\varrho}$ in the Fock basis. Each element of this basis can be written as $|\bm{n}\rangle=|n_1,\ldots,n_M\rangle$, with $n_k\in \{0,1,2,\ldots\}\;\forall k$. We can think of the string of non-negative integers $\bm{n}=(n_1,\ldots, n_M)$ as representing the outcome of a photon-number measurement (with $n_k$ representing the number of detected photons at mode $k$). The matrix element $\langle\bm{n}|\hat{\varrho}|\bm{n}\rangle$ can be interpreted as the probability of obtaining the outcome $\bm{n}$ given the state $\hat{\varrho}$. For a Gaussian state, this probability can be parametrized in terms of the matrix $\bm{A}$ and vector $\gl$ defined in Eqs.~\eqref{eq:defA} and~\eqref{eq:defgamma} as~\cite{kruse2019detailed, quesada2019simulating}
        \begin{align}
        	\langle\bm{n}|\hat{\varrho}|\bm{n}\rangle&=\Pr(\nl| \bm{A}, \gl)\nonumber\\
            &=\Pr(\bm{0}|\bm{A}, \gl) \frac{\lhaf\left[\filldiag\left({\bm{A}}_{\nl\oplus\nl}, {\gl}_{\nl \oplus \nl}\right)\right]}{\prod_{i=1}^M n_i!},
            \label{eq:gbs_probability_distribution}
        \end{align}
        where $\Pr(\bm{0}|\bm{A}, \gl)$ is the vacuum probability, computed as
        \begin{align}
        	\Pr(\bm{0} | \bm{A}, \gl) = \exp\bigl( -\tfrac12 \gl^\mathrm{T}&\left[\mathbb{I} - \bm{X} \bm{A}\right]^{-1} \bm{X} \gl \bigr)\nonumber\\ 
            &\times \sqrt{\det\left(\mathbb{I} - \bm{X}\bm{A} \right)}.
            \label{eq:vacuum_probability}
        \end{align}

        In Eq.~\eqref{eq:gbs_probability_distribution}, $\filldiag(\bm{A}, \gl)$ is an operation where we overwrite the diagonal elements of $\bm{A}$ with the elements of the vector $\gl$. We form the matrix ${\bm{A}}_{\nl\oplus\nl}$ by taking $n_k$ copies of the $k$th and $(k+M)$th rows and columns of $\bm{A}$. Similarly, ${\gl}_{\nl \oplus \nl}$  is constructed by taking $n_k$ copies of the $k$th and $(k+M)$th elements of $\gl$. Finally, for a symmetric matrix $\bm{B}$ of size $2m\times 2m$, the \textit{loop Hafnian}, $\lhaf(\bm{B})$ is defined as~\cite{bjorklund2019faster}
        \begin{align}
            \lhaf(\bm{B})=\sum_{Z\in \mathrm{SPM}(2m)}\prod_{(i,j)\in Z}B_{i,j},
            \label{eq:gen_loop_haf_def}
        \end{align}
        where $\text{SPM}(2m)$ is the set of single-pair matchings on a graph of order $2m$ with loops~\cite{bjorklund2019faster}.

    \section{\label{sec:master_theorem}A loop Hafnian master theorem}
        
        In this section we recall some useful properties of the loop Hafnian and use them to obtain a loop Hafnian master theorem that will be the basis for the rest of the results derived in the manuscript.

        For an arbitrary $2M\times 2M$ symmetric matrix $\bm{A}$, and a $2M$-length vector $\gl$, let us write $\lhaf(\bm{A},\bm{\gamma}):=\lhaf(\filldiag(\bm{A}, \gl))$. From Eq.~\eqref{eq:gen_loop_haf_def}, we have
        \begin{align}
            \lhaf(\bm{A},\bm{\gamma}) &= \sum_{Z \in \text{SPM}(2M)}\prod_{\substack{(i,j) \in Z \\ i \neq j}}A_{i,j} \left(\prod_{(i,i) \in Z} \gamma_i\right).
            \label{eq:lhaf_def}
        \end{align}
        
        The function $\lhaf(\bm{A}, \gl)$ has the following scaling property. If we transform the elements of $\bm{A}$ as $A_{i,j} \to c_i c_j A_{i,j}$ and the elements of $\gl$ as $\gamma_i \to c_i \gamma_i$, we find~\cite{bjorklund2019faster}
        \begin{align}
        	\lhaf&(\diag(\bm{c})\bm{A}\diag(\bm{c}), \diag(\bm{c})\gl)\nonumber\\
            &= \sum_{Z \in \text{SPM}(2M)} \prod_{\substack{(i,j) \in Z \\ i \neq j}} c_i c_j A_{i,j} \left(\prod_{(i,i) \in Z} c_i \gamma_i\right) \nonumber\\
        	& = \left(\prod_{i=1}^{2M} c_i\right) \sum_{Z \in \text{SPM}(2M)} \prod_{\substack{(i,j) \in Z \\ i \neq j}}A_{i,j} \left(\prod_{(i,i) \in Z} \gamma_i\right) \nonumber\\
        	& =  \left(\prod_{i=1}^{2M} c_i\right) \lhaf(\bm{A}, \gl),
            \label{eq:lhaf_scaling}
        \end{align}
        where $\bm{c}=(c_1,\ldots,c_{2M})$, and $\diag(\bm{c})$ is a diagonal matrix whose entries are the components of $\bm{c}$.
        
        The loop Hafnian is also invariant under permutations of rows and columns, so for some permutation matrix $\bm{P}$, we have that
        \begin{equation}
        	\lhaf(\bm{P} \bm{A} \bm{P}^{\mathrm{T}}, \bm{P} \gl) = \lhaf(\bm{A},\gl),
            \label{eq:lhafP}
        \end{equation}
        which can be seen from the observation that the index which labels each row and column is arbitrary, and so we are free to relabel them.
        
        Now, suppose that $\bm{A}$ and $\gl$ correspond to a valid Gaussian state. Noticing that the distribution defined in Eq.~\eqref{eq:gbs_probability_distribution} must be normalized,
        \begin{align}\label{eq:prob1}
        	1 = \sum_{\nl} \Pr(\nl | \bm{A}, \gl),
        \end{align}
        we obtain the relation
        \begin{align}
        	\sum_{\nl} \frac{\lhaf \left({\bm{A}}_{\nl\oplus\nl}, {\gl}_{\nl \oplus \nl} \right)}{\prod_{i=1}^M n_i!} = \frac{\exp\left( \tfrac12 \gl^\mathrm{T} \left[\mathbb{I} - \bm{X} \bm{A} \right]^{-1} \bm{X} \gl \right)}{\sqrt{\det\left(\mathbb{I} - \bm{X}\bm{A} \right)}}
        	\label{eq:sum_lhafs} .
        \end{align}
        
        Define now the matrices $\bm{D}(\zl) = \diag(\zl) \oplus \diag(\zl)$ and $\bm{B}(\zl) = \sqrt{\bm{D}(\zl)}$, where $\zl=(z_1,\ldots,z_M)$ is an $M$-length vector, and substitute $\bm{A} \to \bm{B}(\zl) \bm{A} \bm{B}(\zl)$ and $\gl \to \bm{B}(\zl) \gl$ into Eq.~\eqref{eq:sum_lhafs}. Then, we have
        \begin{align}
        	&\sum_{\nl} \frac{\lhaf \left[\bm{B}(\zl)_{\nl\oplus\nl} {\bm{A}}_{\nl\oplus\nl} \bm{B}(\zl)_{\nl\oplus\nl}, \bm{B}(\zl)_{\nl\oplus\nl}{\gl}_{\nl \oplus \nl} \right]}{\prod_{i=1}^M n_i!} \nonumber\\
            &= \frac{\exp\left(\tfrac12 (\bm{B}(\bm{z}) \gl)^\mathrm{T} \left[\mathbb{I} - \bm{X} \bm{B}(\zl) \bm{A} \bm{B}(\zl) \right]^{-1} \bm{B}(\bm{z}) \bm{X} \gl \right)}{\sqrt{\det\left[\mathbb{I} - \bm{X}\bm{B}(\zl) \bm{A} \bm{B}(\zl)\right]}}\nonumber\\
            & = q(\bm{A}, \gl, \zl).
            \label{eq:lhaf_master1}
        \end{align}
        
        We will do some rearranging on the expression defining $q(\bm{A}, \gl, \zl)$.
        Firstly, we notice that we can use Sylvester's determinant theorem~\cite{sylvester1883xxxix,pozrikidis2014introduction}, which states that $\det(\mathbb{I} + \bm{A}\bm{B}) = \det(\mathbb{I} + \bm{B}\bm{A})$, to write
        \begin{align}
        	\det\left[\mathbb{I} - \bm{X} \bm{B}(\zl) \bm{A} \bm{B}(\zl) \right] &= \det\left[\mathbb{I} -  \bm{B}(\zl) \bm{X} \bm{B}(\zl) \bm{A}  \right] \nonumber\\
            &= \det\left[\mathbb{I} -  \bm{D}(\zl) \bm{X} \bm{A}  \right],
            \label{eq:sylvester_reogarnize}
        \end{align}
        where we have used the fact that $\bm{B}(\zl)$ and $\bm{D}(\zl)$ commute with $\bm{X}$, and $\bm{D}(\zl)=\bm{B}(\zl)\bm{B}(\zl)$.
        We will also rearrange the numerator to be in terms of $\bm{D}(\zl)$ rather than $\bm{B}(\zl)$.
        We start by expanding the exponent using a binomial expansion, $(1 - x)^{-1} = \sum_{j=0}^\infty x^j$, and noticing that
        \begin{align}
        	\left(\bm{X} \bm{B} \bm{A} \bm{B} \right)^{j} & = (\bm{X} \bm{B} \bm{A} \bm{B} )(\bm{X} \bm{B} \bm{A} \bm{B}) \ldots \nonumber\\
        	& = (\bm{X} \bm{B}^{-1} \bm{D} \bm{A}) (\bm{X} \bm{D} \bm{A}) \ldots (\bm{X} \bm{D} \bm{A}) \bm{B} \nonumber\\
        	& = \bm{B}^{-1} \left( \bm{D} \bm{X} \bm{A} \right)^{j} \bm{B}.
            \label{eq:exponent_reorganization}
        \end{align}
        Then, we can collapse the binomial expansion back down and write
        \begin{align}
        	&\frac12 \left[\bm{B}(\zl) \gl\right]^\mathrm{T} \left[\mathbb{I} - \bm{X} \bm{B}(\zl) \bm{A} \bm{B}(\zl) \right]^{-1} \bm{B}(\zl) \bm{X} \gl \nonumber\\
            & = \frac12 \left[\bm{B}(\zl) \gl\right]^\mathrm{T} \bm{B}^{-1}(\zl) \left[\mathbb{I} - \bm{D}(\zl) \bm{X} \bm{A} \right]^{-1} \bm{B}(\zl) \bm{B}(\zl) \bm{X} \gl \nonumber\\
        	& = \frac12 \gl^\mathrm{T} \left[ \mathbb{I} - \bm{D}(\zl) \bm{X} \bm{A} \right]^{-1} \bm{D}(\zl) \bm{X} \gl.
            \label{eq:final_exponent_D}
        \end{align}
        Combining Eqs.~\eqref{eq:sylvester_reogarnize} and~\eqref{eq:final_exponent_D}, we have
        \begin{equation}
        	q(\bm{A}, \gl, \zl) = \frac{\exp\left( \frac12 \gl^\mathrm{T} \left[ \mathbb{I} - \bm{D}(\zl) \bm{X} \bm{A} \right]^{-1} \bm{D}(\zl) \bm{X} \gl \right)}{\sqrt{\det\left[\mathbb{I} -  \bm{D}(\zl) \bm{X} \bm{A}  \right]}}. 
        \label{eq:q_d}
        \end{equation}
        
        With this form of $q(\bm{A}, \gl, \zl)$, and applying the loop Hafnian scaling relation of Eq.~\eqref{eq:lhaf_scaling} to Eq.~\eqref{eq:lhaf_master1}, we obtain
        \begin{align}
        	\sum_{\nl} \lhaf \left({\bm{A}}_{\nl\oplus\nl}, {\gl}_{\nl \oplus \nl} \right)\left(\prod_{i=1}^M \frac{z_i^{n_i}}{n_i!} \right) = q(\bm{A}, \gl, \zl).
        	\label{eq:lhaf_master}
        \end{align}
        
        We have thus derived a loop Hafnian master theorem, which reduces to the Hafnian master theorem of ref.~\cite{kocharovsky2022hafnian} when $\gl = \bm{0}$.

    \section{\label{sec:total_ph_prob}Total photon number probabilities}

        In this section we start our program of evaluating coarse-grained probabilities by evaluating the total photon number distribution. For pure Gaussian states this distribution can be easily calculated by noting that any $M$-mode Gaussian pure state can be obtained by mixing $M$ single-mode pure Gaussian states on a lossless interferometer. Since a lossless interferometer commutes with the total number of particles operator $\sum_{i=1}^M \hat{a}_i^\dagger \hat{a}_i$, the total number of particles distribution remains the same before and after its action. Thus, the total photon number distribution at the output is just the convolution of the single-mode probabilities at the input.

        For a general Gaussian state, suppose that we detect a total of $N$ photons in a subset of modes $Y\subseteq [M]$, while we detect vacuum in the remaining modes, which we group together in the subset $Z\subseteq [M]$. Here, $[M]=\{1,\ldots, M\}$. Note that $Y\cup Z=[M]$ and $Y\cap Z=\emptyset$. The probability associated to this detection event can be computed as
        \begin{align}
        	\Pr\left(N_Y = N, N_Z = 0| \bm{A}, \gl \right) & = \sum_{\nl \in \mathcal{K}_{Y,Z}^{N,0}} \Pr(\nl | \bm{A}, \gl),
            \label{eq:total_photon_number_probability}
        \end{align}
        where $N_Y=\sum_{k\in Y}n_k$, $N_Z$ is similarly defined, and
        \begin{equation}
            \mathcal{K}_{Y,Z}^{N,N'}=\left\{\bm{n}\,\bigg|\,\sum_{k\in Y}n_k=N\text{ and }\sum_{k\in Z}n_k=N'\right\}.
            \label{eq:two_division_subset}
        \end{equation}
        In terms of the loop Hafnian, this probability reads
        \begin{align}
        	\Pr(N_Y = N, N_Z = &0| \bm{A}, \gl ) = \Pr\left(\bm{0}| \bm{A}, \gl \right)\nonumber\\
            &\times\sum_{\nl \in \mathcal{K}_{Y,Z}^{N,0}}\frac{\lhaf \left({\bm{A}}_{\nl\oplus\nl}, {\gl}_{\nl \oplus \nl} \right)}{\prod_{i=1}^M n_i!}.
            \label{eq:total_photon_number_probability_lhaf}
        \end{align}

        We can now use the result in Eq.~\eqref{eq:lhaf_master} to express the probability $\Pr\left(N_Y = N, N_Z = 0| \bm{A}, \gl \right)$ in terms
        of function $q(\bm{A}, \gl, \vec{z})$. Indeed, by defining a vector $\zl'$ such that $z_k'=\eta$ for $k\in Y$, and $z_k'=0$ for $k\in Z$, we can see that 
        \begin{align}
            \!\!\!q(\bm{A}, \gl, \zl')&=\sum_{\nl} \lhaf \left({\bm{A}}_{\nl\oplus\nl}, {\gl}_{\nl \oplus \nl} \right)\left(\prod_{i=1}^M \frac{(z_i')^{n_i}}{n_i!} \right)\nonumber\\
            &=\sum_{N_Y=0}^\infty\sum_{\nl \in \mathcal{K}_{Y,Z}^{N_Y,0}}\!\!\!\frac{\lhaf \left({\bm{A}}_{\nl\oplus\nl}, {\gl}_{\nl \oplus \nl} \right)}{\prod_{i\in Y} n_i!}\,\eta^{N_Y},
            \label{eq:reduction_to_subset}
        \end{align}
        where we took into account that the product $\prod_{i=1}^M(z_i')^{n_i}$ vanishes unless $n_k=0$ for all $k\in Z$, which, in turn, is the only way to obtain $N_Z=0$. We may readily see that 
        \begin{align}
            \frac{1}{N!}\left.\frac{\partial^N q(\bm{A}, \gl, \zl')}{\partial \eta^N}\right|_{\eta=0}=\sum_{\nl \in \mathcal{K}_{Y,Z}^{N,0}}\!\!\!\frac{\lhaf \left({\bm{A}}_{\nl\oplus\nl}, {\gl}_{\nl \oplus \nl} \right)}{\prod_{i\in Y} n_i!},
            \label{eq:q_function_derivative}
        \end{align}
        which implies 
        \begin{align}
        	\Pr\left(N_Y = N, N_Z = 0| \bm{A}, \gl \right) &=\Pr\left(\bm{0}| \bm{A}, \gl \right)\nonumber\\
            &\times \frac{1}{N!}\left.\frac{\partial^N q(\bm{A}, \gl, \zl')}{\partial \eta^N}\right|_{\eta=0}.
            \label{eq:total_photon_number_probability_q_func}
        \end{align}

        We can further notice from Eq.~\eqref{eq:q_d} that
        \begin{equation}
	        q(\bm{A}, \gl, \zl') = q(\bm{A}_Y, \gl_Y,\zl = (\eta,\dots,\eta))
            \label{eq:q_func_reduction}
        \end{equation}
        where $\bm{A}_Y$ and $\gl_Y$ are found by keeping only the $k$th and $(k+M)$th rows and columns, or elements of $\bm{A}$ and $\gl$ for $k \in Y$. In this expression, $\zl = (\eta,\dots,\eta)$ has length $|Y|$ (i.e., the number of elements in $Y$).

        For future purposes, it is convenient to introduce the function $f_N(\bm{A},\gl)$, defined as
        \begin{equation}
        	f_N(\bm{A}, \gl) = \frac{1}{N!} \left. \frac{\partial^N}{\partial \eta^N} q\left(\bm{A}, \gl, \zl=(\eta, \eta, \dots \eta) \right) \right|_{\eta=0}
            \label{eq:f_function},
        \end{equation}
        where it is understood that $\zl$ has half the length of $\gl$. This definition allows us to recast Eq.~\eqref{eq:total_photon_number_probability_q_func} into the form
        \begin{equation}
            \Pr\left(N_Y=N,N_Z=0|\bm{A},\gl\right)=\Pr(\bm{0}|\bm{A},\gl)f_N(\bm{A}_Y,\gl_Y).
            \label{eq:total_photon_number_probability_f_func}
        \end{equation}

        If we consider $q(\bm{A}, \gl, \zl=(\eta, \eta, \dots, \eta))$ as a polynomial in $\eta$, we can see that $f_N(\bm{A}, \gl)$ corresponds to the $N$th polynomial coefficient. Therefore, to compute $f_N(A, \gl)$, we need simply express $q(\bm{A}, \gl, \zl=(\eta, \eta, \dots, \eta))$ a series expansion in $\eta$. 
        
        In order to do this, let us recall the following useful series expansions
        \begin{align}
            &\exp(\bm{A})=\sum_{j=0}^\infty \frac{\bm{A}^j}{j!},\;\;(\mathbb{I}-\bm{A})^{-1}=\sum_{j=0}^\infty\bm{A}^k,\\ 
            &\log\left[\det\left(\mathbb{I}-\bm{A}\right)\right]=-\sum_{j=1}^\infty\frac{\tr(\bm{A}^j)}{j}.
            \label{eq:useful_series_expansions}
        \end{align}
        Then, we can write the numerator and denominator of $q(\bm{A}, \gl, \zl=(\eta, \eta, \dots, \eta))$ as
        \begin{align}
            &\exp\left(\frac12 \gl^\mathrm{T} [\mathbb{I} - \eta \bm{X}\bm{A}]^{-1} \eta \bm{X} \gl \right) \nonumber\\
            &\qquad = \sum_{j=0}^\infty \frac{1}{j!} \left(\frac12 \gl^\mathrm{T} [\mathbb{I} - \eta \bm{X}\bm{A}]^{-1} \eta \bm{X} \gl  \right)^j \nonumber\\
	        &\qquad = \sum_{j=0}^\infty \frac{1}{j!} \left( \sum_{k=0}^\infty \frac{\gl^\mathrm{T} [\eta \bm{X}\bm{A}]^{k} \eta \bm{X} \gl }{2}  \right)^j  \nonumber\\
	        &\qquad = \sum_{j=0}^\infty \frac{1}{j!} \left( \sum_{k=1}^\infty \frac{\gl^\mathrm{T} [\bm{X}\bm{A}]^{k-1}  \bm{X} \gl }{2} \eta^k \right)^j,
            \label{eq:q_numerator_expansion}
        \end{align}
        and
        \begin{align}
            \frac{1}{\sqrt{\det(\mathbb{I} - \eta \bm{X} \bm{A})}} &= \exp\left(-\tfrac12 \log\left[\det\left(\mathbb{I} - \eta \bm{X} \bm{A}\right)\right]\right) \nonumber\\
	        &= \sum_{j=0}^\infty \frac{1}{j!} \left(\sum_{k=1}^\infty \frac{\tr\left([\bm{X} \bm{A}]^k\right)}{2k} \eta^k \right)^j.
            \label{eq:q_denominator_expansion}
        \end{align}
        Bringing these expansions together, we obtain the desired expression for $q(\bm{A}, \gl, \zl=(\eta, \eta, \dots, \eta))$.

        Focusing only on terms on terms up to order $N$ in $\eta$, we may truncate series~\eqref{eq:q_numerator_expansion} and~\eqref{eq:q_denominator_expansion} at $N$, and define a finite-sum version of $q(\bm{A}, \gl, \zl=(\eta, \eta, \dots, \eta))$ as
        \begin{align}
            &q_N(\bm{A}, \gl, \eta) \nonumber\\
            &=\sum_{j=0}^N \frac{1}{j!}\left(\sum_{k=1}^N \left(\frac{\tr\left([\bm{X} \bm{A}]^k\right)}{2k} + \frac{\gl^\mathrm{T} [\bm{X}\bm{A}]^{k-1}  \bm{X} \gl }{2} \right)\eta^k \right)^j.
            \label{eq:truncated_series_q}
        \end{align}
        Since
        \begin{align}
            &\left. \frac{\partial^N}{\partial \eta^N} q\left(\bm{A}, \gl, \zl=(\eta, \eta, \dots \eta) \right) \right|_{\eta=0}\!\!\!=\left. \frac{\partial^Nq_N\left(\bm{A}, \gl, \eta \right)}{\partial \eta^N}  \right|_{\eta=0},
            \label{eq:truncated_complete_derivative}
        \end{align}
        we can see that $f_N(\bm{A},\bm{\gamma})$ is not affected by the truncation. So, we can recast Eq.~\eqref{eq:total_photon_number_probability_f_func}, as
        \begin{align}
            &\Pr\left(N_Y=N,N_Z=0|\bm{A},\gl\right)=\Pr(\bm{0}|\bm{A},\gl)f_N(\bm{A}_Y,\gl_Y)\nonumber\\
            &\qquad\qquad\qquad\qquad=\frac{1}{N!}\left. \frac{\partial^Nq_N\left(\bm{A}_Y, \gl_Y, \eta \right)}{\partial \eta^N}  \right|_{\eta=0}
            \label{eq:total_photon_number_prob}
        \end{align}

        Note that in the special case where $Y=[M]$, where $\bm{A}_{[M]}=\bm{A}$ and $\gl_{[M]}=\gl$, we obtain an expression for the \textit{total photon number} distribution corresponding to the Gaussian state described by $\bm{A}$ and $\gl$. This expression agrees with the results from Ref.~\cite{avagyan2023multi}, where it is shown that the total photon number distribution depends only the spectrum of the covariance matrix and absolute displacements in each eigenspace. Indeed, note that the expression above only depends on the trace of powers of $\bm{X}\bm{A} = \mathbb{I} - \bm{\Sigma}^{-1}$ (equivalently, they only depend on the spectrum of the hermitian covariance matrix $\bm{\Sigma}$) and the products $\gl^\mathrm{T} [\bm{X}\bm{A}]^{k-1}  \bm{X} \gl = \bar{\zl}^\mathrm{T} (\bm{\Sigma}^{\mathrm{T}})^{-1} \bm{X} (\mathbb{I} - \bm{\Sigma}^{-1})^{k-1} (\bm{\Sigma})^{-1} \bar{\zl} $.

        In Appendix~\ref{app:algorithm} we describe an algorithm, first introduced in Ref.~\cite{bjorklund2019faster} and whose implementation is available in~\cite{lhaf2024walrus}, for computing $f_N(\bm{A}, \gl)$ (with $\bm{A}$ a $2M\times 2M$ matrix and $\gl$ a $2M$ length vector) in $\mathcal{O}(NM^3+N^2\log N)$ time.

    \section{\label{sec:trace_formulas}Trace formula for the loop Hafnian}

        An interesting application of the loop Hafnian master theorem is the derivation of the so called \textit{trace formula} for the loop Hafnian, which has been previously reported in Refs.~\cite{bjorklund2019faster, cygan2015faster, bulmer2022boundary, bulmer2022threshold}. In this section we will work out the details of this derivation. To do this, we will first introduce the notion of a \textit{finite-difference sieve}, closely following the definition given in Ref.~\cite{bax1998finite}. Then, we will describe how to nicely combine it with the master theorem in order to obtain the trace formula of the loop Hafnian. 
        
            Let $P(x_1, x_2, \dots, x_n)$ be a polynomial of degree at most $n$, which we write as $P()$ for short. This polynomial will have the general form
            \begin{equation}
		        P(x_1, x_2, \dots, x_n) = \sum_{\substack{\{p_k\}\text{ s.t. }\\ \sum_k p_k \leq n}} c_{p_1, p_2, \dots, p_n} x_1^{p_1} x_2^{p_2} \cdots x_n^{p_n}.
                \label{eq:poly_def}
	        \end{equation}
            and let us define the operator $D_j$
        	\begin{align}
        		D_{x_j} P() = \frac{P(u_j) - P(v_j)}{u_j - v_j} \ \text{for}\ u_j \neq v_j,
                \label{eq:finite_diff_operator}
        	\end{align}
            where we use $P(u_j)$ as a shorthand for $P(x_1, x_2, \cdots, x_j=u_j, \cdots, x_n)$, and similarly for $P(v_j)$.

            We can readily check that $D_{x_j} x_j^0 = 0$ and $D_{x_j} x_j^1 = 1$. Moreover, the different $D_{x_j}$ commute with each other. These properties allow us to compute the multilinear term of $P()$ as (see Ref.~\cite{bax1998finite} for a simple proof of this statement)
            \begin{equation}
		          D_{x_1} D_{x_2} \cdots D_{x_n} P() = c_{1,1,\dots,1}.
                \label{eq:multilinear_coeff}
	        \end{equation}
            The calculation of this coefficient involves $2^n$ evaluations of the polynomial $P()$. 

            To determine the coefficients corresponding to non-multilinear terms in $P()$, we can use the $k$th finite-difference operator
            \begin{align}
    			D_{x_j}^{(k)} P() &= \frac{1}{(u_j - v_j)^k} \nonumber\\&\times \sum_{m=0}^k \binom{k}{m} (-1)^{k-m} P(v_j + m(u_j - v_j)).
                \label{eq:non_multilinear_finite_diff}
		      \end{align}
            For any $k_1 + k_2 + \cdots + k_n = n$, the coefficient $c_{k_1,k_2,\dots,k_n}$ can be computed using the relation~\cite{bax1998finite}
            \begin{equation}
    			D_{x_1}^{(k_1)} D_{x_2}^{(k_2)} \cdots D_{x_n}^{(k_n)} P() = \left(\prod_{i=1}^n k_i!  \right)c_{k_1, k_2, \dots k_n}.
                \label{eq:non_multilinear_coeff}
		      \end{equation}
            The calculation of this coefficient involves $\prod_{j=1}^n (k_j+1)$ evaluations of the polynomial $P()$.

            The definition of the finite-difference operators gives us a lot of freedom in choosing the values of $u_j$ and $v_j$, provided only that they are different. An example of the utility of this freedom comes by letting $P()$ be the generating function of the permanent. If we choose $u_j=1$ and $v_j=0$ for all $j$, the application of the finite-difference operators recovers Ryser's formula for the permanent~\cite{ryser1963combinatorial} and its generalizations for repeated rows. Using $u_j=1$ and $v_j=-1$, we can instead recover Glynn's formula for the permanent~\cite{glynn2010permanent} and its repeated row generalizations. A repeated row generalization of Glynn's  formula by Shchesnovich~\cite{shchesnovich2020classical} uses something along the lines of $u_j=1$, $v_j=\exp[2 i \pi / (n_j+1)]$, where $n_j$ is the number of repetitions of a given row.

            Consider now Eq.~\eqref{eq:lhaf_master} with the replacement $\zl\to\eta\zl$:
            \begin{align}
            	&q(\bm{A}, \gl, \eta\zl)\nonumber\\
                &=\sum_{\nl} \lhaf \left({\bm{A}}_{\nl\oplus\nl}, {\gl}_{\nl \oplus \nl} \right)\left(\prod_{i=1}^M \frac{(\eta z_i)^{n_i}}{n_i!} \right)\nonumber\\
                &=\sum_{N'=0}^\infty\eta^{N'}\!\sum_{\bm{n}\in \mathcal{K}_{N'}}\lhaf \left({\bm{A}}_{\nl\oplus\nl}, {\gl}_{\nl \oplus \nl} \right)\left(\prod_{i=1}^M \frac{z_i^{n_i}}{n_i!} \right),
            	\label{eq:lhaf_finite_diff_1}
            \end{align}
            where $N'=n_1+\cdots+n_M$ and $\mathcal{K}_{N'}\equiv \mathcal{K}_{[M], \emptyset}^{N', 0}$, following definition~\eqref{eq:two_division_subset}, is the set of all detection patterns $\nl$ with a total of $N'$ photons. 
            Differentiating $N$ times with respect to $\eta$ and evaluating at $\eta =0$, we obtain the relation
            \begin{align}
            	f_N(\bm{A}, \gl, \zl)&:=\frac{1}{N!}\left.\frac{\partial^Nq(\bm{A},\gl,\eta\zl)}{\partial \eta^N}\right|_{\eta=0}\nonumber\\
                &=\sum_{\bm{n}\in \mathcal{K}_{N}}\lhaf \left({\bm{A}}_{\nl\oplus\nl}, {\gl}_{\nl \oplus \nl} \right)\left(\prod_{i=1}^M \frac{z_i^{n_i}}{n_i!} \right),
            	\label{eq:lhaf_finite_diff_2}
            \end{align}
            which can be viewed as a loop Hafnian master theorem related to patterns having a fixed number of detected photons.

            As can be seen, $f_N(\bm{A},\gl, \zl)$ is a polynomial of degree $N$ in the variables $z_1,\dots,z_M$. Thus, we can find the coefficients of this polynomial by applying the finite-difference operators defined in Eq.~\eqref{eq:non_multilinear_finite_diff}. Since these coefficients correspond to loop Hafnians, this procedure gives us a new technique for their computation:  
            \begin{equation}
        		\lhaf\left( {\bm{A}}_{\nl \oplus \nl}, {\gl}_{\nl \oplus \nl} \right) = \left( \prod_{i=1}^M D_{z_i}^{(n_i)} \right) f_N(\bm{A}, \gl, \zl).
                \label{eq:finite_diff_lhaf}
	        \end{equation}
            If we let $u_j = n_j$ and $v_j = -n_j$ for all $j$, we recover the finite-difference sieve method used in Ref.~\cite{bulmer2022boundary}. It is also worth mentioning that we can apply the finite-difference sieve directly to Eq.~\eqref{eq:lhaf_master}, in which case we can obtain Kan's algorithm for computing the loop Hafnian~\cite{kan2008moments}.

            Note that the computation of $f_N(\bm{A},\gl,\zl)$ can be done using the same algorithm that is used to compute $f_N(\bm{A},\gl)$, we need only modify Eq.~\eqref{eq:truncated_series_q} according to the transformations
            \begin{align}
                \frac{\tr\left(\left[\bm{X}\bm{A}\right]^k\right)}{2k}&\to\frac{\tr\left(\left[\bm{D}(\zl)\bm{X}\bm{A}\right]^k\right)}{2k},\\
                \frac{\gl^{\mathrm{T}}\left[\bm{X}\bm{A}\right]^{k-1}\bm{X}\gl}{2}&\to\frac{\gl^{\mathrm{T}}\left[\bm{D}(\zl)\bm{X}\bm{A}\right]^{k-1}\bm{D}(\zl) \bm{X}\gl}{2}.
            \end{align}
            
    \section{\label{sec:blocked_hafnian}The blocked loop Hafnian}

        Let us now turn our attention to three problems related to the computation of the photon-number statistics of Gaussian states, two of which can be interpreted as slight generalizations of the problem of determining photon-number probabilities.

        First, we seek to complete our program of evaluating general coarse-grained photon number distributions of $M$-mode photonic systems. So far we have considered only the case where we detect a total of $N$ photons in a subset of modes, while in the remaining modes we detect vacuum. Here, we will consider general groupings of the photon number detectors, and an arbitrary number of total detected photons in each group (see Fig.~\ref{fig:coarse_grained_probs} for an illustration of this problem in the context of GBS).

        Second, suppose that each of the original $M$ modes of our photonic system can support a number, say $K$, of \textit{internal modes}. For example, we can think of our original $M$ modes as the input-output \textit{ports} of an interferometer, while the internal modes can be related to the polarization, frequency, or any other type of degree of freedom of the input-output light (see Fig.~\ref{fig:internal_modes_heralding}). From now on we will refer to the $M$ original degrees of freedom as \textit{external modes}. Furthermore, suppose that we have access to photon-number-resolving detectors that can only distinguish between the $M$ external modes of our system. We are interested in studying how the presence of these internal modes affects the corresponding photon number statistics. The study of this type of systems is of significant physical relevance, as it allows us to model the effect of spectral or temporal imperfections, as well as of other sources of photon distinguishability, in the preparation of both Gaussian and non-Gaussian states.

        Finally, we are interested in computing photon-number moments and cumulants of Gaussian states in the same scenarios we illustrated above. That is, we are interested in computing photon-number moments and cumulants for an arbitrary grouping of the photon-number detectors, and for situations in which our system has a certain number of internal modes that our detectors cannot readily distinguish.

        We will show that all of these problems can be studied using the loop Hafnian master theorem of Sec.~\ref{sec:master_theorem}, and the finite-difference sieve introduced in Sec.~\ref{sec:trace_formulas}. Specifically, we will prove that they can be related to computing a generalized version of the loop Hafnian, which we call the \textit{blocked loop Hafnian}.

        In what follows we will work out the details of the definition of the blocked loop Hafnian, and we will find a finite-difference formula for its computation. Then, we will describe its relation to the generalizations of the problem of computing photon-number probability distributions, and that of computing photon-number moments and cumulants. 
        
        \subsection{\label{sec:def_Lamb_lhaf}Definition of the blocked loop Hafnian}

            Consider a Gaussian state represented by a $2M\times2M$ adjacency matrix $\bm{A}$ and a $2M$-length loop vector $\gl$. Let us arrange the $M$ modes of the system in $L$ groups that do not necessarily have the same size (see Fig.\ref{fig:coarse_grained_probs}). Note that $1\leq L\leq M$. Mathematically, this grouping of modes is represented by a \textit{set partition} of $[M]=\{1,\dots, M\}$ with $L$ blocks, i.e., a set $\Lambda=\{\Lambda_1,\dots,\Lambda_L\}$ where $\Lambda_j\subseteq [M]$ (a set of subsets of $[M]$), $\Lambda_j\neq \emptyset\,\forall j$ (that are non-empty), $\Lambda_j\cap\Lambda_k=\emptyset$ for $j\neq k$ (mutually disjoint), and $\bigcup_{j=1}^L\Lambda_j=[M]$ (whose union is $[M]$). Associated to this set partition, we define a \textit{coarse-grained photon detection pattern} $\bm{b}=(b_1,\dots,b_L)$, where $b_j=\sum_{i\in\Lambda_j}n_i$ and, as before, $n_i$ is the number of detected photons at mode $i$.

            The set of all detection patterns $\bm{n}=(n_1,\dots,n_M)$ that are compatible with $\bm{b}$ can be defined in a similar fashion to Eq.~\eqref{eq:two_division_subset}:
            \begin{equation}
                \mathcal{K}_{\Lambda}^{\bm{b}}=\left\{\bm{n}\,\bigg|\,\sum_{i\in \Lambda_j}n_i=b_j\text{ for all }\Lambda_j\in \Lambda\right\}.
                \label{eq:coarse_grained_subset}
            \end{equation}
            We will now isolate all the possible $\bm{n}\in \mathcal{K}_{\Lambda}^{\bm{b}}$ from the loop Hafnian master theorem.

            In Eq.~\eqref{eq:lhaf_master}, for each value of $j\in \{1,\dots, L\}$, let $z_i = w_j$ for all $i\in\Lambda_j$. For example, if $M=4$ and $\Lambda = \{\Lambda_1=\{1\},\Lambda_2=\{3\},\Lambda_3=\{2,4\}\}$, we will have that 
            $z_1= w_1$, $z_3=w_2$ and $z_2=z_4=w_3$, which is equivalent to writing $\bm{z}=(w_1,w_3,w_2,w_3)$. From this, we can see that
            \begin{align}
                \prod_{i=1}^M\frac{z_{i}^{n_{i}}}{n_{i}!}&=\left(\prod_{j=1}^L\prod_{i\in\Lambda_j}\frac{1}{n_{i}!}\right)\prod_{j=1}^Lw_j^{\sum_{k\in\Lambda_j}n_{k}}\nonumber\\
                &=\left(\prod_{j=1}^L\prod_{i\in\Lambda_j}\frac{1}{n_{i}!}\right)\prod_{j=1}^Lw_j^{b_j},
                \label{eq:change_in_product}
            \end{align}
            which allows us to write the master theorem as
            \begin{align}
                &q(\bm{A},\gl,\bm{w})\nonumber\\
                &=\sum_{\bm{n}} \lhaf \left(\bm{A}_{\bm{n}\oplus\bm{n}}, \gl_{\bm{n} \oplus \bm{n}} \right)\left(\prod_{j=1}^L\prod_{i\in\Lambda_j}\frac{1}{n_{i}!}\right)\prod_{j=1}^Lw_j^{b_j}\nonumber\\
                &=\sum_{\bm{b}}\prod_{j=1}^Lw_j^{b_j}\!\!\left[\sum_{\bm{n}\in \mathcal{K}_{\Lambda}^{\bm{b}}}\!\!\lhaf \left(\bm{A}_{\bm{n}\oplus\bm{n}}, \gl_{\bm{n} \oplus \bm{n}} \right)\!\!\left(\prod_{j=1}^L\prod_{i\in\Lambda_j}\!\!\frac{1}{n_{i}!}\right)\right],
                \label{eq:change_in_master_theorem}
            \end{align}
            where we noted that $\sum_{\bm{n}}=\sum_{\bm{b}}\sum_{\bm{n}\in \mathcal{K}_{\Lambda}^{\bm{b}}}$, and we made the replacement $\zl\to\bm{w}$, with $\bm{w}$ the vector whose components are the different $w_j$ repeated according to the elements in the different $\Lambda_j$ (just as in the example above). 

            Eq.~\eqref{eq:change_in_master_theorem} closely resembles an expression for a master theorem involving only the photon detection patterns $\bm{b}$. Indeed, we can think of this expression as the master theorem for the term inside the square brackets (up to a normalization factor). Following this idea, let us define the \textit{$\Lambda$-blocked loop Hafnian}, $\lhaf_{\Lambda}(\bm{A},\gl,\bm{b})$, according to the relation
            \begin{align}
                \lhaf_{\Lambda}&\left(\bm{A},\gl,\bm{b}\right)\left(\prod_{j=1}^L\frac{1}{b_j!}\right) \nonumber\\
                &= \sum_{\bm{n}\in \mathcal{K}_{\Lambda}^{\bm{b}}}\!\!\lhaf \left(\bm{A}_{\bm{n}\oplus\bm{n}}, \gl_{\bm{n} \oplus \bm{n}} \right)\left(\prod_{j=1}^L\prod_{i\in \Lambda_j}\frac{1}{n_{i}!}\right).
                \label{eq:Lamb_lhaf_def}
            \end{align}
            Then, we can write the \textit{$\Lambda$-blocked loop Hafnian master theorem} as
            \begin{align}
            	\sum_{\bm{b}}\lhaf_{\Lambda}\left(\bm{A},\gl,\bm{b}\right)\left(\prod_{j=1}^L\frac{w_j^{b_j}}{b_j!}\right) = q(\bm{A}, \gl, \bm{w}).
            	\label{eq:Lamb_lhaf_master}
            \end{align}

            At first glance, the definition of $\lhaf_{\Lambda}(\bm{A},\gl,\bm{b})$ might seem a bit underwhelming, since it generally involves the computation of several loop Hafnians with high time complexity. Indeed, the number of elements in the set $\mathcal{K}_{\Lambda}^{\bm{b}}$, for a fixed $\bm{b}$, is $|\mathcal{K}_{\Lambda}^{\bm{b}}|=\prod_{j=1}^L\binom{|\Lambda_j|+b_j-1}{b_j}$, with $|\Lambda_j|$ the number of elements or \textit{length} of the block $\Lambda_j$, and the time complexity of computing a single loop Hafnian of an $2N'\times 2N'$ matrix, with $N'=n_1+\cdots+n_M$, is $\mathcal{O}(N'^32^{N'})$~\cite{bjorklund2019faster}. However, we can use the master theorem to obtain an equation that offers a considerable speedup in the computation of the $\Lambda$-blocked loop Hafnian.

            In Eq.~\eqref{eq:Lamb_lhaf_master}, let $w_j\to\eta w_j$ for all $j\in \{1,\dots, L\}$. Then, we can write
            \begin{align}
                q&(\bm{A},\gl,\eta\bm{w})\nonumber\\
                &=\sum_{\bm{b}}\lhaf_{\Lambda}\left(\bm{A},\gl,\bm{b}\right)\left(\prod_{j=1}^L\frac{(\eta w_j)^{b_j}}{b_j!}\right)\nonumber\\
                &=\sum_{N'=0}^\infty\eta^{N'}\!\!\!\!\sum_{b_1+\cdots+ b_L=N'}\!\!\!\lhaf_{\Lambda}\left(\bm{A},\gl,\bm{b}\right)\left(\prod_{j=1}^L\frac{w_j^{b_j}}{b_j!}\right),
                \label{eq:Lamb_lhaf_reduction}
            \end{align}
            where, again, $N'=\sum_{j=1}^Lb_j=\sum_{i=1}^Mn_i$. Differentiating $N$ times with respect to $\eta$ and evaluating at $\eta=0$, we obtain
            \begin{align}
            	f_N(\bm{A}, \gl, \bm{w})&=\frac{1}{N!}\left.\frac{\partial^Nq(\bm{A},\gl,\eta\bm{w})}{\partial \eta^N}\right|_{\eta=0}\nonumber\\
                &\mkern-80mu=\sum_{b_1+\cdots+ b_L=N'}\lhaf_{\Lambda}\left(\bm{A},\gl,\bm{b}\right)\left(\prod_{j=1}^L\frac{w_j^{b_j}}{b_j!}\right).
            	\label{eq:Lamb_lhaf_reduction_2}
            \end{align}
            Note that $f_N(\bm{A}, \gl, \bm{w})$ is a polynomial of degree $N$ in the variables $\{w_j\}$, so we can repeatedly apply finite-difference operators to compute the coefficients of the polynomial, i.e., to compute $\lhaf_{\Lambda}\left(\bm{A},\gl,\bm{b}\right)$. The final expression reads
            \begin{equation}
        		\lhaf_{\Lambda}\left(\bm{A},\gl,\bm{b}\right) = \left( \prod_{j=1}^L D_{w_j}^{(b_j)} \right) f_N(\bm{A}, \gl, \bm{w}).
                \label{eq:finite_diff_Lamb_lhaf}
	        \end{equation}
    
            The computation of Eq.~\eqref{eq:finite_diff_Lamb_lhaf} requires the evaluation of $f_N(\bm{A}, \gl, \bm{w})$ a total of $\prod_{j=1}^L(b_j+1)$ times, which is generally much less than the $\prod_{j=1}^L\binom{|\Lambda_j|+b_j-1}{b_j}$ operations required by Eq.~\eqref{eq:Lamb_lhaf_def}. Moreover, evaluating $f_N(\bm{A}, \gl, \bm{w})$ can be done in $\mathcal{O}(NM^3+N^2\log N)$ time, much faster than the computation of a single loop Hafnian. This shows the advantage of using Eq.~\eqref{eq:finite_diff_Lamb_lhaf} for determining $\lhaf_{\Lambda}\left(\bm{A},\gl,\bm{b}\right)$ over naively computing it using the definition of Eq.~\eqref{eq:Lamb_lhaf_def}. 

            To finish this section, let us note that, combining Eqs.~\eqref{eq:gbs_probability_distribution} and~\eqref{eq:Lamb_lhaf_def}, we can readily compute the probability of the detection event represented by $\bm{b}$ (and $\Lambda$).
            \begin{align}
                \Pr&(\bm{b}|\bm{A},\gl)=\sum_{\nl\in \mathcal{K}_{\Lambda}^{\bm{b}}}\Pr(\nl|\bm{A},\gl)\nonumber\\
                &=\Pr\left(\bm{0}| \bm{A}, \gl \right)\lhaf_{\Lambda}\left(\bm{A},\gl,\bm{b}\right)\left(\prod_{j=1}^L\frac{1}{b_j!}\right)\nonumber\\
                &=\frac{\Pr\left(\bm{0}| \bm{A}, \gl \right)}{\prod_{j=1}^Lb_j!}\left( \prod_{j=1}^L D_{w_j}^{(b_j)} \right) f_N(\bm{A}, \gl, \bm{w}).
                \label{eq:coarse_grained_prob}
            \end{align}
            This relation generalizes Eq.~\eqref{eq:total_photon_number_probability_f_func} and is the expression for the coarse-grained photon number distribution we were looking for. 

        \subsection{\label{sec:photon_dist_internal}Photon number distribution in the presence of internal modes}

            We are now in the position to see that the computation of photon number distributions for Gaussian states with internal modes corresponds to the calculation of a special type of blocked loop Hafnian.
            
            A Gaussian state with $M$ external modes and $K$ internal modes per external mode can be completely parametrized by a $2MK\times 2MK$ adjacency matrix $\bar{\bm{A}}$ and a $2MK$-length loop vector $\bar{\gl}$, where we use the overline to indicate that we are considering a Gaussian state with multiple internal modes. Let $s_{k,l}$, with $k\in\{1,\dots,M\}$ and $l\in\{1,\dots, K\}$, denote the number of photons at the external mode $k$ and the internal mode $l$. We can group the different $\{s_{k,l}\}$ in a string of non-negative integers $\bm{s}$, of length $MK$, defined as
            \begin{align}
                \bm{s}=(s_{1,1},\dots,s_{1,K})\oplus\cdots\oplus(s_{M,1},\dots,s_{M,K}).
                \label{eq:internal_mode_detection_pattern}
            \end{align}
    
            If we have access to detectors that can only distinguish between the $M$ external modes of the system, we need to define a coarse-grained photon number detection pattern $\nl=(n_1,\dots,n_M)$, whose components are computed as $n_k=\sum_{l=1}^Ks_{k,l}$. To this measurement outcome we can associate the $M$-block set partition (of $[MK]=\{1,\dots,MK\}$) $\Lambda(K) = \{\Lambda_1,\dots, \Lambda_M\}$, where each block is defined as $\Lambda_l = \{(l-1)K+1,\dots,lK\}$. Then, we can see that
            \begin{align}
                \Pr&(\bm{n}|\bar{\bm{A}},\bar{\gl})=\sum_{\bm{s}\in \mathcal{K}_{\Lambda(K)}^{\bm{n}}}\Pr(\bm{s}|\bar{\bm{A}},\bar{\gl})\nonumber\\
                &=\Pr\left(\bm{0}| \bar{\bm{A}}, \bar{\gl} \right)\lhaf_{\Lambda(K)}\left(\bar{\bm{A}},\bar{\gl},\bm{n}\right)\left(\prod_{k=1}^M\frac{1}{n_k!}\right)\nonumber\\
                &=\frac{\Pr\left(\bm{0}| \bar{\bm{A}}, \bar{\gl} \right)}{\prod_{k=1}^Mn_k!}\left(\prod_{k=1}^M D_{w_k}^{(n_k)} \right) f_N(\bar{\bm{A}}, \bar{\gl}, \bar{\bm{w}}),
                \label{eq:prob_internal_modes}
            \end{align}
            where $\bar{\bm{w}}=(w_1,\dots,w_1)\oplus\cdots\oplus(w_M,\dots,w_M)$ with each $w_l$ appearing a total of $K$ times.   

            It is worth mentioning once more that Eq.~\eqref{eq:prob_internal_modes} offers a significant advantage at the computation of photon number distributions in the presence of internal modes. Indeed, in most of the models considering the presence of these modes (see, for instance, Ref.~\cite{shi2022effect}), the exact calculation of probabilities requires a sum over $|\mathcal{K}_{\Lambda(K)}^{\bm{n}}|=\prod_{k=1}^M\binom{K+n_k-1}{n_k}$ terms that, in principle, correspond to loop Hafnians of large $\bar{\bm{A}}$ and $\bar{\gl}$. If we think of the external modes as representing the input-output ports of an interferometer, and of the internal modes as representing other degrees of freedom of the light entering the interferometer, we can see that photons that are in different internal modes will not interfere between them. This allows us to decompose $\bar{\bm{A}}$ and $\bar{\gl}$ as a direct sum of $K$ smaller $2M\times 2M$ blocks and $2M$-length integer strings, i.e., $\bar{\bm{A}}=\bigoplus_{l=1}^K\bm{A}_l$,$\bar{\gl}=\bigoplus_{l=1}^K\gl_l$, which turns the computation of a very expensive loop Hafnian into the computation of loop Hafnians of several smaller matrices (on account of the property $\lhaf(\bm{A}\oplus\bm{B})=\lhaf(\bm{A})\lhaf(\bm{B})$). Furthermore, if we consider that the photons in some of the internal modes are fully distinguishable, calculating the corresponding loop Hafnians reduces to the computation of quantities with polynomial time complexity~\cite{shi2022effect}. However, even after all these considerations, we cannot alleviate the fact that we are summing over a combinatorial number of elements. Eq.~\eqref{eq:prob_internal_modes}, on the other hand, requires only an exponential number of evaluations, namely $\prod_{k=1}^M(n_k+1)$, of a quantity that has time complexity $\mathcal{O}(NK^3M^3+N^2\log N)$. This represents a notable reduction of the computational cost of calculating these photon number probabilities.
        \subsection{Coarse-grained photon-number moments and cumulants}
           Let us now study the computation of coarse-grained photon-number moments and cumulants of Gaussian states. We start by recalling that the photon-number moment generating function of a Gaussian state can be written as
           \begin{align}
                \label{eq:M}
            	\mathcal{M}(\bm{t}) &= \text{Tr}\left(\hat{\varrho} \exp\left[\sum_{j=1}^M t_j \hat{a}^\dagger_j \hat{a}_j\right]\right)\nonumber\\
            	&=\sum_{\bm{n}} e^{\bm{n} \cdot \bm{t}} \Pr(\bm{n}|\bm{A},\bm{\gamma})\\
            	&=\Pr(\bm{0}|\bm{A},\bm{\gamma}) \sum_{\bm{n}} e^{\bm{n} \cdot \bm{t}}\frac{\lhaf \left({\bm{A}}_{\nl\oplus\nl}, {\gl}_{\nl \oplus \nl} \right)}{\prod_{i=1}^M n_i!}
            	%
            	%
            	%
           \end{align}
       We will now use the loop-Hafnian master theorem to simplify this expression. To this end, we can introduce the diagonal matrix $\bm{E}(\bm{t}) = [\oplus_{i=1}^M e^{t_i/2}] \oplus [\oplus_{i=1}^M e^{t_i/2}]$, and using the scaling property Eq.~\eqref{eq:lhaf_scaling} together with the fact that $e^{\bm{n}\cdot \bm{t}} = \prod_{i=1}^M e^{t_i n_i}$ and the results leading to Eq.~\eqref{eq:q_d}, we write
       \begin{align}
       	\mathcal{M}(\bm{t}) &= \Pr(\bm{0}|\bm{A},\bm{\gamma}) \nonumber \\
       	&\times \frac{\exp\left( \frac12 \gl^\mathrm{T} \left[ \mathbb{I} - \bm{E}(\bm{t})^2 \bm{X} \bm{A} \right]^{-1} \bm{E}(\bm{t})^2 \bm{X}\gl \right)}{\sqrt{\det\left[\mathbb{I} -  \bm{E}(\bm{t})^2 \bm{X} \bm{A}  \right]}}
       \end{align}
       Using the expression for the vacuum probability and defining $\bm{G}(\bm{t})=\bm{E}(\bm{t})^2 - \mathbb{I}=\left[ \oplus_{i=1}^M (e^{t_i}-1) \right] \oplus \left[ \oplus_{i=1}^M (e^{t_i}-1) \right]$ we finally obtain, after some lengthy but straightforward algebra, a simple expression for the moment generating function
      \begin{align}
      		\mathcal{M}(\bm{t}) &=\frac{\exp\left( \frac{1}{2} \bar{\bm{z}}^\dagger  [\mathbb{I} -  \bm{G}(\bm{t})\bm{\Sigma}^{(s=1)}]^{-1} \bm{G}(\bm{t}) \bar{\bm{z}}\right)}{\sqrt{\det\left( \mathbb{I}- \bm{G}(\bm{t}) \bm{\Sigma}^{(s=1)}  \right)}},
      \end{align}
           where, we remind the reader, the vector of first moments, $\bar{\bm{z}}$, and the covariance matrix, $\bm{\Sigma}^{(s)}$, are defined according to Eqs.~\eqref{eq:vector_of_means} and~\eqref{eq:covariance_matrix}. 
           In the original Ref.~\cite{cardin2022photon}, there is a typo in Eq.~(30) where the matrix $\bm{G}$ incorrectly multiplies the transposed vector of means $\bar{\bm{z}}^\dagger$ instead of the vector of means form the $\bar{\bm{z}}$. Finally, note the formal similarity between in Eq.~\eqref{eq:M} and Eq.~\eqref{eq:q_d}.
           
        The \emph{cumulant} generating function is simply $\mathcal{K}(\bm{t}) = \log \mathcal{M}(\bm{t})$. This quantity can be used to extract the cumulants of the photon-number distribution, which encode the genuine correlations present in the probability distribution~\cite{ursell_1927,fisher1932derivation}.
       Moments and cumulants of the photon-number distribution can be obtained by taking derivatives of their respective generating function,
       \begin{align}
       	\braket{n_1^{p_1} \ldots n_M^{p_M}} &= \left. \frac{\partial^{p_1}}{\partial t_1^{p_1} } \ldots \frac{\partial^{p_M}}{\partial t_M^{p_M} } \mathcal{M}(\bm{t}) \right|_{\bm{t}=\bm{0}},\\
       	\braket{\braket{n_1^{p_1} \ldots n_M^{p_M}}} &= \left. \frac{\partial^{p_1}}{\partial t_1^{p_1} } \ldots \frac{\partial^{p_M}}{\partial t_M^{p_M} } \mathcal{K}(\bm{t}) \right|_{\bm{t}=\bm{0}}.
       \end{align}
   For the case where the $p_i$ are bitstrings, and thus the photon-number random variables appear only once inside the brackets or double brackets, we can approximate $e^t-1\approx t$ (since quadratic and higher-order terms are only relevant if any of the photon numbers is raised to a power higher than 1) and replace $\bm{G}(\bm{t}) \to \bm{D}(\bm{t})$, thus showing that the calculation of moments reduces to that of loop-Hafnians. Similarly, for cumulants the relevant graph-theoretic quantity is called the loop-Montrealer~\cite{cardin2022photon}, for which one can apply the same finite-difference sieve described before, but using only the $j=1$ term of the expansion in Eq.~\eqref{eq:truncated_series_q}, namely
   \begin{align}
   	 &\tilde{q}_N(\bm{\Sigma}^{(s=1)}, \bar{\bm{z}}^*, \eta) \nonumber\\
   	&=\sum_{k=1}^N \left(\frac{\tr\left([\bm{\Sigma}^{(s=1)}]^k\right)}{2k} + \frac{\bar{\bm{z}}^\mathrm{\dagger} [\bm{\Sigma}^{(s=1)}]^{k-1}  \bar{\bm{z}} }{2} \right)\eta^k .
   \end{align}
   
   Perhaps more interestingly, one can define coarse grained moments and cumulants by introducing a set partition $\Lambda=\{\Lambda_1,\dots,\Lambda_L\}$ of $\{1,\dots,M\}$, and writing 
   \begin{align}\label{eq:CGmom}
   	\braket{n_{\Lambda_1}\ldots n_{\Lambda_L}} &= \left\langle\left(\sum_{i_1 \in \Lambda_1} n_{i_1} \right)\ldots \left( \sum_{i_L \in \Lambda_L} n_{i_L}\right)\right\rangle,\\
   	   	\braket{\braket{n_{\Lambda_1}\ldots n_{\Lambda_L}}} &= \left\langle \left\langle\left(\sum_{i_1 \in \Lambda_1} n_{i_1} \right)\ldots \left( \sum_{i_L \in \Lambda_L} n_{i_L}\right)\right\rangle \right\rangle\label{eq:CGmom2}.
   \end{align}
   These coarse-grained quantities can be obtained by applying a finite-difference sieve using the same variable $t_{\Lambda_j}$ for all the indices in a given block $\Lambda_j$, which greatly simplifies the calculation of these quantities relative to the combinatorial explosion entailed in expanding the products in the right-hand side of the equations above. 

   For the case of Gaussian states with a given number of internal modes per external mode, we can find analogous relations to Eqs.~\eqref{eq:CGmom} and~\eqref{eq:CGmom2} following a similar procedure as the one we used to obtain the photon-number distribution in Eq.~\eqref{eq:prob_internal_modes}.  
   
   
    \section{\label{sec:off_diagonal}Off-diagonal density matrix elements}

        We will now extend our study on the loop and blocked loop Hafnians to the computation of the off-diagonal density matrix elements of the, generally non-Gaussian, states that are obtained when one measures a subset of modes of a multimode Gaussian state~\cite{su2019conversion, sabapathy2019production, quesada2019simulating}. First, we will suppose that the generated state is conditioned to a ``fine-grained'' measurement outcome. Then, we will consider the case where several of the measured modes are grouped together.

        \subsection{\label{sec:off_diag_no_internal}Ungrouped measurement case}
        
            For a Gaussian state, $\hat{\varrho}$, its off-diagonal density matrix elements can be shown~\cite{quesada2019simulating} to be given by
            \begin{align}
        		\braket{\ml | \hat{\varrho} | \nl} &= \frac{\Pr(\bm{0} | \bm{A}, \gl)}{\prod_{k=1}^M \sqrt{n_k! m_k!}} \lhaf(\bm{A}_{\nl \oplus \ml}, \gl_{\nl \oplus \ml}),
        		\label{eq:lhaf_offd}
        	\end{align}
            where $\nl = (n_1,\dots, n_M)$, $\ml = (m_1,\dots, m_M)$ are strings of non-negative integers. 
            
            This expression is similar to Eq.~\eqref{eq:gbs_probability_distribution}, however we cannot readily apply the methods we developed in Sec.~\ref{sec:trace_formulas} to compute the corresponding loop Hafnian. This is because terms of the form $\lhaf(\bm{A}_{\nl \oplus \ml}, \gl_{\nl \oplus \ml})$ with $\ml\neq\nl$ do not appear in the loop Hafnian master theorem, Eq.~\eqref{eq:lhaf_master}. Nevertheless, we can use the invariance of the loop Hafnian under permutations in order to avoid this limitation. 
            
            Our strategy relies on the following proposition. Let $T=\sum_{k=1}^M(n_k + m_k)$. There exist a permutation matrix $\bm{P}$ of size $T\times T$, the same size as $\bm{A}_{\nl\oplus\ml}$; a string of non-negative integers $\bm{t}$, whose length we will specify later; and a symmetric matrix $\bm{A}'$, whose size we will also specify later, such that
            \begin{align}
                &\bm{P}\bm{A}_{\nl\oplus\ml}\bm{P}^{\mathrm{T}}=\bm{A}'_{\bm{t}\oplus\bm{t}},\,\text{if }T\text{ is even}\label{eq:permutation_trick_even}\\
                &\left[\bm{P}\bm{A}_{\nl\oplus\ml}\bm{P}^{\mathrm{T}}\right]\oplus(1)=\bm{A}'_{\bm{t}\oplus\bm{t}}\,\text{if }T\text{ is odd}.
                \label{eq:permutation_trick_odd}
            \end{align}
            
            In order to motivate these expressions, let us consider a simple example. Suppose that $M=2$, so $\bm{A}$ is a $4\times4$ matrix:
            \begin{equation}
                \bm{A} =
                \left(\! 
                \begin{array}{cc|cc}
                    A_{11}&A_{12}&A_{13}&A_{14}\\
                    A_{12}&A_{22}&A_{23}&A_{24}\\
                    \hline
                    A_{13}&A_{23}&A_{33}&A_{34}\\
                    A_{14}&A_{24}&A_{34}&A_{44}\\
                \end{array} 
                \!\right).
                \label{eq:example_matrix_1}
            \end{equation}
            Let $\nl=(0,2)$ and $\ml=(1,1)$. This implies that $\bm{A}_{\nl\oplus\ml}$ also has size $4\times4$, and reads
            \begin{equation}
                \bm{A}_{\nl\oplus\ml} =
                \left(\! 
                \begin{array}{cc|cc}
                    A_{22}&A_{22}&A_{23}&A_{24}\\
                    A_{22}&A_{22}&A_{23}&A_{24}\\
                    \hline
                    A_{23}&A_{23}&A_{33}&A_{34}\\
                    A_{24}&A_{24}&A_{34}&A_{44}\\
                \end{array} 
                \!\right).
                \label{eq:example_matrix_2}
            \end{equation}
            
            Consider now the following permutations of $\bm{A}_{\nl\oplus\ml}$.
            \begin{align}
                \left(\! 
                \begin{array}{cccc}
                    A_{22}&A_{22}&A_{23}&A_{24}\\
                    A_{22}&A_{22}&A_{23}&A_{24}\\
                    A_{23}&A_{23}&A_{33}&A_{34}\\
                    A_{24}&A_{24}&A_{34}&A_{44}\\
                \end{array} 
                \!\right)
                &\xrightarrow{\bm{P}_1}
                \left(\! 
                \begin{array}{cccc}
                    A_{22}&A_{22}&A_{24}&A_{23}\\
                    A_{22}&A_{22}&A_{24}&A_{23}\\
                    A_{23}&A_{23}&A_{34}&A_{33}\\
                    A_{24}&A_{24}&A_{44}&A_{34}\\
                \end{array} 
                \!\right)\nonumber\\
                &\mkern-80mu \xrightarrow{\bm{P}_2}
                \left(\! 
                \begin{array}{cccc}
                    \tikzmark{left1}{$A_{22}$}\tikzmark{right1}{$A_{22}$}&A_{22}&\tikzmark{left2}{$A_{24}$}\tikzmark{right2}{$A_{24}$}&A_{23}\\
                    A_{22}&A_{22}&A_{24}&A_{23}\\
                    \tikzmark{left3}{$A_{24}$}\tikzmark{right3}{$A_{24}$}&A_{24}&\tikzmark{left4}{$A_{44}$}\tikzmark{right4}{$A_{44}$}&A_{34}\\
                    A_{23}&A_{23}&A_{34}&A_{33}\\
                \end{array}
                \!\right).
                \Highlight[first]{1}
                \Highlight[first]{2}
                \Highlight[first]{3}
                \Highlight[first]{4}
                \label{eq:example_matrix_3}
            \end{align}
            As can be seen, $\bm{P}_1$ swaps the third and fourth columns, while $\bm{P}_2$ swaps the third a fourth rows.
            
            We now determine the photon detection pattern $\bm{t}$. First, let us define $\bar{\bm{t}}\equiv(\min(n_1,m_1),\dots, \min(n_M,m_M))$. In our example, this vector reads $\bar{\bm{t}}=(0,1)$. Next, we create a string with $\frac{1}{2}\sum_{k}|m_k-n_k|$ new modes, and we assign one photon to each of them. We then construct $\bm{t}$ by appending this string of ones to $\bar{\bm{t}}$. In our example, $\frac{1}{2}\sum_{k}|m_k-n_k|=1$, so we create only one new mode, and our final string takes the form $\bm{t}=(0,1,1)$.
        
            The logic behind the definition of $\bm{t}$ can be better understood by noticing that the highlighted sub-blocks in the second line of Eq.~\eqref{eq:example_matrix_3} correspond precisely to the different blocks of $A_{\bar{\bm{t}}\oplus\bar{\bm{t}}}$:
            \begin{equation}
                A_{\bar{\bm{t}}\oplus\bar{\bm{t}}} = 
                \left(\! 
                \begin{array}{c|c}
                    A_{22} & A_{24} \\
                    \hline
                    A_{24} & A_{44}
                \end{array} 
                \!\right).
                \label{eq:example_matrix_4}
            \end{equation}
            The string of ones attached to $\bar{\bm{t}}$ accounts for the remaining rows and columns of matrix $\bm{P}_2\bm{P}_1\bm{A}_{\nl\oplus\ml}\bm{P}_1^{\mathrm{T}}\bm{P}_2^{\mathrm{T}}$.
            
            We may now define the $6\times 6$ matrix 
            \begin{equation}
                \bm{A}' =
                \left(\! 
                \begin{array}{ccc|ccc}
                    \tikzmark{left1}{$A_{11}$}&A_{12}&B_1&\tikzmark{left2}{$A_{13}$}&A_{14}&B_2\\
                    A_{12}&\tikzmark{right1}{$A_{22}$}&A_{22}&A_{23}&\tikzmark{right2}{$A_{24}$}&A_{23}\\
                    B_1&A_{22}&A_{22}&B_2&A_{24}&A_{23}\\
                    \hline
                    \tikzmark{left3}{$A_{13}$}&A_{23}&B_2&\tikzmark{left4}{$A_{33}$}&A_{34}&B_3\\
                    A_{14}&\tikzmark{right3}{$A_{24}$}&A_{24}&A_{34}&\tikzmark{right4}{$A_{44}$}&A_{34}\\
                    B_2&A_{23}&A_{23}&B_3&A_{34}&A_{33}
                \end{array} 
                \!\right),
                \Highlight[second]{1}
                \Highlight[second]{2}
                \Highlight[second]{3}
                \Highlight[second]{4}
                \label{eq:example_matrix_5}
            \end{equation}
            where $B_1, B_2, B_3$ are arbitrary complex numbers. Notice that the highlighted sub-blocks of this matrix are precisely the blocks of matrix $\bm{A}$, defined in Eq.~\eqref{eq:example_matrix_1}. We can readily check that 
            \begin{equation}
                \bm{P}_2\bm{P}_1\bm{A}_{\nl\oplus\ml}\bm{P}_1^{\mathrm{T}}\bm{P}_2^{\mathrm{T}}=\bm{A}'_{\bm{t}\oplus\bm{t}}.
                \label{eq:example_reorganization}
            \end{equation}
        
            In a more general case, a $2M\times 2M$ adjacency matrix, $\bm{A}$, describing a Gaussian state can be written as a block matrix of the form~\cite{hamilton2017gaussian, kruse2019detailed} 
            \begingroup
            \renewcommand*{\arraystretch}{1.5}
            \begin{align}
                \bm{A} = 
                \left(\! 
                \begin{array}{c|c}
                    \bm{B}  &\bm{C}\\
                    \hline
                    \bm{C}^{\mathrm{T}}  &\bm{B}^*
                \end{array} 
                \!\right),
                \label{eq:adj_mat_block}
            \end{align}
            \endgroup
            where $\bm{B}$, $\bm{C}$ have size $M\times M$, and $\bm{B}$ is symmetric. The corresponding $\bm{A}_{\nl\oplus\ml}$ can also be written as a block matrix:
            \begingroup
            \renewcommand*{\arraystretch}{1.5}
            \begin{align}
                \bm{A}_{\nl\oplus\ml} = 
                \left(\! 
                \begin{array}{c|c}
                    \bm{B}_{\nl} & \bm{C}_{\nl,\ml}\\
                    \hline
                    [\bm{C}^{\mathrm{T}}]_{\ml,\nl} &\bm{B}^*_{\ml}
                \end{array} 
                \!\right).
                \label{eq:adj_mat_block_2}
            \end{align}
            \endgroup
            Here, $\bm{B}_{\bm{n}}$ has size $(\sum_kn_k)\times(\sum_kn_k)$, and it is constructed from $\bm{B}$ by repeating its $k$th row and column $n_k$ times. On the other hand, $\bm{C}_{\nl,\ml}$ has size $(\sum_kn_k)\times(\sum_km_k)$ and it is constructed from $\bm{C}$ by repeating its $k$th row $n_k$ times, and its $k$th column $m_k$ times. $\bm{B}_{\bm{m}}^*$ and $[\bm{C}^{\mathrm{T}}]_{\nl,\ml}$ have analogous definitions.
        
            By setting $\bar{\bm{t}}\equiv(\min(n_1,m_1),\dots,\min(n_M,m_M))$, we can see that we can always find a permutation matrix $\bm{P}$ such that
            \begin{align}
                \bm{P}\bm{A}_{\nl\oplus\ml}\bm{P}^{\mathrm{T}} = 
                \left(\!
                \begin{array}{cc|cc}
                    \quad\tikzmark{left1}{$\bm{B}_{\bar{\bm{t}}}$} \tikzmark{right1}{$\bm{B}_{\bar{\bm{t}}}$}& \quad\vdots & \quad\tikzmark{left2}{$\bm{C}_{\bar{\bm{t}}}$} \tikzmark{right2}{$\bm{C}_{\bar{\bm{t}}}$} & \quad \vdots\\
                    \quad\cdots & \quad\ddots & \quad\cdots & \quad\ddots \\
                    \hline
                    \quad\tikzmark{left3}{$\bm{C}^{\mathrm{T}}_{\bar{\bm{t}}}$} \tikzmark{right3}{$\bm{C}^{\mathrm{T}}_{\bar{\bm{t}}}$} & \quad\vdots & \quad \tikzmark{left4}{$\bm{B}^*_{\bar{\bm{t}}}$} \tikzmark{right4}{$\bm{B}^*_{\bar{\bm{t}}}$} & \quad \vdots\\
                    \quad\cdots & \quad\ddots & \quad\cdots & \quad\ddots \\
                \end{array}
                \!\right).
                \Highlightt[third]{1}
                \Highlightt[third]{2}
                \Highlightt[third]{3}
                \Highlightt[third]{4}
                \label{eq:adj_mat_block_3}
            \end{align}
            If $T=\sum_k(n_k+m_k)$ is even, the remaining $\sum_k|m_k-n_k|$ rows and columns (an even number of them) can be distributed symmetrically, so as to make the four blocks of $\bm{P}\bm{A}_{\nl\oplus\ml}\bm{P}^{\mathrm{T}}$ have equal size. If $T$ is odd, so will be $\sum_k|m_k-n_k|$, and we must add one row and one column to $\bm{P}\bm{A}_{\nl\oplus\ml}\bm{P}^{\mathrm{T}}$ in order to obtain a block matrix with four blocks of equal size. For reasons that will become clear shortly, we will add a row and a column whose last entries are equal to $1$, while the rest of them are equal to $0$, that is, we will write $[\bm{P}\bm{A}_{\nl\oplus\ml}\bm{P}^{\mathrm{T}}]\oplus(1)$. 
            
            We can now define a new detection pattern $\bm{t}$, as
            \begin{align}
                \bm{t} = \bar{\bm{t}}\oplus \left(1\right)^{\oplus\frac{1}{2}\sum_k|m_k-n_k|},\text{ if }T\text{ is even},\label{eq:new_pattern_def_even}\\
                \bm{t} = \bar{\bm{t}}\oplus \left(1\right)^{\oplus\frac{1}{2}\sum_k|m_k-n_k|+\frac{1}{2}},\text{ if }T\text{ is odd},
                \label{eq:new_pattern_def_odd}
            \end{align}
            and a $2M'\times 2M'$ symmetric matrix $\bm{A}'$ (with $2M'= 2M+\sum_k|m_k-n_k|$ for even $T$, or $2M'= 2M+\sum_k|m_k-n_k|+1$ for odd $T$) as
            \begin{align}
                \bm{A}' &={\renewcommand*{\arraystretch}{1.5}
                \left(\! 
                \begin{array}{c|c}
                    \bm{B}'  &\bm{C}'\\
                    \hline
                    (\bm{C}')^{\mathrm{T}}  &\bm{D'}
                \end{array} 
                \!\right)}\nonumber\\
                &=
                \left(\!
                \begin{array}{cc|cc}
                    \quad\tikzmark{left1}{$\bm{B}$} \tikzmark{right1}{$\bm{B}$}& \quad\vdots & \quad\tikzmark{left2}{$\bm{C}$} \tikzmark{right2}{$\bm{C}$} & \quad \vdots\\
                    \quad\cdots & \quad\ddots & \quad\cdots & \quad\ddots \\
                    \hline
                    \quad\tikzmark{left3}{$\bm{C}^{\mathrm{T}}$} \tikzmark{right3}{$\bm{C}^{\mathrm{T}}$} & \quad\vdots & \quad \tikzmark{left4}{$\bm{B}^*$} \tikzmark{right4}{$\bm{B}^*$} & \quad \vdots\\
                    \quad\cdots & \quad\ddots & \quad\cdots & \quad\ddots \\
                \end{array}
                \!\right),
                \Highlightt[fourth]{1}
                \Highlightt[fourth]{2}
                \Highlightt[fourth]{3}
                \Highlightt[fourth]{4}
                \label{eq:adj_mat_block_4}
            \end{align}
            where $\bm{B}'$, $\bm{C}'$, $\bm{D}'$ have size $M'\times M'$, and $\bm{B}'$, $\bm{D}'$ are symmetric. 
            
            Eqs.~\eqref{eq:permutation_trick_even} and~\eqref{eq:permutation_trick_odd} follow from choosing the elements in the rows and columns that complement the highlighted sub-blocks in Eq.~\eqref{eq:adj_mat_block_4} according to the structure of $\bm{P}\bm{A}_{\nl\oplus\ml}\bm{P}^{\mathrm{T}}$ or $[\bm{P}\bm{A}_{\nl\oplus\ml}\bm{P}^{\mathrm{T}}]\oplus(1)$, just as we did in the illustrative example. If some component of $\bar{\bm{t}}$ is equal to zero, we can fill the corresponding elements of $\bm{A}'$ in an arbitrary way, keeping in mind the symmetry of the whole matrix.
    
            Using a similar argument, we can show that there is a loop vector $\gl'$, of length $2M'$, such that 
            \begin{align}
                \bm{P}\gl_{\nl\oplus\ml}=\gl'_{\bm{t}\oplus\bm{t}}\text{  or  }[\bm{P}\gl_{\nl\oplus\ml}]\oplus(1)=\gl'_{\bm{t}\oplus\bm{t}}
                \label{eq:permutation_trick_loop_vector}
            \end{align}
            This equation, along with Eqs.~\eqref{eq:permutation_trick_even} and~\eqref{eq:permutation_trick_odd}, allow us to use the invariance of the loop Hafnian under permutations, Eq.~\eqref{eq:lhafP}, to see that
            \begin{align}
                \lhaf(\bm{A}_{\nl \oplus \ml}, \gl_{\nl \oplus \ml}) &= \lhaf(\bm{P}\bm{A}_{\nl \oplus \ml}\bm{P}^{\mathrm{T}}, \bm{P}\gl_{\nl \oplus \ml})\nonumber\\
                &=\lhaf(\bm{A}'_{\bm{t}\oplus\bm{t}},\gl'_{\bm{t}\oplus\bm{t}}).
                \label{eq:off_diag_lhaf_new_even}
            \end{align}
            for $T$ even, while
            \begin{align}
                \lhaf&(\bm{A}_{\nl \oplus \ml}, \gl_{\nl \oplus \ml}) = \lhaf(\bm{P}\bm{A}_{\nl \oplus \ml}\bm{P}^{\mathrm{T}}, \bm{P}\gl_{\nl \oplus \ml})\nonumber\\
                &= \lhaf\left([\bm{P}\bm{A}_{\nl \oplus \ml}\bm{P}^{\mathrm{T}}]\oplus(1), [\bm{P}\gl_{\nl \oplus \ml}]\oplus(1)\right)\nonumber\\
                &=\lhaf(\bm{A}'_{\bm{t}\oplus\bm{t}},\gl'_{\bm{t}\oplus\bm{t}})
                \label{eq:off_diag_lhaf_new_odd}
            \end{align}
            for $T$ odd. Here, we used the fact that $\lhaf[\bm{A}\oplus(1)]=\lhaf(\bm{A})\lhaf[(1)]=\lhaf(\bm{A})$.
            
            We have succeeded in expressing $\lhaf(\bm{A}_{\nl \oplus \ml}, \gl_{\nl \oplus \ml})$ in a form that is compatible with the loop Hafnian master theorem. However, we note that $\bm{A}'$ and $\gl'$ do not generally represent a valid Gaussian state, i.e., they do not have the form of Eqs.~\eqref{eq:defA} and~\eqref{eq:defgamma}. Since our  loop Hafnian master theorem derivation in Sec.~\ref{sec:master_theorem} used $\bm{A}$ and $\gl$ from valid Gaussian states, it is not clear immediately that our derivation is valid in this case.
            In a previous draft of this work, we had motivated that the loop Hafnian master theorem should hold in this more general setting, but had stated this as an assumption.
            Following the release of a preprint of this work, a proof which showed that assumption was correct and that our loop Hafnian master theorem is valid for any symmetric matrix $\bm{A}$ and any vector $\gl$ has been given~\cite{tarasov2025derivation}.
    
            Let $|\bm{t}|=\sum_{k=1}^{M'} t_k$ (i.e., $|\bm{t}|=\frac{1}{2}\sum_{k=1}^M(n_k+m_k)$ if $T$ is even, or $|\bm{t}|=\frac{1}{2}\sum_{k=1}^M(n_k+m_k)+\frac{1}{2}$ if $T$ is odd). Then, following Eq.~\eqref{eq:finite_diff_lhaf}, we can write
            \begin{align}
                \lhaf&(\bm{A}_{\nl \oplus \ml}, \gl_{\nl \oplus \ml})=\left( \prod_{k=1}^{M'} D_{z_k}^{(t_k)} \right) f_{|\bm{t}|}(\bm{A}', \gl', \zl),
                \label{eq:off_diag_lhaf_final}
            \end{align}
            where it is understood that the $\zl$ in this expression has length $M'$. 

            Let us now use Eq.~\eqref{eq:off_diag_lhaf_final} to compute the matrix elements of a state that is prepared by the partial measurement of a multimode Gaussian state, $\hat{\varrho}$, parametrized by $\bm{A}$ and $\gl$. 
            
            Suppose that we divide the set of modes $[M]$ into two subsets $[M] = \mathcal{G}\cup \mathcal{H}$, such that $\mathcal{G}\cap \mathcal{H}=\emptyset$. The subset $\mathcal{H}$ will correspond to the modes that we will measure in order to generate the state, i.e., the \textit{heralding modes}. Our generated or \textit{heralded} state will be supported by the modes in $\mathcal{G}$. Without loss of generality, let us assume that $\mathcal{H}$ corresponds to the first $H$ modes of the system, i.e., $\mathcal{H}=\{1,\dots,H\}$, while $\mathcal{G}$ will be associated to the remaining $M-H$ modes, i.e., $\mathcal{G}=\{H+1,\dots, M\}$.

            The unnormalized state $\hat{\varrho}_{\mathcal{G}}$ obtained after the measurement of the photon number outcome $\tilde{\bm{n}}=(\tilde{n}_1,\dots,\tilde{n}_H)$ in the heralding modes is given by~\cite{quesada2019simulating}
            \begin{equation}
                \hat{\varrho}_{\mathcal{G}} = \sum_{\bm{u},\bm{v}}\langle \tilde{\bm{n}},\bm{v}|\hat{\varrho}|\tilde{\bm{n}},\bm{u}\rangle |\bm{v}\rangle\langle\bm{u}|,
                \label{eq:heralded_state_no_internal}
            \end{equation}
            where $|\bm{u}\rangle=|u_1,\dots,u_{M-H}\rangle$, $|\bm{v}\rangle=|v_1,\dots,v_{M-H}\rangle$ (with $u_k$,$v_k$ non-negative integers for every $k$) are Fock basis elements associated to the modes in $\mathcal{G}$, and $|\tilde{\bm{n}},\bm{u}\rangle=|\tilde{n}_1,\dots,\tilde{n}_H,u_{1},\dots,u_{M-H}\rangle$.

            According to Eq.~\eqref{eq:lhaf_offd}, we can compute the different matrix elements defining $\hat{\varrho}_{\mathcal{G}}$ as 
            \begin{align}
        		\langle\bm{v}&|\hat{\varrho}_{\mathcal{G}}|\bm{u} \rangle = \langle\tilde{\nl},\bm{v}|\hat{\varrho}|\tilde{\nl},\bm{u}\rangle \nonumber\\
                &=\Pr(\bm{0} | \bm{A}, \gl)\left(\prod_{k=1}^H\frac{1}{\tilde{n}_k!}\right)\left(\prod_{k=1}^{M-H}\frac{1}{ \sqrt{u_k! v_k!}}\right) \nonumber\\
                &\quad\quad\quad\times\lhaf\left[\bm{A}_{(\tilde{\nl}\oplus\bm{u}) \oplus (\tilde{\nl}\oplus\bm{v})}, \gl_{(\tilde{\nl}\oplus\bm{u}) \oplus (\tilde{\nl}\oplus\bm{v})}\right].
        		\label{eq:lhaf_heralded_offd_1}
        	\end{align}

            When computing the loop Hafnian in this relation, we need to define the corresponding $\bm{t}$, $\bm{A}'$ and $\gl'$ using, in principle, the complete photon number patterns $\tilde{\nl}\oplus\bm{u}$ and $\tilde{\nl}\oplus\bm{v}$. However, for any pair $\bm{u}$, $\bm{v}$, the integer string $\bm{t}$ will take the form 
            $\bm{t} = \tilde{\nl}\oplus\bm{r}$, where $\bm{r}=(r_1,\dots,r_{M'-H})$ is computed following the prescription of Eqs.~\eqref{eq:new_pattern_def_even} or~\eqref{eq:new_pattern_def_odd}, but using only $\bm{u}$ and $\bm{v}$ (instead of $\bm{m}$ and $\bm{n}$). This also means that the variations in the structures of $\bm{A}'$ and $\gl'$ will depend only on $\bm{u}$ and $\bm{v}$.

            On account of these considerations, and combining Eqs.~\eqref{eq:off_diag_lhaf_final} and~\eqref{eq:lhaf_heralded_offd_1}, we may write the matrix elements of the heralded state as
            \begin{align}
        		&\langle\bm{v}|\hat{\varrho}_{\mathcal{G}}|\bm{u} \rangle=\Pr(\bm{0} | \bm{A}, \gl)\left(\prod_{k=1}^H\frac{1}{\tilde{n}_k!}\right)\left(\prod_{k=1}^{M-H}\frac{1}{ \sqrt{u_k! v_k!}}\right) \nonumber\\
                &\times\left( \prod_{k=1}^{H} D_{z_k}^{(\tilde{n}_k)} \right) \left( \prod_{k=H+1}^{M'} D_{z_k}^{(r_{k-H})} \right)f_{\tilde{N}+|\bm{r}|}(\bm{A}', \gl', \zl),
        		\label{eq:lhaf_heralded_offd_2}
        	\end{align}
            where $\tilde{N}=\sum_{k=1}^H\tilde{n}_k$, $|\bm{r}|=\sum_{k=1}^{M'-H}r_k$, and $M'= M+\frac{1}{2}\sum_k|v_k-u_k|$ for even $\sum_k(v_k+u_k)$, or $M'= M+\frac{1}{2}\sum_k|v_k-u_k|+\frac{1}{2}$ for odd $\sum_k(v_k+u_k)$. Here, it is understood that $\zl$ has length $M'$, $\gl'$ has length $2M'$ and $\bm{A}'$ has size $2M'\times2M'$.  
           
        \subsection{\label{sec:off_diag_internal}Grouped measurement case}

            Suppose now that we group the heralding modes according to a set partition $\tilde{\Lambda}=\{\tilde{\Lambda}_1,\dots,\tilde{\Lambda}_L\}$ of $\mathcal{H}$, and that we condition the heralded state to the measurement of the coarse-grained detection pattern $\tilde{\bm{b}}=(\tilde{b}_1,\dots,\tilde{b}_L)$. This detection event is represented by the operator $\hat{\Pi}(\tilde{\bm{b}})$, which can be computed by summing over all the projectors $\{|\tilde{\bm{n}}\rangle\langle\tilde{\bm{n}}|\}$ whose corresponding $\tilde{\bm{n}}$ are compatible with $\tilde{\bm{b}}$, in the sense that $\tilde{b}_j=\sum_{i\in \tilde{\Lambda}_j}\tilde{n}_{i}$:
            \begin{equation}
                \hat{\Pi}(\tilde{\bm{b}})=\sum_{\tilde{\bm{n}}\in\mathcal{K}_{\tilde{\Lambda}}^{\tilde{\bm{b}}}}|\tilde{\bm{n}}\rangle\langle\tilde{\bm{n}}|.
                \label{eq:projector_onto_heralding_outcome}
            \end{equation}
            Here, the set $\mathcal{K}_{\tilde{\Lambda}}^{\tilde{\bm{b}}}$ is defined according to Eq.~\eqref{eq:coarse_grained_subset}.

            Using the above definitions, we can write the unnormalized heralded state, $\hat{\varrho}_{\mathcal{G}(\tilde{\Lambda})}$, as
            \begin{align}
                \hat{\varrho}_{\mathcal{G}(\tilde{\Lambda})} &= \mathrm{Tr}_{\mathcal{H}}\left[\left(\hat{\Pi}(\tilde{\bm{b}})\otimes\hat{\mathbb{I}}_{\mathcal{G}}\right) \hat{\varrho}\right]\nonumber\\
                &=\sum_{{\bm{u}},{\bm{v}}}\left(\sum_{\tilde{\bm{n}}\in\mathcal{K}_{\tilde{\Lambda}}^{\tilde{\bm{b}}}}\langle \tilde{\bm{n}},{\bm{v}}|\hat{\varrho}|\tilde{\bm{n}},{\bm{u}}\rangle \right)|{\bm{v}}\rangle\langle{\bm{u}}|,
                \label{eq:heralded_state_internal}
            \end{align}
            where the partial trace is taken over all the heralding modes, and $\hat{\mathbb{I}}_{\mathcal{G}}$ stands for the identity operator over the Hilbert space corresponding to $\mathcal{G}$. According to Eq.~\eqref{eq:lhaf_offd}, the density matrix elements defining this state can be computed as
            \begin{align}
        		\langle\bm{v}&|\hat{\varrho}_{\mathcal{G}(\tilde{\Lambda})}|\bm{u} \rangle = \sum_{\tilde{\bm{n}}\in\mathcal{K}_{\tilde{\Lambda}}^{\tilde{\bm{b}}}}\langle \tilde{\bm{n}},{\bm{v}}|\hat{\varrho}|\tilde{\bm{n}},{\bm{u}}\rangle \nonumber\\
                &=\Pr(\bm{0} | \bm{A}, \gl)\left(\prod_{k=1}^{M-H}\frac{1}{ \sqrt{u_k! v_k!}}\right) \nonumber\\
                &\;\times\sum_{\tilde{\bm{n}}\in\mathcal{K}_{\tilde{\Lambda}}^{\tilde{\bm{b}}}}\frac{\lhaf\left[\bm{A}_{(\tilde{\nl}\oplus\bm{u}) \oplus (\tilde{\nl}\oplus\bm{v})}, \gl_{(\tilde{\nl}\oplus\bm{u}) \oplus (\tilde{\nl}\oplus\bm{v})}\right]}{\prod_{j=1}^L\prod_{i\in\tilde{\Lambda}_j}\tilde{n}_i!}.
        		\label{eq:lhaf_heralded_offd_internal_1}
        	\end{align}

            To solve the problem of dealing with off-diagonal elements, we can define from $\bm{u}$, $\bm{v}$, just as we did before, a vector $\bm{t}=\tilde{\nl}\oplus\bm{r}$, an adjacency matrix $\bm{A}'$, and a loop vector $\gl'$ such that
            \begin{align}
                \lhaf&\left[\bm{A}_{(\tilde{\nl}\oplus\bm{u}) \oplus (\tilde{\nl}\oplus\bm{v})}, \gl_{(\tilde{\nl}\oplus\bm{u}) \oplus (\tilde{\nl}\oplus\bm{v})}\right] \nonumber\\
                & = \lhaf\left[\bm{A}'_{(\tilde{\nl}\oplus\bm{r}) \oplus (\tilde{\nl}\oplus\bm{r})}, \gl'_{(\tilde{\nl}\oplus\bm{r}) \oplus (\tilde{\nl}\oplus\bm{r})}\right].
                \label{eq:equality_lhaf_internal}
            \end{align}
            The dimensions of these quantities are the same as those defined below Eq.~\eqref{eq:lhaf_heralded_offd_2}.

            Another problem arises when considering the sum over the elements in $\mathcal{K}_{\tilde{\Lambda}}^{\tilde{\bm{b}}}$. Indeed, this sum keeps us from directly applying the finite-difference sieve formula for the loop Hafnian in order to obtain a result similar to Eq.~\eqref{eq:lhaf_heralded_offd_2}. Moreover, since we are grouping together only a subset of modes, while we are letting the rest of them loose, this sum also keeps us from directly writing $\langle\bm{v}|\hat{\varrho}_{\mathcal{G}(\tilde{\Lambda})}|\bm{u} \rangle$ in terms of a blocked loop Hafnian. Luckily, there is a quick workaround to address this second issue. 

            Let us extend the partition $\tilde{\Lambda}$ in order to include the remaining modes (taking into account the new modes defined along $\bm{r}$, $\bm{A}'$ and $\gl'$) in a way that captures the idea of them being ``loose''. Note that $\{\{H+1\},\dots,\{M'\}\}$ is a set partition of $\{H+1,\dots,M'\}$. This implies that $\Lambda'=\tilde{\Lambda}\cup\{\{H+1\},\dots,\{M'\}\}$ will be a set partition of $\mathcal{H}\cup \{H+1,\dots,M'\}$ (since $\mathcal{H}\cap \{H+1,\dots,M'\}=\emptyset$). By defining a modified coarse-grained detection pattern $\bm{b}'=\tilde{\bm{b}}\oplus\bm{r}$, we can see that 
            \begin{equation}
                \mathcal{K}_{\Lambda'}^{\bm{b}'}=\left\{\nl'\,\Big|\,\nl'=\tilde{\nl}\oplus\bm{r}\,\text{ and }\,\tilde{\nl}\in\mathcal{K}_{\tilde{\Lambda}}^{\tilde{\bm{b}}}\right\},
                \label{eq:compatible_set_extended_partition}
            \end{equation}
            which, in turn, allows us to write
            \begin{align}
                &\sum_{\tilde{\bm{n}}\in\mathcal{K}_{\tilde{\Lambda}}^{\tilde{\bm{b}}}}\frac{\lhaf\left[\bm{A}_{(\tilde{\nl}\oplus\bm{u}) \oplus (\tilde{\nl}\oplus\bm{v})}, \gl_{(\tilde{\nl}\oplus\bm{u}) \oplus (\tilde{\nl}\oplus\bm{v})}\right]}{\prod_{j=1}^L\prod_{i\in\tilde{\Lambda}_j}\tilde{n}_i!}\nonumber\\
                &=\left(\prod_{j=1}^{M'-H}r_j!\right)\sum_{\nl'\in\mathcal{K}_{\Lambda'}^{\bm{b}'}}\frac{\lhaf\left[\bm{A}'_{\nl' \oplus \nl'}, \gl'_{\nl' \oplus \nl'}\right]}{\prod_{j=1}^{L+M'-H}\prod_{i\in\Lambda'_j}n'_i!},
                \label{eq:sum_lhaf_extended_equality}
            \end{align}
            where we took into account that 
            \begin{equation}
                \prod_{j=1}^{L+M'-H}\prod_{i\in\Lambda'_j}n'_i! = \left(\prod_{j=1}^L\prod_{i\in\tilde{\Lambda}_j}\tilde{n}_i!\right)\left(\prod_{j=1}^{M'-H}r_j!\right).
                \label{eq:extended_partition_product}
            \end{equation}

            Recalling the definition of the $\Lambda$-blocked loop Hafnian, Eq.~\eqref{eq:Lamb_lhaf_def}, we can recast Eq.~\eqref{eq:lhaf_heralded_offd_internal_1} as
            \begin{align}
        		\langle\bm{v}|\hat{\varrho}_{\mathcal{G}(\tilde{\Lambda})}|\bm{u}\rangle = \Pr(\bm{0} &| \bm{A}, \gl)\left(\prod_{k=1}^{M-H}\frac{1}{\sqrt{u_k! v_k!}}\right) \left(\prod_{k=1}^{L}\frac{1}{\tilde{b}_k!}\right)\nonumber\\
                &\times \lhaf_{\Lambda'}\left(\bm{A}',\gl',\tilde{\bm{b}}\oplus\bm{r}\right),
        		\label{eq:lhaf_heralded_offd_internal_2}
        	\end{align}
            where we noticed that 
            \begin{equation}
                \prod_{k=1}^{L+M'-H}b'_k!=\prod_{k=1}^L\tilde{b}_k!\prod_{k=1}^{M'-H}r_k!.
                \label{eq:another_extended_partition_product}    
            \end{equation}

            We may now apply the finite-difference sieve formula for the blocked loop Hafnian and write the following final expression for $\langle\bm{v}|\hat{\varrho}_{\mathcal{G}(\tilde{\Lambda})}|\bm{u}\rangle$:
            \begin{align}
        		&\langle\bm{v}|\hat{\varrho}_{\mathcal{G}(\tilde{\Lambda})}|\bm{u}\rangle = \Pr(\bm{0}| \bm{A}, \gl)\left(\prod_{k=1}^{L}\frac{1}{\tilde{b}_k!}\right)\left(\prod_{k=1}^{M-H}\frac{1}{\sqrt{u_k! v_k!}}\right)\nonumber\\
                &\times \left( \prod_{k=1}^{L} D_{w'_k}^{(\tilde{b}_k)} \right) \left( \prod_{k=L+1}^{L+M'-H} D_{w'_k}^{(r_{k-L})} \right)f_{\tilde{N}+|\bm{r}|}(\bm{A}', \gl', \bm{w}'),
        		\label{eq:lhaf_heralded_offd_internal_3}
        	\end{align}
            with $\tilde{N}=\sum_{j=1}^L\tilde{b}_j=\sum_{k=1}^H\tilde{n}_k$. Here, it is understood that $\bm{w}'$ has length $M'$, with $M'$ as defined below Eq.~\eqref{eq:lhaf_heralded_offd_2}, and that its components are computed according to the partition $\Lambda'$, following the prescription given below Eq~\eqref{eq:coarse_grained_subset}. 

            To conclude this section, let us write Eq.~\eqref{eq:lhaf_heralded_offd_internal_3} for the special case of a system with $M$ external modes and $K$ internal modes per external mode (like that shown in Fig.~\ref{fig:internal_modes_heralding}). Recall that a Gaussian state representing this system is parametrized by a $2MK\times2MK$ matrix $\bar{\bm{A}}$ and a $2MK$-length vector $\bar{\gl}$. 
            
            As before, consider that we have photon-number-resolving detectors that can only distinguish between the different external modes. We divide these external modes in two subsets
            $\mathcal{H}=\{1,\dots,H\}$ and $\mathcal{G}=\{H+1,\dots,M\}$. Conditioned to the measurement of the outcome $\tilde{\nl}=(\tilde{n}_1,\dots, \tilde{n}_H)$ in the external heralding modes (i.e., the external modes in $\mathcal{H}$), we will obtain an unnormalized heralded state, $\hat{\varrho}_{\mathcal{G}(K)}$, that will be supported by \textit{all the internal modes whose corresponding external modes are in $\mathcal{G}$}, i.e., the heralded state will be defined over a total of $(M-H)K$ modes. 
            
            Let $\tilde{\Lambda}(K)=\{\tilde{\Lambda}_1,\dots,\tilde{\Lambda}_H\}$ be a set partition of $[HK] = \{1,\dots,HK\}$, whose blocks are defined as $\tilde{\Lambda}_l = \{(l-1)K+1,\dots,lK\}$. Let $\bar{\bm{u}}=(\bar{u}_{H+1,1},\dots \bar{u}_{H+1, K})\oplus\cdots\oplus(\bar{u}_{M,1},\dots \bar{u}_{M, K})$ and $\bar{\bm{v}}=(\bar{v}_{H+1,1},\dots \bar{v}_{H+1, K})\oplus\cdots\oplus(\bar{v}_{M,1},\dots \bar{v}_{M, K})$. Following the prescription of Eqs.~\eqref{eq:new_pattern_def_even} or~\eqref{eq:new_pattern_def_odd}, using $\bar{\bm{u}}$ and $\bar{\bm{v}}$, we can define a new integer string $\bar{\bm{r}}=(\bar{r}_1,\dots,\bar{r}_{\bar{M}-HK})$, which, in turn, allows us to define an extended partition $\bar{\Lambda}(K)=\tilde{\Lambda}(K)\cup\{\{HK+1\},\dots,\{\bar{M}\}\}$, and an extended coarse-grained detection pattern $\tilde{\nl}\oplus\bar{\bm{r}}$. In these expressions, $\bar{M}= MK + \frac{1}{2}\sum_{k=H+1}^M\sum_{l=1}^K|\bar{v}_{k,l}-\bar{u}_{k,l}|$ if $\sum_{k=H+1}^M\sum_{l=1}^K(\bar{v}_{k,l}+\bar{u}_{k,l})$ is even, or $\bar{M}= MK + \frac{1}{2}\sum_{k=H+1}^M\sum_{l=1}^K|\bar{v}_{k,l}-\bar{u}_{k,l}|+\frac{1}{2}$ if $\sum_{k=H+1}^M\sum_{l=1}^K(\bar{v}_{k,l}+\bar{u}_{k,l})$ is odd. We also define the new $2\bar{M}\times2\bar{M}$ adjacency matrix $\bar{\bm{A}'}$, and $2\bar{M}$-length loop vector $\bar{\gl}'$ associated to $\bar{\bm{r}}$.

            With these definitions in place, we may write
            \begin{align}
        		&\frac{\langle\bar{\bm{v}}|\hat{\varrho}_{\mathcal{G}(K)}|\bar{\bm{u}}\rangle}{\Pr(\bm{0}| \bar{\bm{A}}, \bar{\gl})} = \left(\prod_{k=1}^{H}\frac{1}{\tilde{n}_k!}\right)\left(\prod_{k=H+1}^{M}\prod_{l=1}^K\frac{1}{\sqrt{\bar{u}_{k,l}! \bar{v}_{k,l}!}}\right)\nonumber\\
                &\times \left( \prod_{k=1}^{H} D_{w_k}^{(\tilde{n}_k)} \right) \left( \prod_{k=H+1}^{\bar{M}-H(K-1)} \!\!D_{w_k}^{(\bar{r}_{k-H})} \right)f_{\tilde{N}+|\bar{\bm{r}}|}(\bar{\bm{A}}', \bar{\gl}', \bar{\bm{w}}).
        		\label{eq:lhaf_heralded_offd_internal_special}
        	\end{align}
            In this relation $\bar{\bm{w}}$ has length $\bar{M}$, and it can be written as 
            \begin{align}
                \bar{\bm{w}}=(w_1,\dots,w_1)&\oplus\cdots\oplus(w_H,\dots,w_H)\nonumber\\
                &\oplus \left(w_{H+1},\dots, w_{\bar{M}-H(K-1)}\right),
                \label{eq:w_vec_internal}
            \end{align}
            where each repeated $w_l$ appears a total of $K$ times.
            
    \section{\label{sec:numerical}Numerical experiments}

        In this section we show some applications of the results we have presented so far. First, we study the generation of Fock states in the presence of internal modes. Then, we show an example of the heralding of approximate Gottesman-Kitaev-Preskill (GKP) states. Finally, we will compute the total photon number probability distributions for some Gaussian Boson Sampling (GBS) setups.

        \subsection{Internal mode Fock state generation}

            \begin{figure}[!t]
                {
                  \includegraphics[width=\columnwidth]{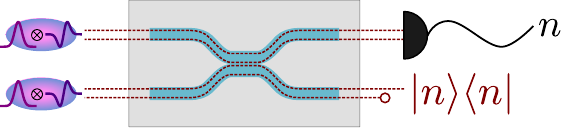}%
                }
                \caption{Setup for the generation of an arbitrary Fock state, $|n\rangle$. Two single-mode squeezed states are sent into a balanced beam-splitter. Upon the measurement of $n$ photons in one of the output \textit{ports} of the beam-splitter, the state $|n\rangle$ is heralded in remaining output port. Here, we consider that the input squeezed states are spectrally impure (as indicated by the multicolor ellipse representing the gaussian state), i.e., the light entering each port is a mixture of squeezed states in two mutually orthogonal spectral modes. As a consequence, our system has two internal (spectral) modes per external mode (ports of the beam-splitter). These internal modes are represented by dashed red lines. We filter out the second spectral mode in the heralding output port by tracing it out (as represented by the circle in the bottom red, dashed line).} 
                \label{fig:fock_herald_setup}
            \end{figure}

            \begin{figure*}[!t]
                {
                  \includegraphics[width=\textwidth]{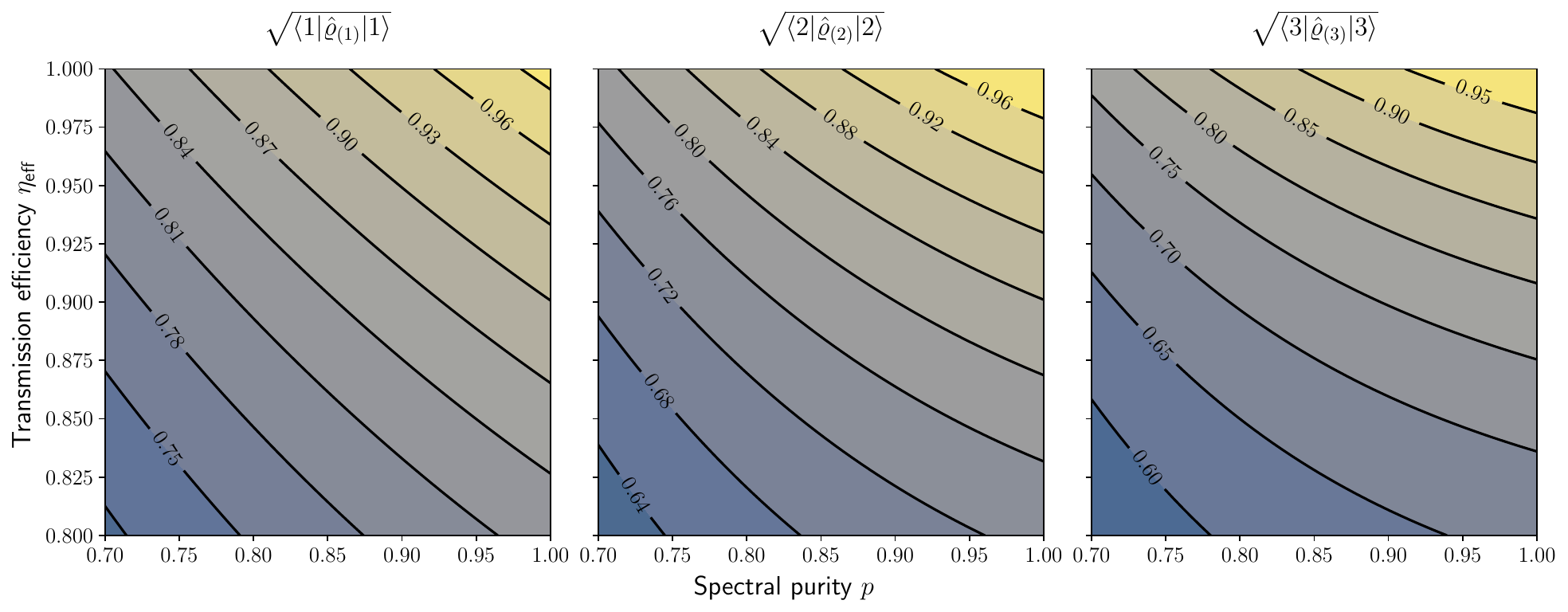}%
                }
                \caption{Fidelity between the noisy (including photon loss and spectral impurity) state, $\hat{\varrho}_{(n)}$, and the target state $|n\rangle\langle n|$, for $n=1,2,3$, as a function of the parameters $p$ and $\eta_{\text{eff}}$. The squeezing parameter of the input states was set to $\xi=1.1$. As can be seen in all cases, the fidelity decays more rapidly with increasing losses, while the generated states are more robust to spectral impurities. We can also see that the higher the photon number state, the more sensitive to both of these sources of noise.}
                \label{fig:fidelity_fock}
            \end{figure*}
            
            Suppose that we want to generate an arbitrary Fock state, $|n\rangle$, from the partial measurement of a two-mode squeezed state. In order to do this, we can use the following simple setup (see Fig.~\ref{fig:fock_herald_setup}). Two non-displaced single-mode squeezed states with the same real squeezing parameter, $\xi$, are sent into a balanced beam-splitter represented by the transmission matrix
            \begin{equation}
                \bm{U}=\frac{1}{\sqrt{2}}
                \begin{pmatrix}
                    1&i\\
                    i&1
                \end{pmatrix}.
                \label{eq:beam_splitter_unitary}
            \end{equation}
            The state at the output of this two-mode interferometer is the desired two-mode squeezed state. Upon the measurement of $n$ photons in \textit{any} of the output ports of the beam-splitter, the state $|n\rangle$ is heralded in remaining output port.

            Suppose now that the input single-mode squeezed states are not \textit{spectrally pure}, i.e., the state of the light in each input port of the beam-splitter corresponds to a mixture of squeezed states in different (mutually orthogonal) spectral modes. These spectral modes will correspond to the internal modes of the system. For simplicity, we will assume that there are only two of these modes. Furthermore, suppose that the light suffers from uniform losses during its transition through the beam-splitter. This implies that the corresponding transmission matrix will have the form $\bm{T}=\sqrt{\eta_{\text{eff}}}\bm{U}$, with $\eta_{\text{eff}}\in[0,1]$ a parameter describing the transmission efficiency. We will simulate how the spectral impurity of the input light, and the presence of losses affects the generation of the desired Fock state.

            We will model the spectral impurity of the input light by assuming that we have four input single-mode squeezed states, two for each spectral mode. The corresponding squeezing parameters are $(\xi_1, \xi_2, \xi_1',\xi_2')$. For this particular example, we have $\xi_1=\xi_2=\xi$, $\xi_1'=\xi_2'=\xi'$. The parameters $\{\xi_k'\}$ correspond to the second spectral mode, and the states with squeezing parameters $(\xi_k,\, \xi_k')$ will be associated to the same port (external mode) $k$. The different $\{\xi_k'\}$ are related to the $\{\xi_k\}$ according to the relation
            \begin{equation}
                \tanh^2(\xi_k')=\left(\frac{1-p}{p}\right)\tanh^2(\xi_k)
                \label{eq:relation_between_sqz_par},
            \end{equation}
            where $p\in(0, 1]$ is a parameter characterizing the spectral purity of the input light. We also assume that all $\xi_k$, $\xi_k'$ are positive.

            The rationale behind the definition of the $\{\xi_k'\}$ is the following. Suppose that $|\xi_k\rangle$, $|\xi_k'\rangle$ are two-mode squeezed states of the form
            \[|\xi_k\rangle=\sum_{n}\chi_{kn}|n\rangle_1|n\rangle_2\,,\;|\xi_k'\rangle=\sum_{m}\chi_{km}'|m\rangle_3|m\rangle_4,\] 
            with $\chi_{kn}=i^n\tanh^n(\xi_k)/\cosh(\xi_k)$ ($\chi_{km}'$ is similarly defined). The subscript $j$ in $|n\rangle_j$ labels one of the modes in which the squeezed states are defined. Upon the measurement of one photon in \textit{either} mode 2 \textit{or} mode 4, and after tracing out mode $3$, the state of the light in mode 1 takes the form $p|1\rangle\langle1| + (1-p)|0\rangle\langle0|$, with $p = \tanh^2(\xi_k) / [\tanh^2(\xi_k) + \tanh^2(\xi_k')]$. If we think of the state $|\xi_k\rangle|\xi_k'\rangle$ as a spectrally impure two-mode squeezed state, where the \textit{internal} (spectral) modes $(1,3)$ correspond to an \textit{external} mode $\bar{1}$, while $(2,4)$ correspond to a external mode $\bar{2}$, we can interpret $p$ as the probability of successfully heralding one photon in the external mode $\bar{1}$ and the spectral mode 1, when one photon is detected in the external mode $\bar{2}$. We can see that only when $\xi_k'=0$ (i.e. when we have a spectrally pure squeezed state) is $p=1.0$, while $p=0.5$ if $\xi_k=\xi_k'$.

            The covariance matrix of the input light can be written as a direct sum of the covariance matrices of the first and second spectral modes $\tilde{\bm{\Sigma}}_0 = \bm{\Sigma}_0\oplus\bm{\Sigma}_0'$. Since the two internal modes are mutually orthogonal, there will not be any interference between them after the propagation through the beam-splitter. Thus, we can write the transmission matrix of the four-mode system as $\tilde{\bm{T}}=\bm{T}\oplus\bm{T}$. By defining $\bm{W}=\tilde{\bm{T}}^*\oplus\tilde{\bm{T}}$, we can write the covariance matrix of the output light as
            \begin{equation}
                \bm{\Sigma} = \bm{P}\left[\bm{W}\tilde{\bm{\Sigma}}_0\bm{W}^{\dagger}+\frac{1}{2}\left(\mathbb{I}-\bm{W}\bm{W}^\dagger\right)\right]\bm{P}^{\mathrm{T}},
                \label{eq:output_cov_internal}
            \end{equation}
            where $\bm{P}$ is a permutation matrix that allows us to change the ordering of modes from $(\xi_1, \xi_2, \xi_1',\xi_2')$ to $(\xi_1, \xi_1', \xi_2, \xi_2')$. In this way, we have made the spectral modes corresponding to the same external mode adjacent to each other.

            With the definition of $\bm{\Sigma}$ in place, the definition of the adjacency matrix, $\bm{A}$, of the four mode system readily follows. Since the input light is non-displaced, the loop vector of the output light, $\bm{\gamma}$, identically vanishes.

            Given the ordering of modes dictated by Eq.~\eqref{eq:output_cov_internal}, and labeling them as $\{1,2,3,4\}$, we can define the partition $\bar{\Lambda}=\{\{1,2\},\{3\},\{4\}\}$ (which means that we choose to measure the first output port of the beam-splitter), and use Eq.~\eqref{eq:lhaf_heralded_offd_internal_special}, along with $\tilde{\bm{n}}=(n)$, to compute the state in modes (3, 4). We want to herald the desired Fock state in the first spectral mode of the second output port (i.e., mode 3), so we will filter out mode 4 by tracing it out. Let us call $\hat{\varrho}_{(n)}$ the (normalized) heralded state in mode 3. We compute the density matrix elements $\langle m'|\hat{\varrho}_{(n)}|n'\rangle$ with $m'$, $n'$ up to $n_{\text{cutoff}} = 15$ (taking into account that the presence of losses in the heralding modes can make the generated state have support over Fock states with $n'>n$). 
            
            To assess the difference between $\hat{\varrho}_{(n)}$ and the target Fock state $|n\rangle\langle n|$ we compute the fidelity
            \begin{equation}
                F_{\text{Fock}} = \sqrt{\langle n|\hat{\varrho}_{(n)}|n\rangle}
                \label{eq:fidelity_between_states_fock}
            \end{equation}
            as a function of $p\in[0.7, 1.0]$ and $\eta_{\text{eff}}\in[0.8, 1.0]$ (considering 50 samples for each of these parameters). The results of this computation for $\xi=1.1$ are shown in Fig.\ref{fig:fidelity_fock}. We carried out all our computations using Eq.~\eqref{eq:lhaf_heralded_offd_internal_special}.

            As can be seen, the generation of Fock states is more sensitive to photon losses that to spectral impurities. Indeed, for $p=1.0$ and $\eta_{\text{eff}}$ as high as $0.9$, we have that $F_{\text{Fock}}=\{0.90, 0.84, 0.79\}$ for $n=\{1,2,3\}$, while for $p=0.9$ and $\eta_{\text{eff}}=1.0$, we obtain $F_{\text{Fock}}= \{0.949, 0.943, 0.943\}$ for the same values of $n$. Moreover, the results reveal that the higher the photon-number state, the more sensitive to both photon losses and spectral impurities. Take for instance the case $p=0.7$, $\eta_{\text{eff}}$. For $n=1$ we have $F_{\text{Fock}} = 0.714$; for $n=2$, $F_{\text{Fock}} = 0.616$; and for $n=3$, $F_{\text{Fock}} = 0.560$. This suggest that there is a special need to reduce these sources of noise at the moment of generating photonic states having significant support over elements of the Fock state basis with large values of $n$.

            This illustrative example shows that we can apply our techniques to study how the presence of internal modes affects the generation of \textit{discrete-variable} photonic states. Indeed, we can think of the two-mode system we presented here as an imperfect source of photon-number states. Coupling many of these sources to other passive linear optical elements, we can study the influence of noise on the generation of larger photonic states, such as Bell or GHZ states~\cite{zhang2008demonstration, wiesner2024influence}. We believe that the speed-up offered by our techniques for computing blocked loop Hafnians paves the way to studying these type discrete-variable systems in more effective way.

        \subsection{Internal mode GKP state generation}

            Consider now the preparation of the approximate GKP state
            \begin{align}
                |\bar{0}_{\Delta}\rangle &= \hat{S}(0.196)\left(0.661|0\rangle-0.343|2\rangle + 0.253 |4\rangle\right.\nonumber\\
                &\left.-0.368|6\rangle +0.377|8\rangle+0.323|10\rangle+0.365|12\rangle\right),
                \label{eq:target_gkp_state}
            \end{align}
            where $\hat{S}(r)$ is the single-mode squeezing operator. This state has 96.9\% fidelity to the \textit{normalizable} GKP state $|0_{\Delta}\rangle$ with $\Delta=10\, \text{dB}$~\cite{tzitrin2024gkp} (see Ref.~\cite{tzitrin2020progress} for a definition of the normalizable GKP states).
            
            To generate this state, we will use a photonic system with three \textit{external} modes following a setup similar to the one described in Ref.~\cite{tzitrin2024gkp}. A set of three non-displaced, single-mode squeezed states, with real squeezing parameters $(\xi_1, \xi_2, \xi_3)$, are sent into a linear interferometer described by a $3\times 3$ transmission matrix $\bm{T}$. The output light of the interferometer is partially measured in the first two ports of the interferometer. This measurement heralds the desired GKP state in the third output port (see Fig.~\ref{fig:gkp_herald_setup} for an illustration of this setup). The specific values of the squeezing parameters and transmission matrix that are needed to generate the desired GKP state in the ideal (lossless) case (for which $\bm{T}=\bm{U}$ is a unitary matrix), as well as the covariance matrix of the corresponding Gaussian state before measurement, can be found in Ref.~\cite{tzitrin2024gkp}. The detection pattern in the heralding external modes is $\tilde{\bm{n}}=(5, 7)$.

            \begin{figure}[!t]
                {
                  \includegraphics[width=\columnwidth]{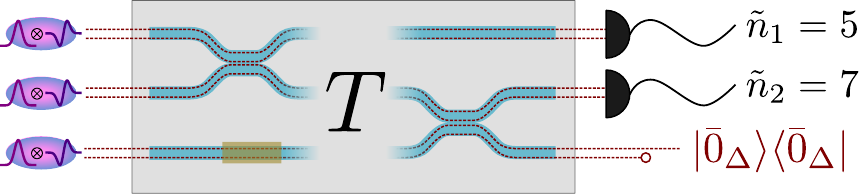}%
                }
                \caption{Setup for the generation of the approximate GKP state $|\bar{0}_{\Delta}\rangle$. A set of three single-mode squeezed states is sent into a linear interferometer represented by a transmission matrix $\bm{T}$. Upon the measurement of the photon-number pattern $\tilde{\bm{n}}=(5,7)$ in the first and second output ports of the interferometer, a state close in fidelity to $|\bar{0}_{\Delta}\rangle$ is heralded in the third output port. In our simulation, we consider that the input squeezed states are spectrally impure (as indicated by the multicolor ellipse representing the gaussian state), i.e., the light entering each port is a mixture of squeezed states in two mutually orthogonal spectral modes. As a consequence, our system has two internal (spectral) modes per external mode (ports of the interferometer). These internal modes are represented by dashed red lines. We filter out the second spectral mode in the third output port by tracing it out (as represented by the circle in the bottom red, dashed line).} 
                \label{fig:gkp_herald_setup}
            \end{figure}    

            Just as in the previous section, we will assume that the input squeezed states are not spectrally pure (they may be in a mixture of two mutually orthogonal spectral modes). Moreover, we will assume that the interferometer has uniform losses, so the transmission matrix will have the form $\bm{T}=\sqrt{\eta_{\text{eff}}}\bm{U}$, with $\bm{U}$ the unitary transmission matrix of the ideal case. We will simulate how the spectral impurity of the input light, and the presence of losses affects the generation of the approximate GKP state. 

            We will model the spectral impurity of the input light by assuming that we have six input single-mode squeezed states, three for each spectral mode, with squeezing parameters $(\xi_1, \xi_1', \xi_2, \xi_2',\xi_3, \xi_3')$. The relation between the parameters $\{\xi_k'\}$ and $\{\xi_k\}$ is dictated by Eq.~\eqref{eq:relation_between_sqz_par}. The adjacency matrix, $\bm{A}$, of the Gaussian state before measurement can be computed following a procedure analogous to the one described in the previous section. In this case, the loop vector of the output Gaussian state, $\gl$, also vanishes.  
            
            Labeling the modes of the system as $\{1,2,3,4,5,6\}$, and keeping in mind the ordering $(\xi_1, \xi_1', \xi_2, \xi_2',\xi_3, \xi_3')$, we can define the partition $\bar{\Lambda}=\{\{1,2\},\{3,4\},\{5\},\{6\}\}$ and use Eq.~\eqref{eq:lhaf_heralded_offd_internal_special}, with $\tilde{\bm{n}}=(5, 7)$, to compute the density matrix elements of the heralded state in modes (5, 6). We are interested in heralding the GKP state in the first spectral mode (i.e., mode 5), and so, in our simulations, we filter out mode 6 by tracing it out. Let $\hat{\varrho}_{\text{GKP}}$ be the heralded state in mode 5. We will compute the density matrix elements $\langle m|\hat{\varrho}_{\text{GKP}}|n\rangle$ with $m$, $n$ up to $n_{\text{cutoff}} = 26$ (as was done in Ref.~\cite{tzitrin2024gkp}).

            \begin{figure}[!t]
                {
                  \includegraphics[width=\columnwidth]{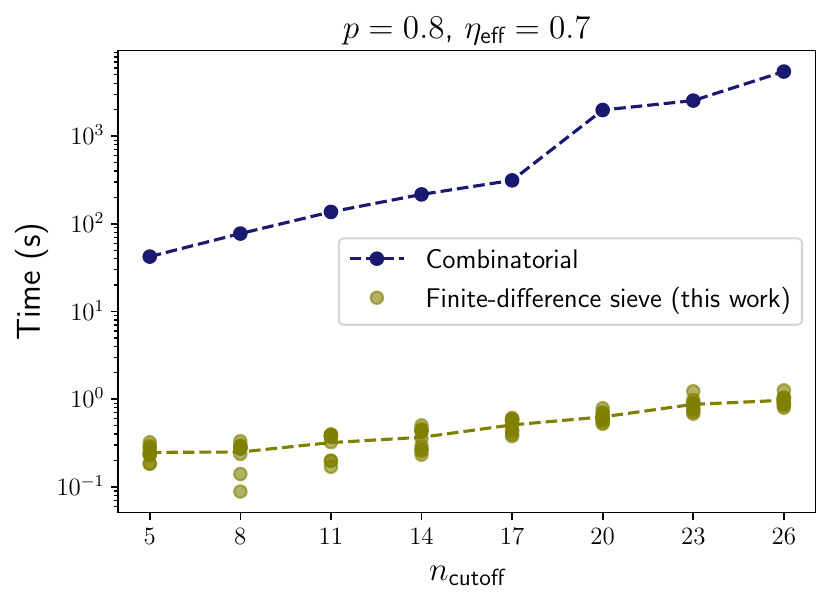}%
                }
                \caption{Time of computation of $\hat{\varrho}_{\text{GKP}}$ as a function of the photon number cutoff, $n_{\text{cutoff}}$, for $p = 0.8$ and $\eta_{\text{eff}}=0.7$. The dark blue, dashed line with circles corresponds to the combinatorial definition of the blocked loop Hafnian. The olive green circles correspond to ten runs of the finite-difference sieve expression for the density matrix elements of $\hat{\varrho}_{\text{GKP}}$. The dashed, olive green line indicates the average value of the ten runs. The vertical axis is in logarithmic scale. We can see that there is a clear advantage in using the finite-difference sieve version of the blocked loop Hafnian.} 
                \label{fig:gkp_time_comparison}
            \end{figure}
            
            Before showing how the presence of spectral impurities and losses affect the generation of approximate GKP states, it is worth noting that Eq.~\eqref{eq:lhaf_heralded_offd_internal_2} suggests that we can use $n_{\text{cutoff}}$ to parametrize the time it takes to calculate $\hat{\varrho}_{\text{GKP}}$. This is not only due to the fact that $n_{\text{cutoff}}$ determines the number of entries in the density matrix, but also because this cutoff determines the time complexity of the blocked loop Hafnians that take the longest to compute. On this account, let us illustrate the advantage of using the finite-difference sieve version of the blocked loop Hafnian (like that appearing in Eq.~\eqref{eq:lhaf_heralded_offd_internal_special}) over the original definition of Eq.~\eqref{eq:Lamb_lhaf_def} (which we will refer to as the \textit{combinatorial} blocked loop Hafnian in the remaining of this section) in order to compute $\hat{\varrho}_{\text{GKP}}$.

            We arbitrarily set $p = 0.8$, $\eta_{\text{eff}}=0.7$ and compute $\hat{\varrho}_{\text{GKP}}$ using both the combinatorial and finite-difference sieve expressions of the blocked loop Hafnian. The comparison of the time it takes to compute the density matrix using these methods for $n_{\text{cutoff}}\in\{5, 8, 11, 14, 17, 20, 23, 26\}$ is shown in Fig.~\ref{fig:gkp_time_comparison}. We carried out our computations in a 64-core CPU with two AMD EPYC 7532 (Zen 2) processors with 2.4 GHz clock speed. The computation of the different density matrix elements for the combinatorial case were done in  parallel using up to 50 cores, and the calculation of the loop and blocked loop Hafnians involved was further optimized using the \texttt{numba} library~\cite{lam2015numba}. The density matrices obtained with both methods were equal within an absolute tolerance of $10^{-8}$. 
            
            As can be seen in Fig.~\ref{fig:gkp_time_comparison}, there is a clear advantage in using the finite-difference sieve version of the blocked loop Hafnian. For $n_{\text{cutoff}}=5$, which corresponds to the fastest computation of the density matrix, the combinatorial method takes around $42.29\,\text{s}$ to compute, while the finite-difference sieve method takes on average $0.25\,\text{s}$. For $n_{\text{cutoff}}=26$ the difference is even more striking; the combinatorial method takes close to 1.6 hours, while finite-difference sieve method takes on average $0.97\, \text{s}$. This represents a reduction of three orders of magnitude in the time of computation.

            \begin{figure}[!ht]
                {
                  \includegraphics[width=0.9\columnwidth]{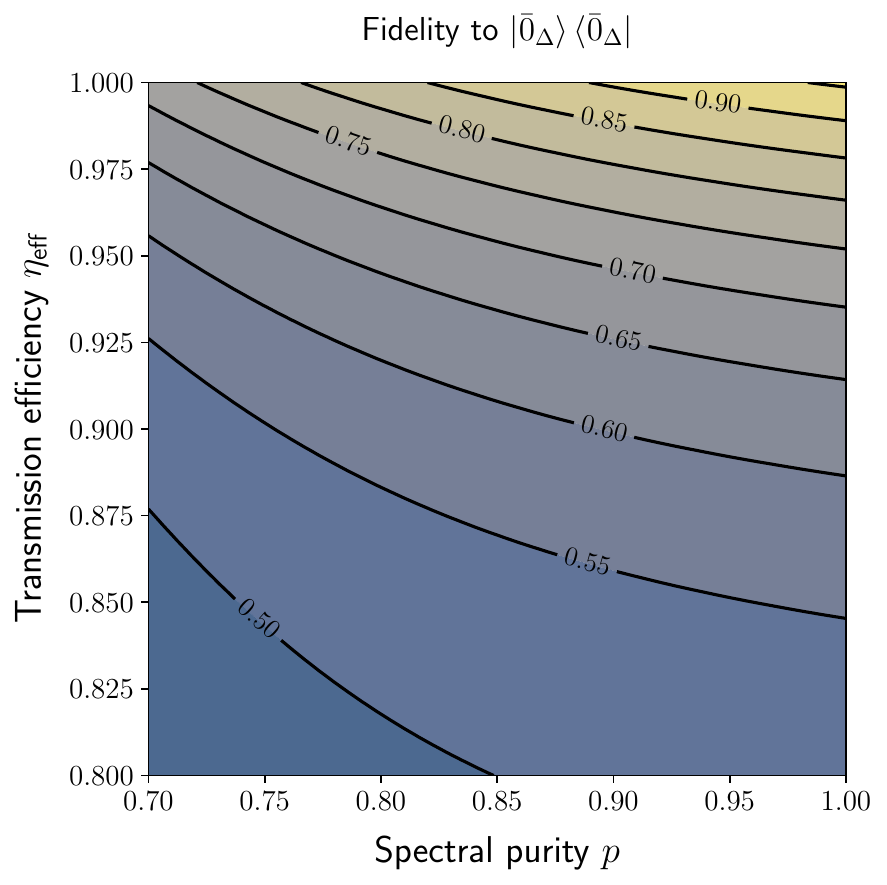}%
                }
                \caption{Fidelity between the noisy (including photon loss and spectral impurity) state $\hat{\varrho}_{\text{GKP}}$, and the target state $|\bar{0}_{\Delta}\rangle\langle\bar{0}_{\Delta}|$ as a function of the parameters $p$ and $\eta_{\text{eff}}$. As can be seen, the fidelity decays sharply with increasing losses, while the generated states seem to be much more robust to spectral impurities.}
                \label{fig:fidelity_gkp}
            \end{figure}

            We move on now to illustrate how the presence of losses and spectral impurities affects the generated GKP states. To do this, we compute the fidelity between the (now normalized) heralded state $\hat{\varrho}_{\text{GKP}}$, for $p\in[0.7,1.0]$ and $\eta_{\text{eff}}\in[0.8,1.0]$, and the target (normalized) state $|\bar{0}_{\Delta}\rangle\langle\bar{0}_{\Delta}|$,
            \begin{equation}
                F_{\text{GKP}} = \sqrt{\langle\bar{0}_{\Delta}|\hat{\varrho}_{\text{GKP}}|\bar{0}_{\Delta}\rangle}.
                \label{eq:fidelity_between_states_gkp}
            \end{equation}
            
            Fig.~\ref{fig:fidelity_gkp} shows the result of this computation for $n_{\text{cutoff}}=26$. We considered 50 samples for each of the parameters $p$, $\eta_{\text{eff}}$. All the density matrices involved in the calculation of the fidelity were obtained using Eq.~\eqref{eq:lhaf_heralded_offd_internal_special}.

            The maximum value of the fidelity is $F_{\text{GKP}}=0.96$ for $p=1.0$,  $\eta_{\text{eff}}=1.0$. For decreasing transmission, or increasing photon loss, we find that the fidelity sharply decreases for all values of $p$. For instance, for $p=1.0$ and $\eta_{\text{eff}}=0.97$, the fidelity drops to $F_{\text{GKP}}=0.81$. For $p=0.8$ and $\eta_{\text{eff}}$ as high as $0.8$ the fidelity is as low as $F_{\text{GKP}}=0.49$. On the other hand, the heralded states seem to be much more robust to the presence of spectral impurities. For $\eta_{\text{eff}}=1.0$ and $p=0.9$ we find that $F_{\text{GKP}}=0.91$, while for $p$ as low as $0.8$ the fidelity takes values close to $F_{\text{GKP}}=0.83$. This represents a gentler decrease than that obtained with increasing photon losses.

            To further demonstrate the effect of losses and spectral impurities on the generation of the targeted GKP states, Fig.~\ref{fig:wigner_func_gkp} shows the Wigner function of $\hat{\varrho}_{\text{GKP}}$ for $p\in\{1.0,0.9,0.7\}$ and $\eta\in\{1.0,0.95\}$. These distributions were computed using $n_{\text{cutoff}}=26$, and $300$ samples for both the $P$ and $Q$-quadratures.
            
            We can confirm that the generated states are much more robust to spectral impurities than to transmission losses. This is revealed by noticing the reduction in the absolute value of the maximum and minimum of the Wigner function when $p$, $\eta$ are modified. For $\eta=1.0$, $p=1.0$, we have a maximum of $0.317$, and a minimum of $-0.188$. For $\eta=1.0$, $p=0.9$, these values decrease to $0.279$ and $-0.165$, respectively. This represents a reduction of about $12\%$. For $\eta=1.0$, $p=0.7$, the maximum takes the value $0.178$, and the minimum is $-0.088$, which represents a reduction in the absolute values of $44\%$ and $53\%$, respectively. On the other hand, when $p=1.0$ and $\eta=0.95$, we already see the maximum drop to $0.126$, and the minimum take the value of $-0.092$, representing changes of $60\%$ and $51\%$, respectively. Keeping $\eta=0.95$ and decreasing $p$ only worsens the reduction in the absolute values.  

            \begin{figure*}[!ht]
                {
                  \includegraphics[width=0.9\textwidth]{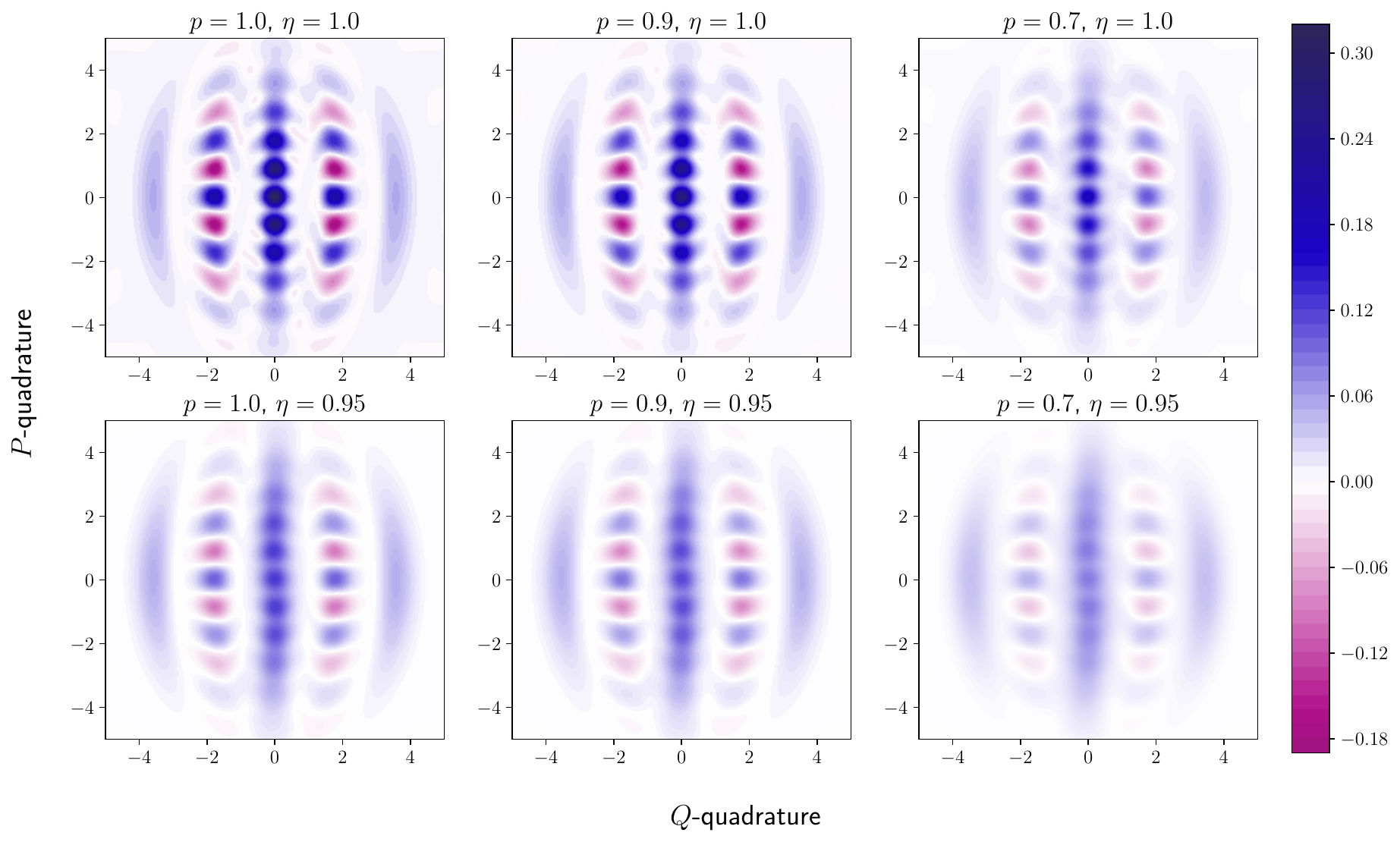}%
                }
                \caption{Wigner function of the noisy (including photon loss and spectral impurity) state $\hat{\varrho}_{\text{GKP}}$ for $p\in\{1.0,0.9,0.7\}$ and $\eta\in\{1.0,0.95\}$. As can be seen, the generated state is highly sensitive to photon losses, while they seem to be much more robust to spectral impurities.}
                \label{fig:wigner_func_gkp}
            \end{figure*}

            Previous studies (see for instance Ref.~\cite{tzitrin2020progress}) have also found a sharp decrease in the fidelity of the approximate GKP states obtained using lossy photonic architectures similar to that Fig.~\ref{fig:gkp_herald_setup}. It has been suggested that using physical models including photon losses, and subsequently optimizing over the various parameters that define these photonic circuits (e.g., squeezing parameters or transfer matrices) could be a strategy to alleviate the drastic reduction of fidelity. Even though the generated states seem to be more robust to spectral impurities than to photon losses, one could also use a similar optimization strategy to account for the effects of having spectrally impure squeezed light as input of the circuits generating the targeted GKP states. We believe that our results open the door for carrying out this type of optimization more efficiently, even for systems with large number of modes.
            
        \subsection{GBS total photon number distributions}

            Consider a GBS setup similar to the one depicted in Fig.~\ref{fig:coarse_grained_probs}. A set of $M$ input non-displaced single-mode squeezed states (this time considered to be spectrally pure) are sent into a linear interferometer represented by a $M\times M$ transmission matrix $\bm{T}$. This matrix is generally non-unitary as it may include the effects of losses and other sources of noise in the interferometer. The output light of the interferometer is described by a $M$-mode Gaussian state with adjacency matrix $\bm{A}$ and loop vector $\bm{\gamma}=\bm{0}$, and is measured using photon-number-resolving detectors. We want to compute the probability of measuring a total of $N$ photons across all detectors (which can be viewed as grouping the detectors according to the set partition $\Lambda=\{\{1,\dots,M\}\}$).
            
            Previous studies on the estimation of these type of distributions have successfully avoided the need of computing a blocked loop Hafnian directly from its combinatorial definition (see for instance~\cite{bressanini2023gaussian, dellios2024validation}). For Fock boson sampling this is also the case, as it has been shown that there exist efficient algorithms to compute binned photon-number distributions~\cite{seron2022efficient}, circumventing the need to deal with sums of a combinatorial number of elements.            
            
            Notably, phase space simulation methods have been applied to the computation of total number of photons and total number of clicks probability distributions of GBS implementations using photon-number-resolving and threshold detectors, respectively~\cite{dellios2022simulating, drummond2022simulating, dellios2024validation}. These strategies consist in expressing the desired distributions as an expectation value over a suitable phase space. Their computation require the definition of a normally ordered operator representing the detection event, and a \textit{positive} quasiprobability distribution that can be randomly sampled. Then, one can estimate photon-number distributions using averages of functions of these samples. For the case of the GBS setup described above, one needs to use the \textit{positive $P$-distribution}, defined over a complex phase space. We refer the interested reader to Refs.~\cite{ dellios2022validation, drummond2022simulating, dellios2024validation} for details about the definition of this quasiprobability distribution. 

            We are interested in comparing the performance of computing total photon-number distributions using the strategies we presented in Sec.~\ref{sec:total_ph_prob} with respect to phase space techniques. In particular, we will consider the methods described in Ref.~\cite{dellios2024validation}.

            Suppose that the input squeezed states have real squeezing parameters $\{\xi_j\}$. In terms of the positive $P$-distribution, $P(\bm{\alpha},\bm{\beta})$, the probability of measuring $N$ photons across all detectors can be written as 
            \begin{align}
                \Pr(N|\bm{A}) = \mathrm{Re}\left[\int\frac{\left(n'\right)^{N}}{N!}e^{-n'}P(\bm{\alpha},\bm{\beta})\,d\mu(\bm{\alpha},\bm{\beta})\right],
                \label{eq:coarse_grained_positive_P}
            \end{align}
             where $d\mu(\bm{\alpha}, \bm{\beta})$ is an integration measure over the $2M$-dimensional complex space of variables $\bm{\alpha}=(\alpha_1,\dots,\alpha_M)$ and $\bm{\beta}=(\beta_1,\dots,\beta_M)$, and $n' = \sum_{j=1}^M\alpha_j'(\beta_j')^*$, with $\bm{\alpha}'=\bm{T}\bm{\alpha}$, $\bm{\beta}'=\bm{T}\bm{\beta}$. Random samples following distribution $P(\bm{\alpha},\bm{\beta})$ can be generated by writing
             \begin{align}
			         \alpha_j = u_j\sqrt{\tfrac{1}{2}(\bar{n}_j+\bar{m}_j)} + iv_j\sqrt{\tfrac{1}{2}(\bar{n}_j-\bar{m}_j)},\\
			         \beta_j = u_j\sqrt{\tfrac{1}{2}(\bar{n}_j+\bar{m}_j)} - iv_j\sqrt{\tfrac{1}{2}(\bar{n}_j-\bar{m}_j)},
		          \label{eq:sampling_amplitudes}
	        \end{align}
            where $u_j$, $v_j$ are real Gaussian random variables satisfying $\mathbb{E}[u_jv_k] = 0$, $\mathbb{E}[u_ju_k] = \mathbb{E}[v_jv_k] = \delta_{jk}$ (with $\mathbb{E}[\cdot]$ indicating an expectation value with respect to the Gaussian distribution), and $\bar{n}_j=\sinh^2\xi_j$, $\bar{m}_j=\frac{1}{2}\sinh(2\xi_j)$.

            In the following, we will suppose that all the input squeezed states have the same squeezing parameter $\xi$, and that the transmission matrix takes the form $\bm{T}=\sqrt{\eta_{\text{eff}}}\bm{U}$, where $\eta_{\text{eff}}$ is a parameter describing the transmission efficiency of the interferometer and $\bm{U}$ is a \textit{Haar-random} unitary matrix. The computation of $\Pr(N|\bm{A})$ for these type of setups can be done analytically, which gives us a reference to to test our numerical results (see Ref.~\cite{deshpande2022quantum} for the complete expression of this distribution). 

            From Eq.~\eqref{eq:coarse_grained_positive_P} it is apparent that the time complexity of estimating $\Pr(N|\bm{A})$ heavily relies on the number of samples $(\bm{\alpha}, \bm{\beta})$ that are needed to ensure convergence. According to Ref.~\cite{dellios2024validation}, we need a number of samples of the order of $10^{6}$ to attain relative errors lower than $10^{-3}$. However, we cannot assess the performance of phase space methods based on the number of samples needed to make a good enough estimation. This is because the methods we developed in the previous sections do not rely on random sampling. This can be seen as an advantage, as it ensures that the time complexity of our techniques rely only on parameters of the physical system (including those related to the grouping of detectors and the number of detected photons in the samples), as well as relieving the computed probabilities from the presence of statistical errors.

            \begin{figure}[!ht]
                {
                  \includegraphics[width=\columnwidth]{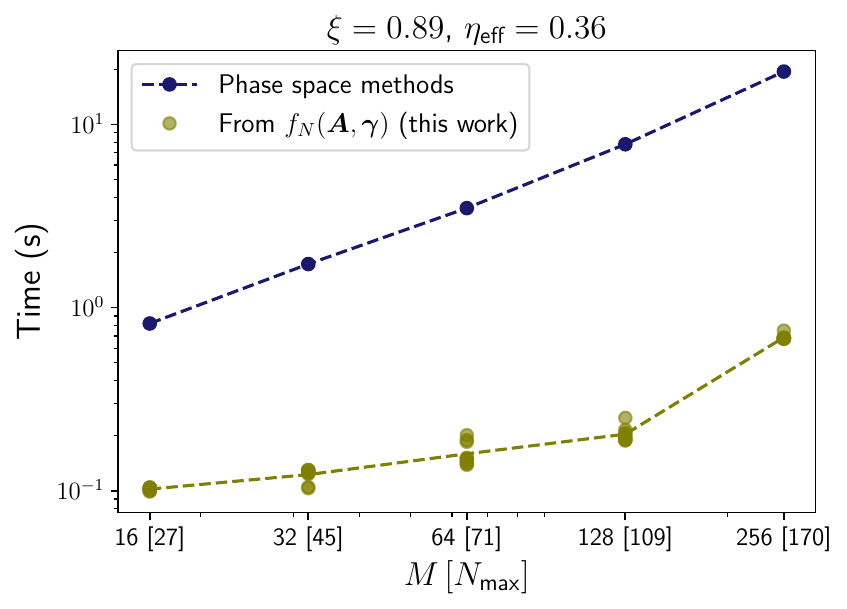}%
                }
                \caption{Time of computation of $\Pr(N|\bm{A})$ as a function of the number of modes, $M$, and its corresponding photon number cutoff, $N_{\text{max}}$ (the value of the cutoff is indicated within square brackets). The squeezing parameter was set to $\xi = 0.89$ and the transmission efficiency to $\eta_{\text{eff}}=0.36$. The dark blue, dashed line with circles corresponds to a calculation using phase space techniques. The olive green circles correspond to ten runs of Eq.~\eqref{eq:total_photon_number_prob} with $Y=[M]$. The dashed, olive green line indicates the average value of the ten runs. Both the vertical and horizontal axes are in logarithmic scale. We can see that there is a clear advantage in computing the function $f_N(\bm{A},\gl)$ directly over estimating the distribution using phase space techniques.} 
                \label{fig:total_probs_time_comparison}
            \end{figure}

            A better option to evaluate the performance of these methods is to parametrize the time of computation in terms of the total number of photons, $N$. Indeed, the time complexity of phase space methods relies on this value via the operator $e^{-n'}(n')^{N}/N!$,
            which needs to be computed recursively in order to avoid numerical instabilities induced by the term $N!$ when the number of detected photons is large. The time complexity of our methods depends on $N$ via the function $f_N(\bm{A},\gl)$ (defined in Eq.~\eqref{eq:f_function}). Thus, it is reasonable to use this value to parametrize the time it takes to compute $\Pr(N|\bm{A})$. Even more, we can parametrize the runtime of computation in terms of a cutoff, $N_{\text{max}}$, that is chosen so that $\Pr(N > N_{\text{max}}|\bm{A}) < \epsilon$ with $\epsilon >0$. This means that we would be comparing the runtimes of computation of the distribution $\Pr(N|\bm{A})$ in the most relevant part of its support.

            \begin{figure}[!t]
                {
                  \includegraphics[width=\columnwidth]{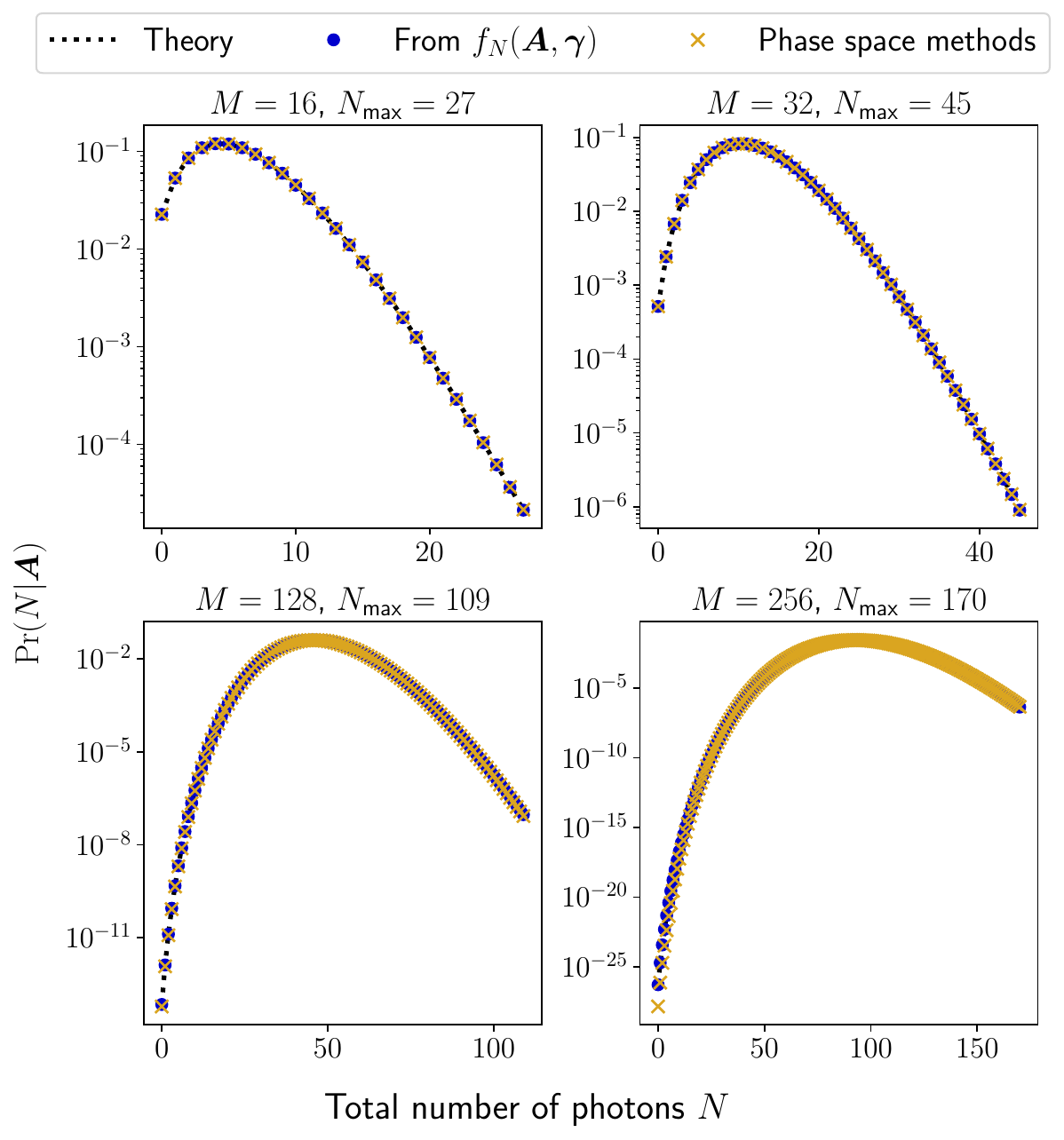}%
                }
                \caption{Total photon number distribution $\Pr(N|\bm{A})$ as a function of $N$ for $M\in \{16, 32, 128, 256\}$. The corresponding values of $N_{\text{max}}$ are $\{27, 45, 109, 170\}$. The vertical axis is in logarithmic scale. The blue circles are the results of computing Eq.~\eqref{eq:total_photon_number_prob} with $Y=[M]$. The yellow crosses were estimated using phase space methods with $2.4\times10^6$ samples. The black dotted line corresponds to the theoretical predictions. For all distributions $\xi=0.89$ and $\eta_{\text{eff}}=0.36$. We can see that there is an excellent agreement between the numerical computations and the theoretical predictions.} 
                \label{fig:total_photon_distributions}
            \end{figure}
            
            Since we are considering a system where we have input light in every mode of the interferometer, and all the squeezed states have the same squeezing parameter, $\xi$, we can change $N_{\text{max}}$ by either varying $\xi$ or modifying the number of modes, $M$. Here, we will keep $\xi$ fixed, and modify $N_{\text{max}}$ by changing the number of modes. 
            
            Fig.~\ref{fig:total_probs_time_comparison} shows the comparison of the time it takes to compute $\Pr(N|\bm{A})$ using both phase space methods and Eq.~\eqref{eq:total_photon_number_prob} (we show the results of ten runs of this equation), for $\xi=0.89$ and $\eta_{\text{eff}}=0.36$. We used the values $M\in\{16,32,64,128,256\}$, which correspond to $N_{\text{max}}\in\{27, 45, 71, 109, 170\}$. The different $N_{\text{max}}$ were computed by setting $\epsilon$ to the order of $10^{-7}$. The estimation of $\Pr(N|\bm{A})$ using phase space methods was done using the parallelized \texttt{MATLAB} implementation of Ref.~\cite{dellios2024validation}. The number of samples used was $2.4\times10^{6}$ for all values of $M$.
            Eq.~\eqref{eq:total_photon_number_prob} was computed using a \texttt{Python} implementation optimized using the \texttt{numba} library. We carried out our calculations using a 8-core, 16-thread CPU with a AMD Ryzen 9 8945HS processor with 4.0 GHz clock speed. Some of the resulting probability distributions are shown in Fig.~\ref{fig:total_photon_distributions}.

            We can readily notice that there is an advantage in using Eq.~\eqref{eq:total_photon_number_prob} over phase space methods for computing $\Pr(N|\bm{A})$. For $N_{\max}=27$ ($M=16$), phase space methods take around $0.819\,\text{s}$, while calculating $f_N(\bm{A},\gl)$ takes on average $0.102\,\text{s}$. The worst case runtime for computing this function for this value of $N_{\text{max}}$ is $0.104\,\text{s}$.
            For $N_{\max}=170$ ($M=256$) the phase space technique takes around $19.45\,\text{s}$, while Eq.~\eqref{eq:total_photon_number_prob} takes on average $0.69\,\text{s}$, with a worst case runtime of $0.75\,\text{s}$. This represents a speedup of about one order of magnitude. 

            \begin{figure}[!t]
                {
                  \includegraphics[width=\columnwidth]{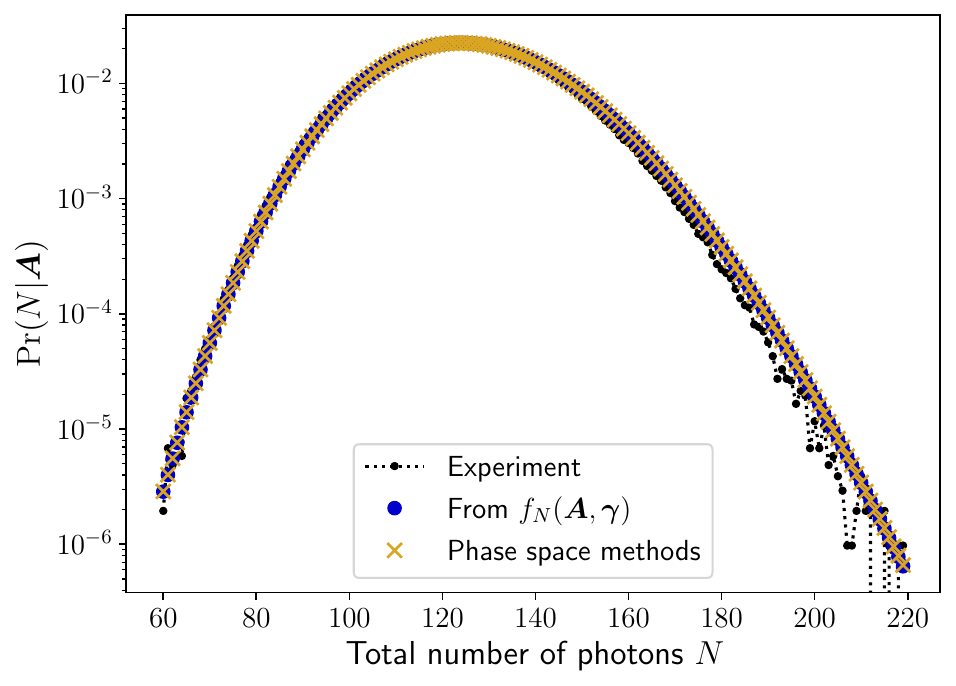}%
                }
                \caption{Total photon-number distribution for one of the configurations of the Borealis GBS experiment. This configuration corresponds to a system of 216 modes. The squeezing parameters take values between 1.09 and 1.12. The mean number of output photons is 124.35. The dotted line with points corresponds to the estimated distribution using close to $10^6$ experimental samples. The blue circles correspond to the computation using Eq.~\eqref{eq:total_photon_number_prob}, while the yellow crosses indicate the results of the estimation using phase space methods. The numerical computations were made using the squeezing parameters and the transfer matrix made publicly available by the authors of the experiment. The number of samples used for the estimation of $\Pr(N|\bm{A})$ using phase space methods was $2.4\times 10^6$. Here, we can see an excellent agreement between the two numerical methods.} 
                \label{fig:total_photon_distribution_borealis}
            \end{figure}

            Let us now apply our methods to a more general situation: the computation of the total photon-number distribution of one of the configurations of the Borealis experiment~\cite{madsen2022quantum}, a GBS implementation that has claimed to have achieved quantum computational advantage. One of the validation tests that can be used to build confidence in the fact that a GBS experiment truly demonstrates quantum advantage is to show that the experimentally estimated total photon-number distribution is sufficiently close to that of the expected theoretical model of the experiment, which is usually known as the ground truth of the experiment. This theoretical model takes into account all the noise and imperfections that the experimenters can account for, and it is usually defined by a either a covariance matrix, or a set of input squeezing parameters and a transmission matrix. Here, we will compute the ground truth total photon-number distribution using the information about the squeezing parameters and the transfer matrix of the experiment that the authors of the experiment made publicly available (see Ref.~\cite{madsen2022quantum} for a link to these data).

            We choose a configuration of Borealis with 216 modes, and input single-mode squeezed states in every one of them. The squeezing parameters take values between 1.09 and 1.12, which leads to a mean number of output photons of 124.35. The resulting photon number distribution has significant support up to $N_{\text{max}}=219$. Fig.~\ref{fig:total_photon_distribution_borealis} shows the computed $\Pr(N|\bm{A})$ using both phase space methods (with $2.4\times10^6$ samples) and Eq.~\eqref{eq:total_photon_number_prob}, as well as the estimated experimental distribution.

            As we can see, there is a very good agreement between the distributions obtained with phase space techniques, and by computing function $f_N(\bm{A},\bm{\gamma})$. There is however, a clear difference between the runtimes of these methods; the computation using Eq.~\eqref{eq:total_photon_number_prob} took about $1.05\,\text{s}$, while the phase space estimation took around $15.64\,\text{s}$. In addition, it is worth mentioning once more that computing $f_N(\bm{A},\bm{\gamma})$ directly shields the computed distribution from statistical uncertainties proper of phase space techniques.
    
    \section{\label{sec:conclusion} Conclusion}
    
        In this work we have shown that it is possible to significantly reduce the time complexity of computing the Fock state representation of both Gaussian and non-Gaussian states with internal modes. Equivalently, we have proven the reduction of complexity of computing coarse-grained photon-number distributions of Gaussian states. 
        
        More precisely, according to their direct definition, the computation of these type of photon-number statistics for a system with $M$ modes, and for detection events with $N$ total detected photons, is expected to have time complexity
        \begin{equation*}
            \mathcal{O}\left(N^32^N\prod_{l=1}^L\binom{K_l + b_l + 1}{b_l}\right),    
        \end{equation*}
        where $L$ is the number of bins (or blocks) that define the coarse-grained detection event, $K_l$ is the size of each bin, and $b_l$ is the number of photons detected in each bin. In the case of Gaussian states with $K$ internal modes per external mode, $L=M$ and $K_l=K$ for all $l$. We have shown that we can reduce the time of computation to 
        \begin{equation*}
            \mathcal{O}\left(\left(NM^3 + N^2\log N\right)\prod_{l=1}^L(b_l+1)\right). 
        \end{equation*}
        Our results are likely to be optimal up to improvements to the polynomial prefactor, as otherwise one could devise algorithms that would give an unexpected speedup to the computation of quantities like the permanent, which have been studied for many decades in computational complexity theory.

        To illustrate the aforementioned reduction of complexity, we simulated the generation of approximate GKP states using a three-mode photonic circuit in the presence of photon losses and spectral impurities of the light sources. Supposing that the state of the light sources could be in a mixture of two spectral modes, we found that our techniques compute the density matrices of the desired GKP states (up to 26 elements of the Fock states basis) approximately $10^3$ times faster than combinatorial methods. This speedup allowed us to map the reduction of fidelity of the generated GKP states caused by these common sources of noise. Even though this example corresponds to a system with moderate size, we can readily apply our results to the simulation of systems with a larger number of modes. 

        In addition, we computed the total photon-number distribution of a number of GBS implementations, including one of the configurations of the Borealis experiment. We compared the performance of our methods with respect to phase space strategies, which are known to avoid the need of dealing with sums of a combinatorial number of elements. We found that our methods are approximately $10$ times faster than phase space techniques at the task of computing the total photon-number distribution of GBS up to 219 detected photons.  

        We have also illustrated the way our methods can be applied to study how the presence of internal modes affects the generation of discrete-variable photonic states. Indeed, we have shown that our techniques can be used to study the influence of imperfect squeezed light sources on the generation of Fock states. In future work, we will scale-up this approach, and show that we may also simulate the influence of imperfect light sources on the generation of other types of photonic states, such as Bell or GHZ states. 
        
        From a more applied point of view, our results pave the way for performing constrained optimization of \emph{realistic} photonic circuits that contain photon losses and mode-mismatch. While artificial intelligence software has been used to solve idealized versions of these type of problems~\cite{ruiz2023digital}, their results are not necessarily that surprising~\cite{rudolph2023terry} and, perhaps more importantly, these idealized photonic circuits fail to incorporate the complexities and imperfections of real experiments.
        As argued by T. Rudoph~\cite{rudolph2023terry}, the main use of computational techniques in aiding the design of quantum photonic components and states is for them to perform \emph{constrained and robust} optimization that takes into account the physical restrictions associated with \textit{``imperfect multimode sources, losses, manufacturing imperfections in passive interferometers and so on''}.
        
        The theory and tools we introduced here constitute a first step in this direction, as they provide optimal algorithms to solve the problem of determining, given a set of parameters describing imperfections and tunable knobs, what is the outcome of a given heralding experiment. Our tools can then be immediately coupled with gradient free optimizers to perform inverse design and optimize any tunable parameters to maximize the performance of a given photonic circuit in preparing a target state. We believe that they can also be generalized to be differentiable~\cite{yao2024riemannian}, thus allowing the use of gradient based searches. This constitutes the first step in a line of future work.

            \emph{Addendum ---} While preparing this manuscript we became aware of the related and very recent work by Kleinpa{\ss} et al. \cite{kleinpass2024frequency} and Steinmetz et al. ~\cite{steinmetz2024simulating}. The former truncates the equivalent of Eq.~\eqref{eq:lhaf_master1} to leading order to deal with photon pair problems using differentiable programming techniques, while the latter uses first quantization techniques for dealing with internal modes for discrete variable circuits.

    \section*{Code availability}
        The code used in the computation of the results shown in Figs.~\ref{fig:gkp_time_comparison} to~\ref{fig:total_photon_distribution_borealis} is available at \url{https://github.com/polyquantique/blocked_hafnian}.

    \section*{Acknowledgments}
        J.M.-C. and N.Q. acknowledge the support from the Ministère de l’Économie et de l’Innovation du Québec and the Natural Sciences and Engineering Research Council of Canada. They also thank A.S. Dellios, M.D. Reid and P.D. Drummond  for sharing their positive-P phase-space code and for insightful discussions. J.F.F.B. and N.Q. thank D. Phillips for insightful discussions.

    \bibliography{bib}

    \onecolumngrid
    
    \appendix

    \section{\label{app:algorithm} A proof of the algorithm for computing the \textit{f}-function}

        From Eq.~\eqref{eq:truncated_series_q} let us define 
        \begin{equation}
            g_k = \frac{\tr\left([\bm{X} \bm{A}]^k\right)}{2k} + \frac{\gl^\mathrm{T} [\bm{X}\bm{A}]^{k-1} \bm{X} \gl }{2}.
            \label{eq:f_term}
        \end{equation}
        We are interested in finding an expression for the derivatives 
        \begin{equation}
            \frac{1}{N!}\frac{\partial^N}{\partial \eta^N} \left. \sum_{j=0}^N \frac{1}{j!}  \left( \sum_{k=1}^N g_k\eta^k \right)^j \right|_{\eta=0}.
            \label{eq:expression_of_interest}
        \end{equation}
        To do this, let us consider first the multinomial theorem: let $\{x_1,\dots,x_m\}\in\mathbb{C}$, for every non-negative integer $n$, we have
        \begin{equation}
            \left(\sum_{i=1}^m x_i\right)^n = n!\sum_{\bm{a}}\prod_{j=1}^m\frac{x_j^{a_j}}{a_j!},
            \label{eq:multinomial_theorem}
        \end{equation}
        where $\bm{a}=(a_1,\dots,a_m)$ is a vector of non-negative integers satisfying the relation $a_1+\dots+a_m = n$.
        
        Using this theorem on the sum in Eq.~\eqref{eq:expression_of_interest} we obtain 
        \begin{equation}
            \sum_{j=0}^N\frac{1}{j!}\left(\sum_{k=1}^Ng_k\eta^k\right)^j=\sum_{j=0}^{N}\sum_{\bm{a}_j}\prod_{k=1}^N\frac{g_k^{a_k}\eta^{ka_k}}{a_k!},
            \label{eq:multinomial_sum_1}
        \end{equation}
        where we write $\bm{a}_j=(a_1,\dots,a_N)$, whose components satisfy $a_1+\dots+a_N=j$. Notice that this expression is a polynomial of degree $N^2$ in $\eta$, and for a fixed $\bm{a}_j$, $\eta$ will have the exponent $a_1+2a_2+\dots+Na_N$.
        
        Notice also that, for a given $\bm{a}_j$, $j\leq a_1 +2a_2+\dots+Na_N \leq jN$. To see this, first note that $j=a_1+\dots+a_N\leq a_1+\dots+Na_N$, because all the $\{a_k\}$ are non-negative integers, and in the sum $a_1+\dots +Na_N$ each of them is multiplied by a positive number. The upper bound can be proved by considering that for every $k\in\{1,\dots,N\}$, $ka_k\leq Na_k$. This implies that $a_1 + \dots +Na_N \leq N(a_1+\dots+a_N)=jN$.
        
        Keeping the above considerations in mind, we can reorganize Eq.~\eqref{eq:multinomial_sum_1} as follows. Let the non-negative integer components of $\bm{a}_{jl}=(a_1,\dots,a_N)$ satisfy the constraints $a_1+\dots+a_N = j$ and $a_1+\dots+Na_N = l$. Then, we can write 
        \begin{align}
            \sum_{j=0}^N\frac{1}{j!}\left(\sum_{k=1}^Ng_k\eta^k\right)^j&=\sum_{j=0}^{N}\sum_{\bm{a}_j}\prod_{k=1}^N\frac{g_k^{a_k}\eta^{ka_k}}{a_k!}=\sum_{j=0}^N\sum_{l=j}^{jN}\left(\sum_{\bm{a}_{jl}}\prod_{k=1}^N\frac{g_k^{a_k}}{a_k!}\eta^l\right)=\sum_{j=0}^N\left(\sum_{l=j}^{jN}h_{jl}\eta^l\right),
            \label{eq:multinomial_sum_2}
        \end{align}
        where we have introduced the term
        \begin{equation}
            h_{jl}=\sum_{\bm{a}_{jl}}\prod_{k=1}^N\frac{g_k^{a_k}}{a_k!}.
            \label{eq:partition_sum_1}
        \end{equation}
        
        We want now to interchange the summations over $j$ and $l$. We can do this considering the following argument. For every $l$, we can find a $k\in\{1,\dots,N\}$ such that $(k-1)N+1\leq l\leq kN$. This value of $k$ can be interpreted as the minimum value of $j$ that allows the terms with $l\geq (k-1)N+1$ to appear in the last sum in Eq.~\eqref{eq:multinomial_sum_2}. Indeed, if we only had $j<k$, the sum would only consist of terms with $l$ up to $(k-1)N$. Now, we have
        \begin{align*}
            (k-1)N+1\leq l\leq kN\Longrightarrow(k-1)N\leq l-1<kN\Longrightarrow k-1\leq (l-1)/N<k,
        \end{align*}
        which implies that $k=\lfloor(l-1)/N\rfloor + 1$.
        
        When grouping all the terms in the sum according to the values of $\eta^l$, that is, when gathering the different $\{h_{jl}\}$ according to their corresponding $l$, we can see that the minimum value of $j$ we can find in the sum, for a given $l$, is precisely $\lfloor(l-1)/N\rfloor + 1$. The maximum value of $j$ depends on $l$ in the following way: since $j\leq l$, if $l<N$ the maximum value $j$ can take is $l$. If $l\geq N$ the maximum value of $j$ is $N$ (recall that $0\leq j\leq N$). Therefore, we can say that for a fixed $l$, the allowed values of $j$ run from $\lfloor(l-1)/N\rfloor + 1$ to $\mathrm{min}(l,N)$. Finally, considering that $l$ can take values between $1$ and $N^2$, we can write
        \begin{align}
            \sum_{j=0}^N\frac{1}{j!}\left(\sum_{k=1}^Ng_k\eta^k\right)^j=\sum_{j=0}^N\left(\sum_{l=j}^{jN}h_{jl}\eta^l\right) = h_{00} + \sum_{l=1}^{N^2}\left(\sum_{j=\lfloor(l-1)/N\rfloor + 1}^{\mathrm{min}(l,N)}h_{jl}\right)\eta^l.
            \label{eq:multinomial_sum_3}
        \end{align}
        
        From Eq.~\eqref{eq:multinomial_sum_3}, it can be readily seen that
        \begin{equation}
            \frac{1}{N!}\frac{\partial^N}{\partial \eta^N} \left. \sum_{j=0}^N \frac{1}{j!}  \left( \sum_{k=1}^N g_k\eta^k \right)^j \right|_{\eta=0} = \sum_{j=\lfloor(N-1)/N\rfloor + 1}^{\mathrm{min}(N,N)}\!\!\!\!\!\!h_{jl}=\sum_{j=1}^{N}h_{jN} = \sum_{j=1}^{N}\sum_{\bm{a}_{jN}}\prod_{k=1}^N\frac{g_k^{a_k}}{a_k!},  \label{eq:expression_of_interest_2}
        \end{equation}
        where we took into account that $\lfloor(N-1)/N\rfloor + 1=1$ and $\mathrm{\min}(N,N)=N$.
        
        Recall that an \textit{integer partition} $\lambda$ of a positive integer $n$ (denoted $\lambda\vdash n$) is a sequence of positive integers $(\lambda_1,\lambda_2,\dots)$ that satisfy $\lambda_1\geq\lambda_2\geq\dots\geq 1$ and $\lambda_1+\lambda_2+\dots=n$. To every $\lambda$ we can assign a unique vector of non-negative integers $\vec{\mu}(\lambda)=(\mu_1,\mu_2,\dots,\mu_n)$ satisfying $\mu_1+2\mu_2+\dots+n\mu_n=n$. Indeed, we need only define each $\mu_k$ as the number of times that $k$ appears in $\lambda$ or, in other words, $\mu_k$ is defined as the multiplicity of $k$ in $\lambda$ ($\mu_k=0$ if $k$ does not appear in $\lambda$). Notice that since $\mu_k$ counts the number of times a given integer appears in $\lambda$, the sum $\mu_1+\dots+\mu_n$ counts the total number of elements in $\lambda$, i.e., the length of the partition $|\lambda|$. For any $\lambda$, it holds that $1\leq|\lambda|\leq n$. Finally, let us define $\delta(\lambda)!=\mu_1!\cdots\mu_n!$. 
        
        We can see that the vectors $\bm{a}_{jN}$ define a unique partition of $N$ with length $j$. In this case, each component $a_k$ corresponds to the multiplicity of $k$ in the partition defined by $\bm{a}_{jN}$. Therefore, we can make the correspondence $\vec{a}_{jN}\leftrightarrow \lambda\vdash N\text{ s.t. }|\lambda|=j$ and $a_1!\cdots a_n!\leftrightarrow \delta(\lambda)!$. This allows us to write Eq.~\eqref{eq:expression_of_interest_2} as
        \begin{equation}
            \frac{1}{N!}\frac{\partial^N}{\partial \eta^N} \left. \sum_{j=0}^N \frac{1}{j!}  \left( \sum_{k=1}^N g_k\eta^k \right)^j \right|_{\eta=0} = \sum_{j=1}^{N}\sum_{\substack{\lambda\vdash N \\ \mathrm{s.t.}\,|\lambda|=j}}\frac{1}{\delta(\lambda)!}\prod_{a\in\lambda}g_a = \sum_{\lambda \vdash N}\frac{1}{\delta(\lambda)!}\prod_{a\in\lambda}g_a.
            \label{eq:expression_of_interest_4}
        \end{equation}
        The last equality follows from noticing that every $\lambda$ has a length between $1$ and $N$, therefore 
        \[\sum_{j=1}^{N}\sum_{\substack{\lambda\vdash N \\ \mathrm{s.t.}\,|\lambda|=j}}=\sum_{\lambda \vdash N}.\]
        
        We are now in the position to give an explanation to the algorithm shown in~\cite{lhaf2024walrus} (it is worth mentioning that this is not the only possible explanation, Ref.~\cite{bjorklund2019faster} interprets the algorithm in terms of combinations of walks in a graph). In principle, we would need to compute all the integer partitions of $N$ to compute the sum in Eq.~\eqref{eq:expression_of_interest_4}. However, this algorithm takes into account that we are interested only on the \textit{sum of over all partitions}, not in generating each individual partition, and that the sums corresponding to $k<N$ contain information that can be used to obtain the sum over partitions of $N$.  
        
        Let $\bm{c}_{i}$ be vectors of $N+1$ components. We will use these vectors to store the sums over partitions for $1\leq l\leq N$. The idea is that for a given value of $i$, the component $(\bm{c}_{i})_l$ receives all the contributions to the sum $\sum_{\lambda\vdash (l-1)}\frac{1}{\delta(\lambda)!}\prod_{a\in\lambda}g_a$ from partitions $\lambda\vdash (l-1)$ whose summands take values of \textit{at most} $i$ (in the definition of $\lambda$ given above, this means that $\lambda_1$ is at most $i$). We set the initial condition $\bm{c}_{0}=(1,0,\dots,0)$.
        
        For an index $i\in\{1,\dots,N\}$, we compute $g_i$ and determine the maximum amount of times that $i$ can appear in a partition of $N$. This number is equal to the quotient between $N$ and $i$, $q(N,i)$. We also set an accumulation factor $p$, that will collect the different terms involved in the computation of $\frac{1}{\delta(\lambda)!}\prod_{a\in\lambda}g_a$. For each $i$, $p$ is initialized at the value $p(0)=1$. 
        
        A second index, $j\in\{1,\dots, q(N,i)\}$ will be used to count repetitions of $g_i$ within a given partition of $k\leq N$. Let $p(j)$ be the value of the accumulation factor at the step $j$. $p(j)$ will be related to $p(j-1)$ according to the relation $p(j)=(g_i/j)p(j-1)$. The rationale behind this definition of $p(j)$ is that the accumulation factor will have the information about the contribution to the product $\frac{1}{\delta(\lambda)!}\prod_{a\in\lambda}g_a$ of the factor $g_i$. If $i$ appears $\nu_i$ times in $\lambda$, then the contribution of $g_i$ will have the form $g_i^{\nu_i}/\nu_i!$. This also means that $p(j)$ will take values between $\{g_i, \dots, \frac{1}{q(N,i)!}g_i^{q(N,i)}\}$.
        
        At this point, it is convenient to introduce the vectors of $N+1$ components $\bm{c}_{i,j}$. These vectors satisfy the ``initial'' and ``final'' conditions $\bm{c}_{i,0}\equiv\bm{c}_{i-1}$ and $\bm{c}_{i,q(N,i)}\equiv\bm{c}_{i}$. We can interpret them as auxiliary vectors that lead to the construction of $\bm{c}_i$ from $\bm{c}_{i-1}$. As such, the components $(\bm{c}_{i,j})_l$, for $j\in\{1,\dots, q(N,i)\}$, receive contributions from \textit{some} (but not all) partitions of $l-1$ with summands having values of at most $i$. Namely, $(\bm{c}_{i,j})_l$ receives all contributions from partitions of $l-1$, where summands are at most $i$, and where $i$ appears at most $j$ times.

        We will use a third index, $k$, to determine which component $(\bm{c}_{i,j})_k$ receives a contribution from $p(j)$. For a given $j$, $p(j)$ will represent a contribution of $ij$ to any integer partition. This means that $p(j)$ will only contribute to components of $\bm{c}_{i,j}$ with $k\in \{ij+1, N+1\}$. For $k\not\in \{ij+1, N+1\}$, we set $(\bm{c}_{i,j})_k=(\bm{c}_{i,j-1})_k$. Since $(\bm{c}_{i,j})_k$ collects contributions corresponding \textit{only} to partitions of $k-1$, in general, $p(j)$ alone cannot contribute to $(\bm{c}_{i,j})_k$; there is a remainder of $k-ij-1$ that needs to be taken into account. Seeing that $(\bm{c}_{i-1})_{k-ij}$ has all contributions corresponding to partitions of $k-ij-1$ (where summands are at most $i-1$), the term $(\bm{c}_{i-1})_{k-ij}p(j)$ gives a new contribution to the sum over partitions of $k-1$ with summands taking values of at most $i$, where $i$ appears at most $j$ times. This suggests that each $(\bm{c}_{i,j})_{k}$ can be computed according to the relation $(\bm{c}_{i,j})_{k}=(\bm{c}_{i,j-1})_k + (\bm{c}_{i-1})_{k-ij}p(j)$. We included the term $(\bm{c}_{i,j-1})_k$ because $(\bm{c}_{i,j})_{k}$ must have all contributions from partitions of $k-1$ where $i$ appears at most $j$ times and, in order to do so, it must also include the contributions from all partitions where $i$ appears at most $j-1$ times.
        
        When the indices $j$ and $k$ complete their run, the components $(\bm{c}_i)_l$, for $l\leq i+1$, will be equal to the sum $\sum_{\lambda\vdash (l-1)}\frac{1}{\delta(\lambda)!}\prod_{a\in\lambda}g_a$. After the index $i$ completes its run, the last entry of $\bm{c}_N$ will be equal to the sum in Eq.~\eqref{eq:expression_of_interest_4}.
        
        Before showing that this algorithm works as intended, let us prove a preliminary result of particular importance.
        
        \begin{proposition}
        \label{prop:central_part_algo}
            We have that
            \begin{equation}
                \sum_{\substack{\lambda \vdash n \\ \mathrm{s.t.}\,\lambda_1\leq m+1}}\frac{1}{\delta(\lambda)!}\prod_{a\in \lambda}g_a= \sum_{\substack{\lambda \vdash n \\ \mathrm{s.t.}\,\lambda_1\leq m}}\frac{1}{\delta(\lambda)!}\prod_{a\in \lambda}g_a + \sum_{l=1}^{q(n, m+1)}\frac{1}{l!}g_{m+1}^l\left(\sum_{\substack{\bar{\lambda} \vdash [n-l(m+1)] \\\mathrm{s.t.}\, \bar{\lambda}_1 \leq m}}\frac{1}{\delta(\bar{\lambda})!}\prod_{a\in \bar{\lambda}}g_a\right),
                \label{eq:algorithm_central_equation}
            \end{equation}
            where, let us recall, $\lambda\vdash n$ is a sequence of positive integers $\lambda = (\lambda_1,\lambda_2,\dots)$ satisfying $\lambda_1\geq\lambda_2\geq\dots\geq 1$ and $\lambda_1+\lambda_2\cdots=n$. Defining $\mu_k$ as the multiplicity of $k$ in the partition $\lambda$ (i.e., the number of times $k$ appears in $\lambda$), we have $\delta(\lambda)!=\prod_{k=1}^n\mu_k!$. $q(n, m)$ stands for the quotient between $n$ and $m$.
        \end{proposition}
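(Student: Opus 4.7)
Proof proposal. The plan is to split the sum on the left-hand side according to the multiplicity $\mu_{m+1}$ of the part $m+1$ in $\lambda$. Every partition $\lambda \vdash n$ with $\lambda_1 \leq m+1$ either has no part equal to $m+1$, or has some number $l \geq 1$ of copies of $m+1$ and all other parts at most $m$. Since $l$ copies of $m+1$ already contribute $l(m+1)$ to the total $n$, the maximum possible $l$ is $\lfloor n/(m+1)\rfloor = q(n,m+1)$. Hence
\begin{equation*}
    \sum_{\substack{\lambda\vdash n\\ \lambda_1 \le m+1}} = \sum_{\substack{\lambda\vdash n\\ \lambda_1 \le m+1,\, \mu_{m+1}=0}} \;+\; \sum_{l=1}^{q(n,m+1)} \sum_{\substack{\lambda\vdash n\\ \lambda_1\le m+1,\, \mu_{m+1}=l}}.
\end{equation*}
The first piece coincides with the sum over $\lambda \vdash n$ with $\lambda_1 \leq m$, producing the first term on the right-hand side.

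Next I would set up, for each fixed $l \in \{1,\dots,q(n,m+1)\}$, the obvious bijection that sends a partition $\lambda \vdash n$ with $\lambda_1 \le m+1$ and $\mu_{m+1}=l$ to the partition $\bar\lambda \vdash n - l(m+1)$ obtained by deleting the $l$ copies of $m+1$; clearly $\bar\lambda_1 \leq m$, and the inverse is to append $l$ copies of $m+1$. Under this bijection,
\begin{equation*}
    \prod_{a\in \lambda} g_a = g_{m+1}^{\,l}\,\prod_{a\in \bar\lambda} g_a,
\end{equation*}
since the multisets of parts differ only by the $l$ copies of $m+1$. For the $\delta$-factorials, note that the only multiplicity that changes in passing from $\lambda$ to $\bar\lambda$ is at index $m+1$ (from $l$ to $0$); therefore $\delta(\lambda)! = l!\,\delta(\bar\lambda)!$. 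Substituting into the inner sum and pulling out the factor $g_{m+1}^{l}/l!$ that is independent of $\bar\lambda$ yields exactly the $l$-th summand on the right-hand side.

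Assembling these two contributions reproduces Eq.~\eqref{eq:algorithm_central_equation}. The main (very mild) obstacle is just bookkeeping: being careful that $\mu_{m+1}$ is the only multiplicity that changes under the bijection — so that the factorial splits cleanly as $\delta(\lambda)! = l!\,\delta(\bar\lambda)!$ — and that the upper bound $q(n,m+1)$ is correct because any larger $l$ would force $l(m+1)>n$, making no valid $\bar\lambda$ exist. No other subtleties arise; the identity is essentially a reorganization of the sum by the multiplicity of a distinguished part.
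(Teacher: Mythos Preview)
Your proposal is correct and follows essentially the same approach as the paper: both split the sum over $\lambda\vdash n$ with $\lambda_1\le m+1$ according to the multiplicity of the part $m+1$, identify the $\mu_{m+1}=0$ case with the $\lambda_1\le m$ sum, and for each $l\ge 1$ use the bijection that removes the $l$ copies of $m+1$ to reduce to partitions $\bar\lambda\vdash n-l(m+1)$ with $\bar\lambda_1\le m$, noting $\delta(\lambda)! = l!\,\delta(\bar\lambda)!$. The paper phrases this with explicit set notation ($\Lambda_{\le m}^{(n)}$, $\Lambda_{m+1,l}^{(n)}$, etc.), but the argument is identical to yours.
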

        
        \begin{proof}
            Let us define the following sets: $\Lambda_{\leq m}^{(n)}$ is the set of all partitions $\lambda \vdash n$ whose summands take values of at most $m$. This means that the largest part of $\lambda$, $\lambda_1$, satisfies $\lambda_1\leq m$. $\Lambda_{m}^{(n)}$ is the set of all partitions $\lambda \vdash n$ whose largest summand is exactly $m$. Finally, $\Lambda_{m, l}^{(n)}$ is the set of all partitions $\lambda\vdash n$, where the largest summand is exactly $m$ and its multiplicity is exactly $l$. In equations, these definitions read
            \begin{align}
                \Lambda_{\leq m}^{(n)}&=\{\lambda \vdash n\,|\, \lambda_1\leq m\},\\
                \Lambda_{m}^{(n)}&=\{\lambda \vdash n\,|\, \lambda_1=m\},\\
                \Lambda_{m,l}^{(n)}&=\{\lambda \vdash n\,|\,\lambda_1=\dots=\lambda_l=m\}.
            \end{align}
            
            Taking into account that for each $\lambda$ we can assign a unique vector $\vec{\mu}(\lambda)=(\mu_1(\lambda),\dots,\mu_n(\lambda))$, where each $\mu_k(\lambda)$ is the multiplicity of $k$ in $\lambda$, and $\mu_1(\lambda)+2\mu_2(\lambda)+\dots+n\mu_n(\lambda)=n$, we can give the following equivalent definitions of the sets $\Lambda_{\leq m}^{(n)}$, $\Lambda_{m}^{(n)}$ and $\Lambda_{m, l}^{(n)}$:
            \begin{align}
                \Lambda_{\leq m}^{(n)}&=\{\lambda \vdash n\,|\, \mu_k(\lambda) = 0\text{ for }k>m\},\\
                \Lambda_{m}^{(n)}&=\{\lambda \vdash n\,|\, \mu_m(\lambda)\neq 0\text{ and }\, \mu_k(\lambda)= 0\text{ for }k>m\},\\
                \Lambda_{m,l}^{(n)}&=\{\lambda \vdash n\,|\,\mu_m(\lambda)= l\text{ and }\, \mu_k(\lambda)= 0\text{ for }k>m\}.
            \end{align}
        
            It can be readily seen that $\Lambda_{\leq m}^{(n)}\cap\Lambda_{m+1}^{(n)}=\emptyset$ and $\Lambda_{\leq m}^{(n)}\cup\Lambda_{m+1}^{(n)}=\Lambda_{\leq m+1}^{(n)}$. These relations allow us to write
            \begin{equation}
                \sum_{\lambda\in \Lambda_{\leq m+1}^{(n)}}\frac{1}{\delta(\lambda)!}\prod_{a\in \lambda}g_a = \sum_{\lambda\in \Lambda_{\leq m}^{(n)}}\frac{1}{\delta(\lambda)!}\prod_{a\in \lambda}g_a + \sum_{\lambda\in \Lambda_{m+1}^{(n)}}\frac{1}{\delta(\lambda)!}\prod_{a\in \lambda}g_a.
                \label{eq:sums_over_sets_1}
            \end{equation}
        
            We can also easily verify that $\Lambda_{m,l}^{(n)}\cap\Lambda_{m,l'}^{(n)}=\emptyset$ for $l\neq l'$. Moreover, taking into account that $m$ can appear in any partition of $n$ at most $q(n,m)\equiv q$ times, we have 
            \begin{equation}
                \Lambda_{m}^{(n)}=\bigcup_{l=1}^{q}\Lambda_{m,l}^{(n)}.
                \label{eq:union_relation}
            \end{equation}
            On account of this relation, we can rewrite Eq.~\eqref{eq:sums_over_sets_1} as
            \begin{equation}
                \sum_{\lambda\in \Lambda_{\leq m+1}^{(n)}}\frac{1}{\delta(\lambda)!}\prod_{a\in \lambda}g_a = \sum_{\lambda\in \Lambda_{\leq m}^{(n)}}\frac{1}{\delta(\lambda)!}\prod_{a\in \lambda}g_a + \sum_{l=1}^{q}\left(\sum_{\lambda\in \Lambda_{m+1,l}^{(n)}}\frac{1}{\delta(\lambda)!}\prod_{a\in \lambda}g_a\right).
                \label{eq:sums_over_sets_2}
            \end{equation}
        
            Now, consider an arbitrary partition $\lambda \in\Lambda_{m+1,l}^{(n)}$. We can write this partition as 
            \[\lambda=(m+1,\dots,m+1,\lambda_{l+1},\lambda_{l+2},\dots) = (m+1,\dots,m+1)\oplus(\lambda_{l+1},\lambda_{l+2},\dots).\]
            Notice that $\bar{\lambda}=(\bar{\lambda}_{1},\bar{\lambda}_{2},\dots)=(\lambda_{l+1},\lambda_{l+2},\dots)$ is a partition of $n-l(m+1)$, but it is not arbitrary. Indeed, we have the restriction $\bar{\lambda}_1=\lambda_{l+1}\leq m$, which is inherited from the fact that $\lambda \in\Lambda_{m+1,l}^{(n)}$. Taking this into account, we can write 
            \begin{equation}
                \Lambda_{m+1,l}^{(n)}=(m+1,\dots,m+1)\oplus \Lambda_{\leq m}^{(n-l(m+1))},
                \label{eq:set_relation}
            \end{equation}
            where it is understood that $(m+1,\dots,m+1)$ has $l$ elements, and $a\oplus B\equiv\{a\oplus b\,|\,b\in B\}$. Eq.~\eqref{eq:set_relation} allows us to write 
            \begin{equation}
                \sum_{\lambda\in \Lambda_{m+1,l}^{(n)}}\frac{1}{\delta(\lambda)!}\prod_{a\in \lambda}g_a=\frac{1}{l!}g_{m+1}^l\left(\sum_{\bar{\lambda} \in \Lambda^{(n-l(m+1))}_{\leq m}}\frac{1}{\delta(\bar{\lambda})!}\prod_{a\in\bar{\lambda}}g_a\right),
                \label{eq:sum_re_structure}
            \end{equation}
            where we noted that $\delta(\lambda)!=l!\delta(\bar{\lambda})!$ for $\lambda = (m+1,\dots,m+1)\oplus\bar{\lambda}$.
        
            Combining Eqs.~\eqref{eq:sums_over_sets_2} and~\eqref{eq:sum_re_structure} we obtain the result
            \begin{equation}
                \sum_{\lambda\in \Lambda_{\leq m+1}^{(n)}}\frac{1}{\delta(\lambda)!}\prod_{a\in \lambda}g_a = \sum_{\lambda\in \Lambda_{\leq m}^{(n)}}\frac{1}{\delta(\lambda)!}\prod_{a\in \lambda}g_a + \sum_{l=1}^{q}\frac{1}{l!}g_{m+1}^l\left(\sum_{\bar{\lambda} \in \Lambda^{(n-l(m+1))}_{\leq m}}\frac{1}{\delta(\bar{\lambda})!}\prod_{a\in\bar{\lambda}}g_a\right).
                \label{eq:sums_over_sets_3}
            \end{equation}
            Eq.~\eqref{eq:algorithm_central_equation} follows from noticing that
            \[\sum_{\lambda\in \Lambda_{\leq m}^{(n)}}\equiv\sum_{\substack{\lambda \vdash n \\ \mathrm{s.t.}\,\lambda_1\leq m}}.\]
        \end{proof}
        
        We will now run the algorithm for $i=1,2$. For $i=1$, we have $g_i=g_1$ and $q(N,i)=N$. This means that $j\in\{1, \dots, N\}$, $p(j)\in\{g_1,\dots, \frac{1}{N!}g_1^N\}$, and $k\in\{j+1,\dots,N+1\}$. For a given $j$, $(\bm{c}_{1,j})_k=(\bm{c}_{1,j-1})_k + (\bm{c}_{0})_{k-j}p(j)$. Since $\bm{c}_0=(1,0,\dots,0)$, we see that the term $(\bm{c}_{0})_{k-j}p(j)$ will be different from zero only for $k=j+1$, therefore $(\bm{c}_{1,j})_{j+1}=(\bm{c}_{1,j-1})_{j+1} + p(j)$ and $(\bm{c}_{1,j})_k=(\bm{c}_{1,j-1})_k$ for $k\neq j+1$. Notice that in each step of $j$ we change only one entry of $\bm{c}_{i,j}$, namely the $(j+1)$-th entry. This means that $(\bm{c}_{1,j-1})_{j+1}$ will keep the value it had at the step $j=1$ (the step $j-1$ only modifies the component $(\bm{c}_{1,j-1})_{j}$), which is equivalent to say that $(\bm{c}_{1,j-1})_{j+1}=0$. Then, we can see that $(\bm{c}_{1,j})_{j+1}= p(j)$ and $(\bm{c}_{1,j})_k=(\bm{c}_{1,j-1})_k$ for $k\neq j+1$. At the end of the run of index $j$ we obtain the result
        \[\bm{c}_{1,N}\equiv\bm{c}_1 = \left(1,g_1,\frac{1}{2}g_1^2,\dots,\frac{1}{N!}g_1^N\right).\]
        We can readily verify that the $l$-th component of $\bm{c}_1$ contains all the contributions from partitions of $l-1$ with summands taking the value of at most 1.
        
        For $i=2$, we have $g_i=g_2$ and $q(N,i)=\lfloor N / 2\rfloor$, which implies that $j\in\{1, \dots, \lfloor N / 2\rfloor\}$, $p(j)\in\{g_2,\dots, \frac{1}{\lfloor N / 2\rfloor!}g_2^{\lfloor N / 2\rfloor}\}$, and $k\in\{2j+1,\dots,N+1\}$. For a given $j$,
        \[(\bm{c}_{2,j})_k=(\bm{c}_{2,j-1})_k + (\bm{c}_{1})_{k-2j}p(j)=(\bm{c}_{2,j-1})_k+\frac{1}{j!(k-2j-1)!}g_{1}^{k-2j-1}g_2^j.\]
        Noticing that
        \[(\bm{c}_{2,0})_k=(\bm{c}_{1})_k=\frac{1}{(k-1)!}g_1^{k-1},\]
        and
        \[(\bm{c}_{2,1})_k=\frac{1}{(k-1)!}g_1^{k-1} + \frac{1}{(k-3)!}g_{1}^{k-3}g_2=\sum_{l=0}^1\frac{1}{l!(k-2l-1)!}g_1^{k-2l-1}g_2^{l},\]
        we can work out the recurrence relation that defines $(\bm{c}_{2,j})_k$ and obtain
        \[(\bm{c}_{2,j})_k=\sum_{l=0}^j\frac{1}{l!(k-2l-1)!}g_1^{k-2l-1}g_2^{l}\]
        for $k\in\{2j+1,\dots,N+1\}$. At the end of the run of index $j$ we obtain the result
        \[\bm{c}_{2,\lfloor N/2\rfloor}\equiv\bm{c}_2 = \left(1,g_1,\frac{1}{2}g_1^2+g_2,\dots,\sum_{l=0}^{\lfloor N/2\rfloor}\frac{1}{l!(N-2l)!}g_1^{N-2l}g_2^{l}\right).\]
        We verify again that the $l$-th component of $\bm{c}_2$ contains all the contributions from partitions of $l-1$ with summands taking the value of at most 2.
        
        Suppose that we run the algorithm up to step $i=m$. This means that
        \[(\bm{c}_m)_l = \sum_{\lambda\vdash (l-1)}\frac{1}{\delta(\lambda)!}\prod_{a\in\lambda}g_a\]
        for $2\leq l \leq m+1$, while
        \[(\bm{c}_m)_l = \sum_{\substack{\lambda\vdash (l-1) \\ \text{s.t. }\lambda_1\leq m}}\frac{1}{\delta(\lambda)!}\prod_{a\in\lambda}g_a\]
        for $m+2\leq l \leq N+1$. 
        
        Let us now compute $\bm{c}_{m+1}$. For an arbitrary value of $j$, and for $k\geq m+2$, we have
        \begin{align*}
            (\bm{c}_{m+1,j})_k&=(\bm{c}_{m+1,j-1})_k + (\bm{c}_{m})_{k-(m+1)j}p(j)\\
            &=(\bm{c}_{m+1,j-1})_k + \frac{1}{j!}g_{m+1}^{j}\left(\sum_{\substack{\bar{\lambda}\vdash (k-(m+1)j-1) \\ \text{s.t. }\bar{\lambda}_1\leq m}}\frac{1}{\delta(\bar{\lambda})!}\prod_{a\in\bar{\lambda}}g_a\right).
        \end{align*}
        Noticing that 
        \[(\bm{c}_{m+1,0})_k=(\bm{c}_{m})_k=\sum_{\substack{\lambda\vdash (k-1) \\ \text{s.t. }\lambda_1\leq m}}\frac{1}{\delta(\lambda)!}\prod_{a\in\lambda}g_a,\]
        and 
        \begin{align*}
            (\bm{c}_{m+1,1})_k &= (\bm{c}_{m})_k + g_{m+1}\left(\sum_{\substack{\bar{\lambda}\vdash (k-m-2) \\ \text{s.t. }\bar{\lambda}_1\leq m}}\frac{1}{\delta(\bar{\lambda})!}\prod_{a\in\bar{\lambda}}g_a\right)\\
            &=\sum_{\substack{\lambda\vdash (k-1) \\ \text{s.t. }\lambda_1\leq m}}\frac{1}{\delta(\lambda)!}\prod_{a\in\lambda}g_a + g_{m+1}\left(\sum_{\substack{\bar{\lambda}\vdash (k-m-2) \\ \text{s.t. }\bar{\lambda}_1\leq m}}\frac{1}{\delta(\bar{\lambda})!}\prod_{a\in\bar{\lambda}}g_a\right),
        \end{align*}
        we can, once more, work out the recurrence relation that defines $(\bm{c}_{m+1,j})_k$ and obtain
        \begin{align*}
            (\bm{c}_{m+1,j})_k =\sum_{\substack{\lambda\vdash (k-1) \\ \text{s.t. }\lambda_1\leq m}}\frac{1}{\delta(\lambda)!}\prod_{a\in\lambda}g_a + \sum_{l=1}^j\frac{1}{l!}g_{m+1}^{l}\left(\sum_{\substack{\bar{\lambda}\vdash (k-(m+1)l-1) \\ \text{s.t. }\bar{\lambda}_1\leq m}}\frac{1}{\delta(\bar{\lambda})!}\prod_{a\in\bar{\lambda}}g_a\right).
        \end{align*}
        
        For each value of $k$, there is a maximum value of $j$, say $j'$, for which $(\bm{c}_{i,j})_k$ no longer receives contributions when $j>j'$. Indeed, we know that $ij + 1\leq k$, which implies that $(\bm{c}_{i,j})_k$ will receive a contribution as long as $j\leq (k-1)/i$. We may therefore say that $j'=q(k-1, i)$. On this account, when $j$ completes its run, we conclude that
        \begin{align*}
            (\bm{c}_{m+1,q(N,m+1)})_k\equiv (\bm{c}_{m+1})_k =\sum_{\substack{\lambda\vdash (k-1) \\ \text{s.t. }\lambda_1\leq m}}\frac{1}{\delta(\lambda)!}\prod_{a\in\lambda}g_a +\!\!\!\!\!\!\sum_{l=1}^{q(k-1, m+1)}\frac{1}{l!}g_{m+1}^{l}\left(\sum_{\substack{\bar{\lambda}\vdash (k-(m+1)l-1) \\ \text{s.t. }\bar{\lambda}_1\leq m}}\frac{1}{\delta(\bar{\lambda})!}\prod_{a\in\bar{\lambda}}g_a\right),
        \end{align*}
        which, according to Proposition~\ref{prop:central_part_algo}, allows us to see that
        \begin{align*}
            (\bm{c}_{m+1})_k =\sum_{\substack{\lambda\vdash (k-1) \\ \text{s.t. }\lambda_1\leq m+1}}\frac{1}{\delta(\lambda)!}\prod_{a\in\lambda}g_a.
        \end{align*}
        
        Recalling that $(\bm{c}_{m+1})_k=(\bm{c}_{m})_k$ for $2\leq k \leq m+1$, we verify that $(\bm{c}_{m+1})_k = \sum_{\lambda\vdash (k-1)}\frac{1}{\delta(\lambda)!}\prod_{a\in\lambda}g_a$. For $k=m+2$, we obtain the same sum. Indeed,
        \begin{align*}
            (\bm{c}_{m+1})_{m+2} =\sum_{\substack{\lambda\vdash (m+1) \\ \text{s.t. }\lambda_1\leq m+1}}\frac{1}{\delta(\lambda)!}\prod_{a\in\lambda}g_a =\sum_{\lambda\vdash (m+1)}\frac{1}{\delta(\lambda)!}\prod_{a\in\lambda}g_a
        \end{align*}
        because any partition $\lambda\vdash(m+1)$ satisfies $\lambda_1\leq m+1$. For $k>m+2$, we verify again that $(\bm{c}_{m+1})_k$ contains all contributions from partitions of $k-1$ whose summands take values of at most $m+1$. We have thus proven that the algorithm works as intended.
        
        We know give a quick estimation of the computational cost, in terms of time, of running this algorithm. For each step $i$, the computation of $g_i$ can be done in $\mathcal{O}(M^3)$ time (recall that in the definition of $g_i$ given in Eq.~\eqref{eq:f_term}, matrix $\bm{A}$ has size $2M\times 2M$ and vector $\gl$ has length $2M$)~\cite{bjorklund2019faster}. The computation of $q(N,i)$ and the initialization of $p$ can both be done in $\mathcal{O}(1)$ time. For each index $j$, the computation of $p(j)$ can also be made in $\mathcal{O}(1)$ time. Finally, for each $k$, the computation of $(\bm{c}_{ij})_k$ is done in $\mathcal{O}(1)$ time. This suggest that the total computational cost of running the algorithm can be written as
        \[t(N)=\sum_{i=1}^N\mathcal{O}(M^3)+\sum_{i=1}^N\sum_{j=1}^{q(N,i)}\sum_{k=ij+1}^{N+1}1.\]
        Developing the sum, we obtain
        \begin{align*}
            t(N)&=\mathcal{O}(NM^3)+\sum_{i=1}^N\sum_{j=1}^{q(N,i)}(N-ij +1) = \mathcal{O}(NM^3)+(N+1)\sum_{i=1}^N\sum_{j=1}^{q(N,i)}1-\sum_{i=1}^Ni\sum_{j=1}^{q(N,i)}j\\
            &=\mathcal{O}(NM^3)+(N+1)\sum_{i=1}^Nq(N,i)-\sum_{i=1}^N\frac{1}{2}\,i\,q(N,i)\left[q(N,i)+1\right].
        \end{align*}
        A result by Dirichlet (see for instance Sec. 3.5 of Ref.~\cite{apostol2013introduction}) states that \[\sum_{i=1}^{N}q(N,i)=N\log N +(2\gamma-1)N + \mathcal{O}(\sqrt{N}),\]
        where $\gamma$ is the Euler-Mascheroni constant. On the other hand, the sum $\sum_{i=1}^N\frac{1}{2}\,i\,q(N,i)\left[q(N,i)+1\right]$, seen as a function of $N$, corresponds to the OEIS sequence \href{https://oeis.org/A143127}{A143127}~\cite{oeis}, for which the relation \[\sum_{i=1}^N\frac{1}{2}\,i\,q(N,i)\left[q(N,i)+1\right] = \frac{1}{2}N^2\log N + \left(\gamma - \frac{1}{4}\right)N^2+\mathcal{O}(N)\] 
        has been established. Consequently, we may see that 
        \begin{align*}
            t(N)=\mathcal{O}(NM^3)+ N(N+1)\log N-\frac{1}{2}N^2\log N + \mathcal{O}(N^2) = \mathcal{O}(NM^3)+\mathcal{O}(N^2\log N).
        \end{align*}
        This is much better than computing all partitions of $N$ in order to compute Eq.~\eqref{eq:expression_of_interest_4}.

    \section{\label{app:lowdin}Löwdin orthogonalization}

        In Sec.~\ref{sec:numerical}, we considered multimode photonic systems described by Gaussian states where the external modes corresponded to the input/output ports of a photonic circuit, while the internal modes where associated to the spectral degree of freedom of the light propagating through the circuit. In particular, we assumed that we had input spectrally impure squeezed states. We modeled this spectral impurity by doubling the number of modes in the system, and by defining a second set of squeezing parameters associated to a second spectral mode in each input port. This allowed us to write the state of the system before propagation through the circuit as $|\bm{\xi}\rangle\otimes|\bm{\xi}'\rangle$, where $|\bm{\xi}\rangle=\bigotimes_{k=1}^M|\xi_k\rangle$, $|\bm{\xi}'\rangle=\bigotimes_{k=1}^M|\xi_k'\rangle$ are $M$-mode squeezed states (with $M$ the number of ports --or external modes-- of the system). The squeezing parameters of the second spectral mode, $\{\xi_k'\}$, were functions of the squeezing parameters of the first spectral mode, $\{\xi_k\}$, and of a ``spectral impurity parameter'', $p$, i.e. $\bm{\xi}'= \bm{\xi}'(\bm{\xi}, p)$. 
        
        The simple model described in Sec.~\ref{sec:numerical} was useful for the purpose of our numerical simulations. However, it is not readily generalizable to systems with more than two internal modes. Moreover, in many scenarios we do not know before hand how many mutually orthogonal states are associated to the internal modes of the system, nor we have a clear way of defining the corresponding squeezing parameters. Instead, we often have access to some information about the states $\{|\phi_k\rangle\}_{k=1}^M$ of the internal degrees of freedom of the input light. For instance, we can think of having input light in the form of pulses that have some relative temporal delay between them (see Fig.~\ref{fig:inter_delay}). Measuring this time delay gives us information about the different internal (temporal) states of the pulses. In this section, we will explain how to use this available information to define the mutually orthogonal internal modes of the system and the associated squeezing parameters. 

        \begin{figure}[!h]
            {
                \includegraphics[width=0.6\columnwidth]{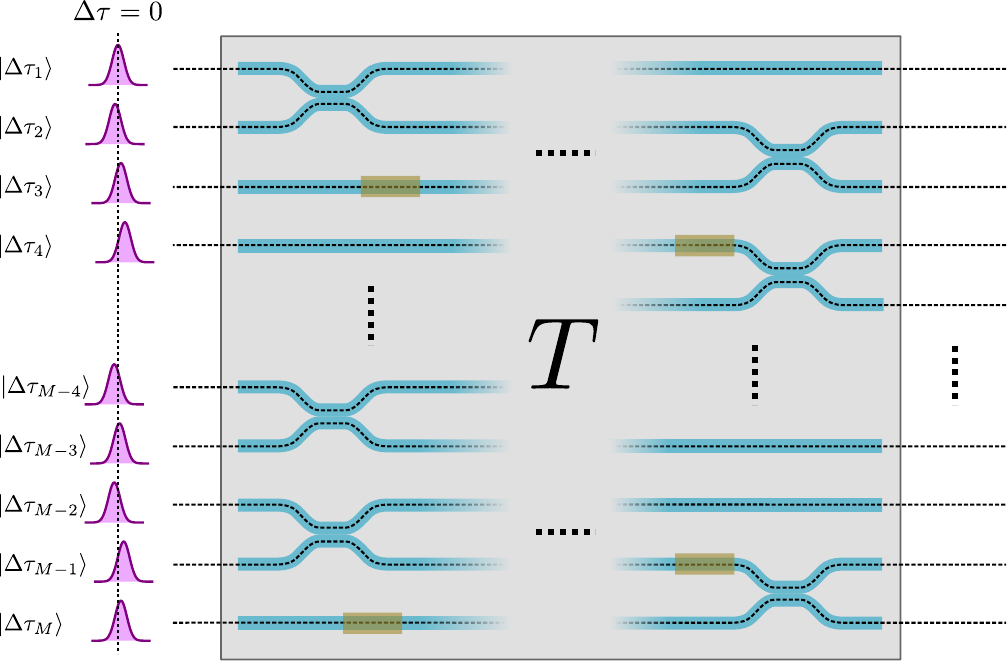}%
            }
            \caption{Illustration of an interferometric setup where the input light has a different state in the temporal degree of freedom for each port of the interferometer. These temporal states are parametrized by the delays, $\Delta\tau_k$, between the different input pulses and some reference pulse. Having information about the state of the internal degrees of freedom of the input light allows to construct an orthonormal set of vectors which we can use to define the mutually orthogonal internal modes of the system, and the corresponding squeezing parameters.} 
            \label{fig:inter_delay}
        \end{figure}

        Suppose for a moment that we have complete knowledge of the internal mode states $\{|\phi_k\rangle\}_{k=1}^M$. We will assume that these vectors are linearly independent, but not necessarily pairwise orthogonal. The idea is to use these states to construct an orthonormal set of vectors that can be associated to the mutually orthogonal internal modes of the system. We will construct this orthonormal set using \textit{Löwdin symmetric orthogonalization}~\cite{piela2006ideas}, which unlike the usual Gram-Schmidt process, generates an orthonormal set that does not depend on the order in which one chooses the different $\{|\phi_k\rangle\}_{k=1}^M$. 
        
        Let $\bm{O}$ be the $M\times M$\textit{overlap matrix}, whose entries are defined as $O_{jk}=\langle \phi_j|\phi_k\rangle$. This matrix is Hermitian, $\bm{O}^\dagger=\bm{O}$, and the linear independence of the $\{|\phi_k\rangle\}_{k=1}^M$ ensures that it is also positive definite, $\bm{O}>0$~\cite{piela2006ideas}. These properties allow us to uniquely define the matrices $\bm{O}^{1/2}$ and $\bm{O}^{-1/2}$. According to Löwdin orthogonalization, the vectors $\{|\psi_k\rangle\}_{k=1}^M$ defined as 
        \begin{equation}
            |\psi_k\rangle = \sum_{l=1}^M\left(\bm{O}^{-1/2}\right)_{k,l}|\phi_l\rangle,
            \label{eq:lowdin_orthoganlization}
        \end{equation}
        form an orthonormal set. We identify the different $\{|\psi_k\rangle\}_{k=1}^M$ with the $M$ mutually orthogonal modes of the internal degrees of freedom, and expand each $|\phi_k\rangle$ as a linear combination of the elements of this orthonormal basis:
        \begin{equation}
            |\phi_k\rangle = \sum_{l=1}^M\left(\bm{O}^{1/2}\right)_{k,l}|\psi_l\rangle.
            \label{eq:lowdin_orthoganlization_inverse}
        \end{equation}

        Suppose now that we prepare $M$ single-mode squeezed states at the modes defined by the external label $k$ and the internal state $|\phi_k\rangle$. We associate to each of these modes a pair of creation and annihilation operators $\hat{a}_{k,\phi_k}^\dagger$, $\hat{a}_{k,\phi_k}$ that satisfy the canonical commutation relations $[\hat{a}_{k,\phi_k},\hat{a}_{l,\phi_l}]=[\hat{a}_{k,\phi_k}^\dagger,\hat{a}_{l,\phi_l}^\dagger]=0$, $[\hat{a}_{k,\phi_k},\hat{a}_{l,\phi_l}^\dagger]=\delta_{k,l}$. The preparation of each squeezed state is mathematically represented by the application of the single-mode squeezing operator $\hat{S}_{k,\phi_k}(\xi_k)=\exp\left[\frac{1}{2}(\xi_k\hat{a}_{k,\phi_k}^\dagger\hat{a}_{k,\phi_k}^\dagger-\xi_k^*\hat{a}_{k,\phi_k}\hat{a}_{k,\phi_k})\right]$ over the corresponding vacuum state. Here, $\xi_k$ is the generally complex squeezing parameter associated to the squeezed state at mode $(k,\phi_k)$. We would like to rewrite this squeezing operator in terms of the mutually orthogonal internal modes obtained in Eq.~\eqref{eq:lowdin_orthoganlization}. Furthermore, we would like to investigate if it can be factorized as a tensor product of squeezing operators corresponding to a \textit{single internal mode}. In this way we would obtain a similar model as the one we used in Sec.~\ref{sec:numerical}. 

        Following the expansion in Eq.~\eqref{eq:lowdin_orthoganlization_inverse}, we can define new creation and annihilation operators $\hat{a}_{k,\psi_l}^\dagger,\hat{a}_{k,\psi_l}$ (where label $k$ indicates the external mode, and label $l$ the internal mode), with $k,l\in\{1,\dots,M\}$, that allow us to write 
        \begin{equation}
            \hat{a}_{k,\phi_k}^{\dagger}=\sum_{l=1}^M\left(\bm{O}^{1/2}\right)_{k,l}\hat{a}_{k,\psi_l}^\dagger\;\;,\;\;\:\hat{a}_{k,\phi_k}=\sum_{l=1}^M\left(\bm{O}^{1/2}\right)_{l,k}\hat{a}_{k,\psi_l}.
            \label{eq:lowdin_creation_annihilation}
        \end{equation}
        The operators $\hat{a}_{k,\psi_l}^\dagger,\hat{a}_{k,\psi_l}$ follow similar commutation relations as operators $\hat{a}_{k,\phi_k}^\dagger$, $\hat{a}_{k,\phi_k}$ do. Namely, $[\hat{a}_{k,\psi_l},\hat{a}_{m,\psi_n}]=[\hat{a}_{k,\psi_l}^\dagger,\hat{a}_{m,\psi_n}^\dagger]=0$, $[\hat{a}_{k,\psi_l},\hat{a}_{m,\psi_n}^\dagger]=\delta_{k,m}\delta_{l,n}$. Replacing Eq.~\eqref{eq:lowdin_creation_annihilation} into the definition of $\hat{S}_{k,\phi_k}(\xi_k)$, we obtain
        \begin{align}
            \hat{S}_{k,\phi_k}(\xi_k)&=\exp\left[\frac{1}{2}\sum_{l,m=1}^M\xi_k\left(\bm{O}^{1/2}\right)_{k,l}\left(\bm{O}^{1/2}\right)_{k,m}\hat{a}_{k,\psi_l}^\dagger\hat{a}_{k,\psi_m}^\dagger-\,\mathrm{H.c.}\right]\nonumber\\
            &=\exp\left[\frac{1}{2}\sum_{l,m=1}^MJ_{l,m}^{(k)}\hat{a}_{k,\psi_l}^\dagger\hat{a}_{k,\psi_m}^\dagger-\,\mathrm{H.c.}\right],
            \label{eq:squeezing_internal_modes}
        \end{align}
        where $\mathrm{H.c.}$ stands for the Hermitian conjugate of the expression between brackets. Notice that matrix $\bm{J}^{(k)}$, whose components are $J_{l,m}^{(k)}$, is symmetric and generally non-diagonal. This later property of $\bm{J}^{(k)}$ keeps us from readily factorizing operator $\hat{S}_{k,\phi_k}(\xi_k)$ as a tensor product of squeezing operators corresponding to individual internal modes. Nevertheless, we can avoid this issue by properly redefining the operators $\hat{a}_{k,\psi_l}^\dagger,\hat{a}_{k,\psi_l}$.

        Since $\bm{J}^{(k)}$ is symmetric, we can use its Takagi-Autonne decomposition~\cite{quesada2022beyond, houde2024matrix} to write 
        \begin{equation}
            \bm{J}^{(k)}=\bm{F}^{(k)}\left[\bigoplus_{n=1}^M\tilde{\xi}_{k,n}\right]\left(\bm{F}^{(k)}\right)^{\mathrm{T}},    
        \end{equation}
        where $\tilde{\xi}_{k,n}\geq0$ for all $n\in\{1,\dots,M\}$, and $\bm{F}^{(k)}$ is unitary. This decomposition, in turn, allows us to see that
        \begin{equation}
            \sum_{l,m=1}^M J_{l,m}^{(k)}\hat{a}_{k,\psi_l}^\dagger\hat{a}_{k,\psi_m}^\dagger=\sum_{n=1}^M\tilde{\xi}_{k,n}\left(\sum_{l=1}^MF_{l,n}^{(k)}\hat{a}_{k,\psi_l}^\dagger\right)\left(\sum_{m=1}^MF_{m,n}^{(k)}\hat{a}_{k,\psi_m}^\dagger\right)=\sum_{n=1}^M\tilde{\xi}_{k,n}\hat{b}_{k,n}^\dagger\hat{b}_{k,n}^\dagger,
            \label{eq:mode_redefinition}
        \end{equation}
        where $F_{m,n}^{(k)}$ are the entries of matrix $\bm{F}^{(k)}$, and we have defined the creation and annihilation operators $\hat{b}_{k,n}^\dagger, \hat{b}_{k,n}$ according to the relations $\hat{b}_{k,n}^\dagger=\sum_{l=1}^MF_{l,n}^{(k)}\hat{a}_{k,\psi_l}^\dagger$, $\hat{b}_{k,n}=\sum_{l=1}^M[F_{l,n}^{(k)}]^*\hat{a}_{k,\psi_l}$. Since, $\bm{F}^{(k)}$ is unitary, the operators $\hat{b}_{k,n}^\dagger, \hat{b}_{k,n}$ satisfy canonical commutation relations analogous to those fulfilled by $\hat{a}_{k,\psi_l}^\dagger,\hat{a}_{k,\psi_l}$: $[\hat{b}_{k,l},\hat{b}_{m,n}]=[\hat{b}_{k,l}^\dagger,\hat{b}_{m,n}^\dagger]=0$, $[\hat{b}_{k,l},\hat{b}_{m,n}^\dagger]=\delta_{k,m}\delta_{l,n}$.

        With relation~\eqref{eq:mode_redefinition} in place, we can recast operator $\hat{S}_{k,\phi_k}(\xi_k)$ into the form
        \begin{align}
            \hat{S}_{k,\phi_k}(\xi_k)=\exp\left[\frac{1}{2}\sum_{n=1}^M\tilde{\xi}_{k,n}\hat{b}_{k,n}^\dagger\hat{b}_{k,n}^\dagger-\,\mathrm{H.c.}\right]=\bigotimes_{n=1}^M\hat{S}_{k,n}(\tilde{\xi}_{k,n}),
            \label{eq:squeezing_internal_modes_recast}
        \end{align}
        with $\hat{S}_{k,n}(\tilde{\xi}_{k,n})=\exp\left[\frac{1}{2}\tilde{\xi}_{k,n}(\hat{b}_{k,n}^\dagger\hat{b}_{k,n}^\dagger-\hat{b}_{k,n}\hat{b}_{k,n})\right]$. We have succeeded in defining an orthonormal basis for the internal degrees of freedom of the system, in which the initial squeezed state can be written as a tensor product of single mode squeezed states associated to individual internal modes, just as in Sec.~\ref{sec:numerical}. Notice, that although we assumed complete knowledge of the internal mode states $\{|\phi_k\rangle\}_{k=1}^M$, in order to define the squeezing parameters $\{\tilde{\xi}_{k,n}\}_{k,n=1}^M$ we need only know the original set of squeezing parameters, $\{\xi_{k}\}_{k=1}^M$, and the overlap matrix $\bm{O}$. 

    \section{\label{app:coarse_grain_fock}Coarse-graining for input Fock states}

        In this section, we will apply the formalism we developed in the main text to the computation of coarse-grained probability distributions for photon number (or Fock) states propagating through lossy optical circuits. In Sec.~\ref{sec:numerical}, we showed that these systems can be studied by assuming that the input Fock states are obtained through the partial measurement of two-mode squeezed states. In what follows, we will consider a more general case, in which the partial measurement of a Gaussian state is not necessary.

        Suppose that we have a photonic circuit with $M$ input/output ports. The action of this circuit is mathematically described by a $M\times M$ transmission matrix $\bm{T}$. Given the presence of losses and other sources of noise, $\bm{T}$ is generally subunitary, i.e., $\bm{T}^\dagger \bm{T}\leq\mathbb{I}$. Let $|\bm{p}\rangle\langle \bm{p}|$, with $\bm{p}=(p_1, \dots, p_M)$, represent the multimode photon-number state of the light entering the circuit. We call $N=\sum_{k=1}^Mp_k$ the total number of input photons. The state of the system after propagation through the circuit can be written as $\hat{\varrho}=\Phi\left[|\bm{p}\rangle\langle \bm{p}|\right]$, where $\Phi[\cdot]$ is the Gaussian loss channel defined by $\bm{T}$. It can be shown that the density matrix elements of $\hat{\varrho}$ in the Fock basis take the form~\cite{yao2024riemannian}
        \begin{equation}
            \langle\bm{m}|\hat{\varrho}|\bm{n}\rangle = \left(\prod_{k=1}^M\frac{1}{p_k!\sqrt{n_k!m_k!}}\right)\mathrm{haf}\left[\left(\bm{A}_{\Phi}\right)_{(\bm{p}\oplus\bm{m})\oplus(\bm{p}\oplus\bm{n})}\right],
            \label{eq:density_matrix_fock_circuit}
        \end{equation}
        where $\mathrm{haf}(\bm{A})\equiv\mathrm{lhaf}(\bm{A},\bm{0})$ is the \textit{Hafnian} of matrix $\bm{A}$. The $4M\times 4M$ symmetric matrix $\bm{A}_\Phi$ is defined as
        \begin{equation}
            \bm{A}_\Phi =
            \begin{pmatrix}
                0 & \bm{T}^{\dagger} & \mathbb{I}-\bm{T}^{\dagger}\bm{T} & 0\\
                \bm{T}^{*} & 0 & 0 & 0\\
                \mathbb{I}-\bm{T}^{\mathrm{T}}\bm{T}^* & 0 & 0 & \bm{T}^{\mathrm{T}}\\
                0 & 0 & \bm{T} & 0
            \end{pmatrix}.
            \label{eq:loss_channel_matrix}
        \end{equation}

        Suppose now that we define a set partition $\Lambda=\{\Lambda_1,\dots,\Lambda_L\}$ of $\{1,\dots, M\}$, and a coarse-grained detection pattern $\bm{b}=(b_1,\dots, b_L)$ such that $|\bm{b}|=\sum_{k=1}^Lb_k \leq N$. The probability of the detection event defined by $\bm{b}$ and $\Lambda$ is computed as
        \begin{equation}
            \Pr(\bm{b}|\bm{p},\bm{T}) = \sum_{\bm{n}\in\mathcal{K}_{\Lambda}^{\bm{b}}} \langle\bm{n}|\hat{\varrho}|\bm{n}\rangle=\sum_{\bm{n}\in\mathcal{K}_{\Lambda}^{\bm{b}}} \left(\prod_{k=1}^M\frac{1}{p_k!n_k!}\right)\mathrm{haf}\left[\left(\bm{A}_{\Phi}\right)_{(\bm{p}\oplus\bm{n})\oplus(\bm{p}\oplus\bm{n})}\right]
            \label{eq:coarse_grained_Fock}
        \end{equation}
        If we further define the set partition $\Lambda'$ of $\{1,\dots,2M\}$ as $\Lambda'=\{\{1\},\dots,\{M\}\}\cup \Lambda$, where $\Lambda$ is now defined over the set $\{M+1,\dots,2M\}$ (i.e. all the elements in each block of $\Lambda$ are shifted by $M$), and the detection pattern $\bm{b}'=\bm{p}\oplus\bm{b}$, we can recast $\Pr(\bm{b}|\bm{p},\bm{T})$ into the form
        \begin{equation}
            \Pr(\bm{b}|\bm{p},\bm{T}) =\sum_{\bm{n}'\in\mathcal{K}_{\Lambda'}^{\bm{b}'}} \left(\prod_{k=1}^M\frac{1}{n_k'!}\right)\mathrm{haf}\left[\left(\bm{A}_{\Phi}\right)_{\bm{n}'\oplus\bm{n}'}\right],
            \label{eq:coarse_grained_Fock_2}
        \end{equation}
        where we wrote $\bm{n}'=\bm{p}\oplus\bm{n}$. We can immediately notice that this expression is related to a blocked loop Hafnian. Consequently, we may write
        \begin{equation}
            \Pr(\bm{b}|\bm{p},\bm{T}) = \left(
            \prod_{j=1}^M\frac{1}{p_j!}\prod_{k=1}^L\frac{1}{b_k!}\right)\lhaf_{\Lambda'}\left(\bm{A}_{\Phi},\bm{0},\bm{b}'\right) = \left(
            \prod_{j=1}^M\frac{1}{p_j!}\prod_{k=1}^L\frac{1}{b_k!}\right) \left(\prod_{l=1}^{M+L} D_{w_l}^{(b_l')} \right) f_{N'}(\bm{A}_{\Phi}, \bm{0}, \bm{w}),
            \label{eq:coarse_grained_Fock_3}
        \end{equation}
        with $N'=N+|\bm{b}|$.

        To make clear the advantage of using Eq.~\eqref{eq:coarse_grained_Fock_3} to compute coarse-grained probabilities, notice that when $\bm{m}=\bm{n}$, Eq.~\eqref{eq:density_matrix_fock_circuit} is equivalent to~\cite{yao2024riemannian}
        \begin{equation}
            \langle\bm{n}|\hat{\varrho}|\bm{n}\rangle = \left(\prod_{k=1}^M\frac{1}{p_k!n_k!}\right)\mathrm{perm}\left[
            \begin{pmatrix}
                \mathbb{I}-\bm{T}^\dagger\bm{T} & \bm{T}^\dagger\\
                \bm{T} & 0
            \end{pmatrix}_{\bm{n}\oplus\bm{p}}
            \right],
            \label{eq:probs_fock_circuit}
        \end{equation}
        where $\mathrm{perm}(\bm{A})$ is the permanent of matrix $\bm{A}$. Thus, $\Pr(\bm{b}|\bm{p},\bm{T})$ can also be expressed as 
        \begin{equation}
            \Pr(\bm{b}|\bm{p},\bm{T}) =\sum_{\bm{n}\in\mathcal{K}_{\Lambda}^{\bm{b}}} \left(\prod_{k=1}^M\frac{1}{p_k!n_k!}\right)\mathrm{perm}\left[
            \begin{pmatrix}
                \mathbb{I}-\bm{T}^\dagger\bm{T} & \bm{T}^\dagger\\
                \bm{T} & 0
            \end{pmatrix}_{\bm{n}\oplus\bm{p}}
            \right].
            \label{eq:coarse_grained_Fock_alt}
        \end{equation}
        Taking into account that calculating each of the permanents in this sum can be done in $\mathcal{O}(N'2^{N'})$ time (since all the matrices inside the permanent have size $N'\times N'$), we can see that obtaining coarse-grained probabilities using Eq.~\eqref{eq:coarse_grained_Fock_alt} has time complexity 
        \[\mathcal{O}\left(N'2^{N'}\prod_{k=1}^L\binom{|\Lambda_k| + b_k -1}{b_k}\right),\]
        where $|\Lambda_k|$ is the number of elements in block $\Lambda_k$. In contrast, Eq.~\eqref{eq:coarse_grained_Fock_3} has time complexity
        \[\mathcal{O}\left(\left(N'M^3 + N'^2\log N'\right)\prod_{j=1}^M(p_j+1)\prod_{k=1}^L(b_k+1)\right),\]
        which has better scaling for an increasing number of input photons and modes.

        In addition to computing coarse-grained probability distributions, we may also study the generation of photonic states conditioned to a coarse-grained measurement of some of the output modes of the optical circuit (just as we did in Sec.~\ref{sec:off_diag_internal}). Suppose that the modes in $\mathcal{H}=\{1,\dots,H\}$ are measured, while the modes in $\mathcal{G}=\{H+1,\dots,M\}$ support the heralded state. We define a partition $\tilde{\Lambda}=\{\tilde{\Lambda}_1,\dots,\tilde{\Lambda}_L\}$ of $\mathcal{H}$, and we condition the heralding of the desired state to the measurement of the coarse-grained detection pattern $\tilde{\bm{b}}=(\tilde{b}_1,\dots,\tilde{b}_L)$. As shown in Sec.~\ref{sec:off_diag_internal}, the density matrix elements of the unnormalized heralded state on the modes in $\mathcal{G}$, $\hat{\varrho}_{\mathcal{G}(\tilde{\Lambda})}$, read
        \begin{align}
        	\langle\bm{v}&|\hat{\varrho}_{\mathcal{G}(\tilde{\Lambda})}|\bm{u} \rangle = \sum_{\tilde{\bm{n}}\in\mathcal{K}_{\tilde{\Lambda}}^{\tilde{\bm{b}}}}\langle \tilde{\bm{n}},{\bm{v}}|\hat{\varrho}|\tilde{\bm{n}},{\bm{u}}\rangle =\left(\prod_{k=1}^M\frac{1}{p_k!}\prod_{j=1}^{M-H}\frac{1}{\sqrt{u_k!v_k!}}\right)\sum_{\tilde{\bm{n}}\in\mathcal{K}_{\tilde{\Lambda}}^{\tilde{\bm{b}}}}\left(\prod_{k=1}^H\frac{1}{\tilde{n}_k!}\right)\mathrm{haf}\left[\left(\bm{A}_{\Phi}\right)_{(\bm{p}\oplus\tilde{\bm{n}}\oplus\tilde{\bm{v}})\oplus(\bm{p}\oplus\tilde{\bm{n}}\oplus\tilde{\bm{u}})}\right],
        	\label{eq:lhaf_heralded_offd_internal_fock}
        \end{align}
        where $\tilde{\bm{n}}=(\tilde{n}_1,\dots,\tilde{n}_H)$, $\bm{v}=(v_1,\dots,v_{M-H})$ and $\bm{u}=(u_1,\dots,u_{M-H})$.

        As explained in Sec.~\ref{sec:off_diag_no_internal}, there exist a matrix $\bm{A}_{\Phi}'$, and a detection pattern $\bm{r}$ for which 
        \[\mathrm{haf}\left(\bm{A}_{\Phi}\right)_{(\bm{p}\oplus\tilde{\bm{n}}\oplus\tilde{\bm{v}})\oplus(\bm{p}\oplus\tilde{\bm{n}}\oplus\tilde{\bm{u}})}=\mathrm{haf}\left(\bm{A}_{\Phi}'\right)_{(\bm{p}\oplus\tilde{\bm{n}}\oplus\tilde{\bm{r}})\oplus(\bm{p}\oplus\tilde{\bm{n}}\oplus\tilde{\bm{r}})}.\]
        Moreover, we can define the set of modes $\{1,\dots,M+H+R\}$, with $R$ the length of vector $\bm{r}$; the set partition $\tilde{\Lambda}'=\{\{1\},\dots,\{M\}\}\cup\tilde{\Lambda}\cup\{\{M+H+1\},\dots,\{M+H+R\}\}$, where $\tilde{\Lambda}$ is now defined over $\{M+1,\dots,M+H\}$; and the coarse-grained detection pattern $\tilde{\bm{b}}'=\bm{p}\oplus\tilde{\bm{b}}\oplus\bm{r}$. Then, we can recast Eq.~\eqref{eq:lhaf_heralded_offd_internal_fock} as 
        \begin{align}
        	\langle\bm{v}&|\hat{\varrho}_{\mathcal{G}(\tilde{\Lambda})}|\bm{u} \rangle  =\left(\prod_{k=1}^Rr_k!\prod_{j=1}^{M-H}\frac{1}{\sqrt{u_k!v_k!}}\right)\sum_{\tilde{\bm{n}}'\in\mathcal{K}_{\tilde{\Lambda}'}^{\tilde{\bm{b}}'}}\left(\prod_{k=1}^{M+H+R}\frac{1}{\tilde{n}'_k!}\right)\mathrm{haf}\left[\left(\bm{A}'_{\Phi}\right)_{\tilde{\bm{n}}'\oplus\tilde{\bm{n}}'}\right],
        	\label{eq:lhaf_heralded_offd_internal_fock_2}
        \end{align}
        where $\tilde{\bm{n}}'=\bm{p}\oplus\tilde{\bm{n}}\oplus\bm{r}$. As can be seen, this expression can be readily related to a blocked loop Hafnian, and so we may write
        \begin{align}
        	\langle\bm{v}&|\hat{\varrho}_{\mathcal{G}(\tilde{\Lambda})}|\bm{u} \rangle  =\left(\prod_{j=1}^M\frac{1}{p_j!}\prod_{k=1}^L\frac{1}{\tilde{b}_k!}\prod_{l=1}^{M-H}\frac{1}{\sqrt{u_l!v_l!}}\right)\left(\prod_{j=1}^{M} D_{w'_j}^{(p_j)} \prod_{k=M+1}^{M+L} D_{w'_k}^{(\tilde{b}_{k-M})}\prod_{l=M+L+1}^{M+L+R} D_{w'_l}^{(r_{l-M-L})} \right)f_{\tilde{N}+|\bm{r}|}(\bm{A}_{\Phi}', \bm{0}, \bm{w}'),
        	\label{eq:lhaf_heralded_offd_internal_fock_3}
        \end{align}
        with $\tilde{N}=N+|\tilde{\bm{b}}|$.

    \section{\label{fully_dist}Photon-number probabilities for fully distinguishable internal modes}

        Let us consider a similar situation as the one we studied in Sec.\ref{sec:photon_dist_internal}. Suppose that we have a Gaussian state with $M$ external modes and $K$ internal modes per external mode. For simplicity, let us assume that this state is non-displaced, so $\bar{\gl}=\bm{0}$, and it will be completely described by its $2MK\times 2MK$ adjacency matrix $\bar{\bm{A}}$. Furthermore, suppose that there is no entanglement between the different internal modes of the system. Then, the matrix $\bm{X}\bar{\bm{A}}$ can be expressed as $\bm{X}\bar{\bm{A}}=\bigoplus_{l=1}^K\bm{B}_l$, where the $\{\bm{B}_l\}$ are $2M\times 2M$, Hermitian matrices. One important example of this situation, on which we will focus from now on, is found when interfering spectrally distinguishable (i.e., in mutually orthogonal spectral states) single-mode Gaussian states into a linear optical circuit. In this case, we have that all the $\{\bm{B}_l\}$ are rank 2. This can be verified, for instance, by noticing that each $\bm{B}_l$ can be interpreted as describing a single-mode Gaussian state propagating through a $M$-mode optical circuit. In this section, we will show that these properties of matrix $\bm{X}\bar{\bm{A}}$ allow us to construct an algorithm for computing photon-number probabilities that has a polynomial speed-up with respect to the results shown in Sec.\ref{sec:photon_dist_internal}.

        Let us first recall that the probability of obtaining the photon-number detection pattern $\bm{n}=(n_1,\dots, n_M)$ when measuring the Gaussian state represented by $\bar{\bm{A}}$ (Eq.~\eqref{eq:prob_internal_modes}) reads
        \begin{align}
            \Pr(\bm{n}|\bar{\bm{A}},\bm{0})=\frac{\Pr\left(\bm{0}| \bar{\bm{A}}, \bm{0} \right)}{\prod_{k=1}^Mn_k!}\left(\prod_{k=1}^M D_{w_k}^{(n_k)} \right) f_N(\bar{\bm{A}}, \bm{0}, \bar{\bm{w}}).
            \label{eq:prob_internal_modes_app}
        \end{align}
        
        Let us also recall the definition of the finite-difference operators, $D_{w_k}^{(n_k)}$ (Eq.~\eqref{eq:non_multilinear_finite_diff}),
        \begin{align}
    		D_{w_k}^{(n_k)} P(w_1,\dots,w_M) = \frac{1}{(u_k - v_k)^{n_k}}\sum_{m=0}^{n_k} \binom{n_k}{m} (-1)^{n_k-m} P(w_1,\dots, w_k = v_k + m(u_k - v_k), \dots,w_M),
            \label{eq:non_multilinear_finite_diff_app}
		\end{align}
        where $P(w_1,\dots,w_M)$ is a polynomial of degree less than or equal to $n_1+\cdots+n_M=N$. Let $u_k=1$ and $v_k=0$ for all $k\in\{1,\dots, M\}$, then we may write
        \begin{align}
    		\left(\prod_{k=1}^M D_{w_k}^{(n_k)} \right) P(w_1,\dots,w_M) = \sum_{m_1=0}^{n_1}\cdots\sum_{m_k=0}^{n_M} \binom{n_1}{m_1} \cdots \binom{n_M}{m_M}(-1)^{N-\sum_{k=1}^M m_k} P(m_1,\dots, m_M).
            \label{eq:finite_diff_inclusion_exclusion}
		\end{align}
        Define the \textit{multiset} $\mathcal{M}=\{1,\dots,1,\cdots,M,\dots, M\}$, where each value of $k\in\{1,\dots, M\}$ appears a total of $n_k$ times (so $|\mathcal{M}|=N$). Denote $\Omega[\mathcal{M}]$ the power set (i.e. the set of all \textit{submultisets}) of $\mathcal{M}$. Then, we can see that 
        \begin{equation}
            \sum_{m_1=0}^{n_1}\cdots\sum_{m_k=0}^{n_M} \binom{n_1}{m_1} \cdots \binom{n_M}{m_M}(-1)^{N-\sum_{k=1}^M m_k} P(m_1,\dots, m_M)\equiv \sum_{\mathcal{V}\in\Omega[\mathcal{M}]}(-1)^{N-|\mathcal{V}|} \binom{\bm{n}}{\bm{m}_\mathcal{V}} P(\bm{m}_{\mathcal{V}}),
            \label{eq:multiset_sum}
        \end{equation}
        where $\bm{m}_\mathcal{V}$ is a vector whose $k$th component is the number of times $k$ appears in $\mathcal{V}$ (i.e., the multiplicity of $k$ in $\mathcal{V}$), and we denote $\binom{\bm{n}}{\bm{m}_\mathcal{V}}=\binom{n_1}{m_1} \cdots \binom{n_M}{m_M}$. With this notation at hand, we can recast Eq.~\eqref{eq:prob_internal_modes_app} as
        \begin{align}
            \Pr(\bm{n}|\bar{\bm{A}},\bm{0})=\frac{\Pr\left(\bm{0}| \bar{\bm{A}}, \bm{0} \right)}{\prod_{k=1}^Mn_k!}\sum_{\mathcal{V}\in\Omega[\mathcal{M}]}(-1)^{N-|\mathcal{V}|} \binom{\bm{n}}{\bm{m}_\mathcal{V}}f_N(\bar{\bm{A}}_\mathcal{V},\bm{0}),
            \label{eq:prob_internal_modes_app_2}
        \end{align}
        where $\bm{X}\bar{\bm{A}}_\mathcal{V} = \bigoplus_{l=1}^K \bm{D}(\bm{m}_{\mathcal{V}})\bm{B}_l$, and $\bm{D}(\bm{m}_{\mathcal{V}})=\mathrm{diag}(\bm{m}_{\mathcal{V}})\oplus\mathrm{diag}(\bm{m}_{\mathcal{V}})$.
        
        The calculation of $f_N(\bar{\bm{A}}_\mathcal{V}, \bm{0})$ requires us to compute power traces of the form (see Eq.~\eqref{eq:f_term})
        \begin{equation}
            \tr\left(\left[\bm{X}\bar{\bm{A}}_\mathcal{V}\right]^j\right) = \sum_{l=1}^K\tr\left(\left[\bm{D}(\bm{m}_{\mathcal{V}})\bm{B}_l\right]^j\right) = \sum_{l=1}^K\tr\left(\left[\sqrt{\bm{D}(\bm{m}_{\mathcal{V}})}\bm{B}_l\sqrt{\bm{D}(\bm{m}_{\mathcal{V}})}\right]^j\right).
            \label{eq:power_trace_dist}
        \end{equation}
        These quantities can readily be obtained if we have access to the eigenvalues of $\bm{X}\bar{\bm{A}}_\mathcal{V}$, or equivalently, the eigenvalues of $\sqrt{\bm{D}(\bm{m}_{\mathcal{V}})}\bm{B}_l\sqrt{\bm{D}(\bm{m}_{\mathcal{V}})}$ for all $l$.
        
        Since $\bm{D}(\bm{m}_\mathcal{V})$ and all the $\{\bm{B}_l\}$ are Hermitian, $\sqrt{\bm{D}(\bm{m}_{\mathcal{V}})}\bm{B}_l\sqrt{\bm{D}(\bm{m}_{\mathcal{V}})}$ will also be Hermitian for all $l$. Moreover, since $2\leq \mathrm{rank}[\bm{D}(\bm{m}_{\mathcal{V}})]\leq 2M$ (ignoring the case in which $\bm{n}=\bm{0}$),  $\mathrm{rank}[\sqrt{\bm{D}(\bm{m}_{\mathcal{V}})}\bm{B}_l\sqrt{\bm{D}(\bm{m}_{\mathcal{V}})}] \leq 2$ for $l\in\{1,\dots, K\}$. From these properties, we can conclude that each $\sqrt{\bm{D}(\bm{m}_{\mathcal{V}})}\bm{B}_l\sqrt{\bm{D}(\bm{m}_{\mathcal{V}})}$ has at most two singular values $s_1(l,\mathcal{V})$, $s_2(l,\mathcal{V})$, and at most two real eigenvalues satisfying $s_1(l,\mathcal{V})=|\lambda_1(l,\mathcal{V})|$, $s_2(l,\mathcal{V})=|\lambda_2(l,\mathcal{V})|$. Notice that $\tr[\sqrt{\bm{D}(\bm{m}_{\mathcal{V}})}\bm{B}_l\sqrt{\bm{D}(\bm{m}_{\mathcal{V}})}]=\lambda_1(l,\mathcal{V})+\lambda_2(l,\mathcal{V})$. Moreover, we have that $\|\sqrt{\bm{D}(\bm{m}_{\mathcal{V}})}\bm{B}_l\sqrt{\bm{D}(\bm{m}_{\mathcal{V}})}\|_\mathrm{F}=\sqrt{s_1^2(l,\mathcal{V})+s_2^2(l,\mathcal{V})}= \sqrt{\lambda_1^2(l,\mathcal{V})+\lambda_2^2(l,\mathcal{V})}$, where $\|\bm{A}\|_\mathrm{F}$ is the Frobenius norm of $\bm{A}$. Let us call $T(l,\mathcal{V})=\tr[\sqrt{\bm{D}(\bm{m}_{\mathcal{V}})}\bm{B}_l\sqrt{\bm{D}(\bm{m}_{\mathcal{V}})}]$ and $F(l,\mathcal{V})=\|\sqrt{\bm{D}(\bm{m}_{\mathcal{V}})}\bm{B}_l\sqrt{\bm{D}(\bm{m}_{\mathcal{V}})}\|_\mathrm{F}$. The previous relations allow us to see that
        \begin{equation}
            \lambda_{1,2}(l,\mathcal{V})=\frac{1}{2}T(l,\mathcal{V})\pm \frac{1}{2}\sqrt{2F^2(l,\mathcal{V})-T^2(L,\mathcal{V})}.
            \label{eq:trace_frob_norm_eigen}
        \end{equation}
        
        Taking into account that $\tr(\bm{A})=\sum_{i}A_{i,i}$ and $\|\bm{A}\|_\mathrm{F}^2=\sum_{i,j}|A_{i,j}|^2$, Eq.~\eqref{eq:trace_frob_norm_eigen} allows us to obtain the eigenvalues of $\bm{X}\bar{\bm{A}}_\mathcal{V}$ in $\mathcal{O}(KM^2)$ time. Indeed, we need only sum over two indices, and obtain $K$ pairs of eigenvalues (one for each spectral mode). This reduces the time computation of $f_N(\bar{\bm{A}}_\mathcal{V}, \bm{0})$ from $\mathcal{O}(NK^3M^3 + N^2\log N)$ to $\mathcal{O}(NKM^2 + N^2\log N)$. Thus, the time of computation of $\Pr(\bm{n}|\bar{\bm{A}},\bm{0})$ reduces to $\mathcal{O}\left((NKM^2 + N^2\log N)\prod_{k=1}^M(n_k+1)\right)$.

        A further speed-up can be obtained if we assume that we generate all of the submultisets we need to sum over in Eq.~\eqref{eq:prob_internal_modes_app_2} in a generalized Gray code order~\cite{nijenhuis2014combinatorial, hage2003enumerating, guan1998generalized}, i.e., an order such that two subsequent submultisets $\mathcal{V}, \mathcal{V'}\subseteq \mathcal{M}$ differ by only one element, say $k^\star$. Notice that every $\mathcal{V}\subseteq\mathcal{M}$ is uniquely defined by its vector of multiplicities $\bm{m}_\mathcal{V}$. Thus, if $\mathcal{V}$ and $\mathcal{V}'$ differ only by the element $k^\star$, $\bm{m}_\mathcal{V}$ and $\bm{m}_\mathcal{V'}$ will differ only in the $k^\star$th component, which satisfies $(\bm{m}_\mathcal{V})_{k^\star}=(\bm{m}_\mathcal{V'})_{k^\star}\pm 1$. On this account, we can see that
        \begin{align}
            T(l,\mathcal{V})&=\sum_{i=1}^M(\bm{m}_\mathcal{V})_{i}\left[(\bm{B}_l)_{i,i} + (\bm{B}_l)_{i+M,i+M}\right],\nonumber\\
            &=T(l,\mathcal{V}') \pm \left[(\bm{B}_l)_{k^\star,k^\star} + (\bm{B}_l)_{k^\star+M,k^\star+M}\right],
        \end{align}
        and
        \begin{align}
            F^2(l,\mathcal{V})&=\sum_{i,j =1}^M(\bm{m}_\mathcal{V})_{i}(\bm{m}_\mathcal{V})_{j}\left[|(\bm{B}_l)_{i,j}|^2 + |(\bm{B}_l)_{i+M,j+M}|^2 + |(\bm{B}_l)_{i+M,j}|^2 + |(\bm{B}_l)_{i,j+M}|^2\right]\nonumber\\
            &=F^2(l,\mathcal{V}') + \left[|(\bm{B}_l)_{k^\star,k^\star}|^2 + |(\bm{B}_l)_{k^\star+M, k^\star+M}|^2 + |(\bm{B}_l)_{k^\star+M,k^\star}|^2 + |(\bm{B}_l)_{k^\star,k^\star+M}|^2\right]\nonumber\\
            &\pm2\sum_{i=1}^M(\bm{m}_\mathcal{V'})_{i}\left[|(\bm{B}_l)_{i,k^\star}|^2 + |(\bm{B}_l)_{i+M, k^\star+M}|^2 + |(\bm{B}_l)_{i+M,k^\star}|^2 + |(\bm{B}_l)_{i,k^\star+M}|^2\right].
        \end{align}
        Thus, by storing the values of $T(l,\mathcal{V}')$ and $F^2(l,\mathcal{V}')$, we can update the traces in constant time, and the Frobenius norm in $\mathcal{O}(KM)$ time. This reduces the time of computation of $f_N(\bar{\bm{A}}_\mathcal{V}, \bm{0})$ from $\mathcal{O}(NKM^2 + N^2\log N)$ to $\mathcal{O}(NKM + N^2\log N)$. Thus, the time of computation of $\Pr(\bm{n}|\bar{\bm{A}},\bm{0})$ reduces to $\mathcal{O}\left((NKM + N^2\log N)\prod_{k=1}^M(n_k+1)\right)$.

\end{document}